\newcommand{\E}[2][]{{\mathbb{E}_{#1}}{\left[#2\right]}}       
\renewcommand{\P}[2][]{{\mathbb{P}_{#1}}{\left[#2\right]}}     
\newcommand{\V}[1]{{{\mathbb{V}}\left(#1\right)}}              
\newcommand{\card}[1]{\ensuremath{\left|{#1}\right|}}           
\newcommand{\abs}[1]{\ensuremath{\left|#1\right|}}              
\newcommand{\eqdef}{\ensuremath{\triangleq}}                    
\newcommand{\intseq}[2]{\ensuremath{\left\llbracket{#1},{#2}\right\rrbracket}}  
\DeclareMathAlphabet{\pala}{OT1}{pag}{m}{sl}
\DeclareMathAlphabet{\eurm}{U}{eur}{m}{n}
\DeclareMathAlphabet{\mathbsf}{OT1}{cmss}{bx}{n}
\DeclareMathAlphabet{\mathssf}{OT1}{cmss}{m}{sl}
\DeclareMathAlphabet{\mathcsf}{OT1}{cmss}{sbc}{n}
\DeclareSymbolFont{bsfletters}{OT1}{cmss}{bx}{n}  
\DeclareSymbolFont{ssfletters}{OT1}{cmss}{m}{n}
\DeclareMathSymbol{\bsfGamma}{0}{bsfletters}{'000}
\DeclareMathSymbol{\ssfGamma}{0}{ssfletters}{'000}
\DeclareMathSymbol{\bsfDelta}{0}{bsfletters}{'001}
\DeclareMathSymbol{\ssfDelta}{0}{ssfletters}{'001}
\DeclareMathSymbol{\bsfTheta}{0}{bsfletters}{'002}
\DeclareMathSymbol{\ssfTheta}{0}{ssfletters}{'002}
\DeclareMathSymbol{\bsfLambda}{0}{bsfletters}{'003}
\DeclareMathSymbol{\ssfLambda}{0}{ssfletters}{'003}
\DeclareMathSymbol{\bsfXi}{0}{bsfletters}{'004}
\DeclareMathSymbol{\ssfXi}{0}{ssfletters}{'004}
\DeclareMathSymbol{\bsfPi}{0}{bsfletters}{'005}
\DeclareMathSymbol{\ssfPi}{0}{ssfletters}{'005}
\DeclareMathSymbol{\bsfSigma}{0}{bsfletters}{'006}
\DeclareMathSymbol{\ssfSigma}{0}{ssfletters}{'006}
\DeclareMathSymbol{\bsfUpsilon}{0}{bsfletters}{'007}
\DeclareMathSymbol{\ssfUpsilon}{0}{ssfletters}{'007}
\DeclareMathSymbol{\bsfPhi}{0}{bsfletters}{'010}
\DeclareMathSymbol{\ssfPhi}{0}{ssfletters}{'010}
\DeclareMathSymbol{\bsfPsi}{0}{bsfletters}{'011}
\DeclareMathSymbol{\ssfPsi}{0}{ssfletters}{'011}
\DeclareMathSymbol{\bsfOmega}{0}{bsfletters}{'012}
\DeclareMathSymbol{\ssfOmega}{0}{ssfletters}{'012}
\newcommand{\calC}{{\mathcal{C}}}
\newcommand{\calX}{{\mathcal{X}}}
\newcommand{\calY}{{\mathcal{Y}}}
\newcommand{\calZ}{{\mathcal{Z}}}
\begin{document}

\newtheorem{definition}{Definition}
\newtheorem{theorem}{Theorem}
\newtheorem{proposition}{Proposition}
\newtheorem{lemma}{Lemma}
\newtheorem{remark}{Remark}
\newtheorem*{example}{Example}

\renewcommand{\qedsymbol}{ \begin{tiny}$\blacksquare$ \end{tiny} }

\acrodef{DMS}[DMS]{Discrete Memoryless Source}
\acrodef{DMC}[DMC]{Discrete Memoryless Channel}

\renewcommand{\leq}{\leqslant}
\renewcommand{\geq}{\geqslant}

\allowdisplaybreaks[4]
\title{{C}oding Schemes for {A}chieving {S}trong Secrecy at {N}egligible {C}ost}
\author{R\'{e}mi A. Chou, Badri N. Vellambi,~\IEEEmembership{Senior Member,~IEEE},\\ Matthieu~R.~Bloch,~\IEEEmembership{Senior Member,~IEEE}, and Jörg Kliewer,~\IEEEmembership{Senior Member,~IEEE}\thanks{R. A. Chou is with the Department of Electrical Engineering, The Pennsylvania State University, University Park, PA. M.~R.~Bloch is with the School~of~Electrical~and~Computer~Engineering,~Georgia~Institute~of~Technology, Atlanta,~GA and with GT-CNRS UMI 2958, Metz, France. J. Kliewer and B. N. Vellambi are with the Department of Electrical and Computer Engineering, New Jersey Institute of Technology, Newark, NJ.} \thanks{E-mails : remi.chou@psu.edu; badri.n.vellambi@ieee.org; matthieu.bloch@ece.gatech.edu; jkliewer@njit.edu. Parts of the results were presented at the 2012 IEEE International Symposium on Information Theory~\cite{Bloch12}, the 2013 IEEE International Symposium on Information Theory~\cite{Chou13} and the 2015 IEEE International Symposium on Information Theory~\cite{Vellambi15}. This work was supported in part by NSF under grant CCF-1320298, CCF-1527074, CCF-1439465, CCF-1440014, and by ANR with grant 13-BS03-0008.}}

\maketitle

\begin{abstract}
We study the problem of achieving strong secrecy over wiretap channels at \textit{negligible cost}, in the sense of maintaining the overall communication rate of the same channel without secrecy constraints. Specifically, we propose and analyze two source-channel coding architectures, in which secrecy is achieved by multiplexing public and confidential messages. In both cases, our main contribution is to show that secrecy can be achieved without compromising communication rate and by requiring only randomness of asymptotically vanishing rate. Our first source-channel coding architecture relies on a modified wiretap channel code, in which randomization is performed using the output of a source code. In contrast, our second architecture relies on a standard wiretap code combined with a modified source code termed \textit{uniform compression code}, in which a small shared secret seed is used to enhance the uniformity of the source code output. We carry out a detailed analysis of uniform compression codes and characterize the optimal size of the shared seed. 
\end{abstract}

\section{Introduction}
While cryptography is traditionally implemented at the application layer, physical-layer security aims at ensuring secrecy by taking advantage of the inherent noise at the physical-layer of communication channels. The benefits of physical-layer security are substantiated by numerous theoretical results~\cite{Liang09,Bloch11}, in particular those related to the wiretap channel model~\cite{Wyner75}, which suggest that one can achieve information-theoretic secrecy without sharing secret keys. Although early works on physical-layer security were mostly restricted to eavesdropping attacks under optimistic assumptions regarding channel knowledge, and only established the existence of codes for physical-layer security by means of non-constructive random coding arguments, there has been much progress recently. In particular, attacker models have been extended to situations with limited channel knowledge, e.g., with compound channels~\cite{Liang09c,Bjelakovic2013,Bloch2011e,nafea2015wiretap,goldfeld2015wiretap}, state-dependent channels~\cite{Chia2012,Chen2008}, or arbitrarily varying channels~\cite{raey,He10,MolavianJazi2009}; several explicit low-complexity codes with strong information-theoretic secrecy guarantees have also been designed, for instance, based on low-density parity check codes~\cite{Subramanian2011}, polar codes~\cite{Mahdavifar11,Sasoglu13,Chou14c} or invertible extractors~\cite{Hayashi11,Bellare2012}. 

Despite these recent advances, physical-layer security schemes are yet to be integrated into communication systems. One factor that may have hindered their adoption is the limited attention paid to the \emph{cost} of physical-layer security, assessed in terms of the decrease in achievable communication rates, and the additional resources required for its implementation. In fact, if one hopes to deploy physical-layer systems, it is reasonable to ask that their operation: i) be transparent or at least compatible with upper layer protocols, ii)  not affect communication rates, and iii)  not require additional resources. However, most studies of physical-layer security focus on the characterization of secrecy capacity, which is always less than the capacity, thereby suggesting that secrecy can only be achieved at the cost of reducing communication rates; furthermore, most existing models and coding schemes implicitly assume the presence of an unlimited source of uniform random numbers to realize a stochastic encoder.

The objective of this paper is to revisit these assumptions and to show that the cost of secrecy can be made negligible, i.e., secrecy neither incurs a reduction in overall communication rate nor requires extra randomness resources. The crux of our approach is to analyze the wiretap channel model illustrated in Fig.~\ref{figmodel0}, in which the encoder only uses a random seed of vanishing rate. More specifically, the objective is to \emph{multiplex} a confidential source with a public source, while maximizing the sum-rate of secret and public communication. The idea of multiplexing messages to achieve secrecy already implicitly appears in the original work of Csisz\'ar and K\"orner~\cite{Csiszar78}, and is explicitly formalized in~\cite{Kobayashi05,Kobayashi2013,Xu2008}; however, our approach differs in that: (i) we relax the common assumption that messages are exactly uniformly distributed, which is unrealistic even if messages are compressed with optimal source codes~\cite{Han05,Hayashi08}; and (ii) we consider a strong notion of secrecy. 
\begin{figure} 
\centering
    \scalebox{0.61}{\input{ 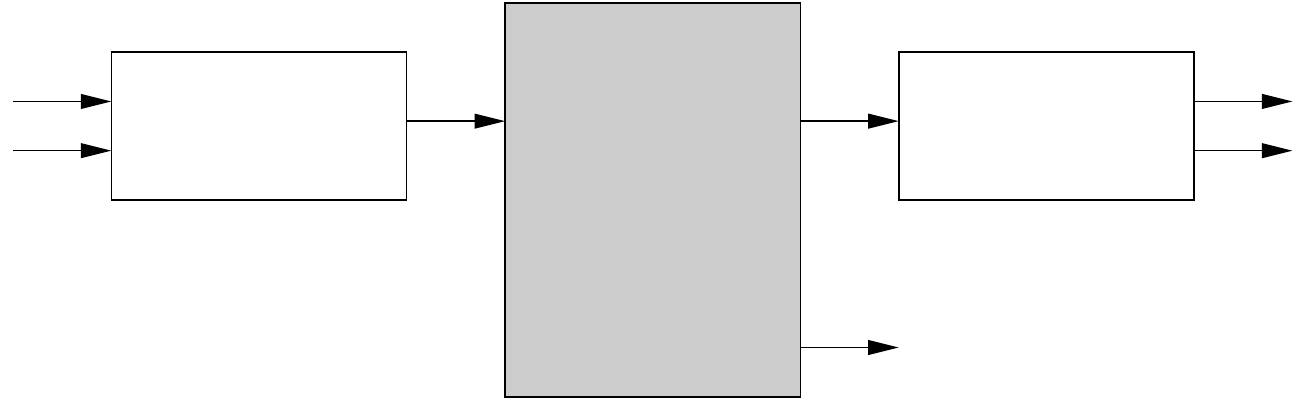_t}}  
  \caption{Multiplexing of confidential and public sources. The confidential source, $V_c^n$, must be reconstructible by the receiver and must be kept secret from the eavesdropper. The public source, $V_p^n$, should be reconstructible by the receiver, and information may be leaked to the eavesdropper.}
  \label{figmodel0}
  \vspace*{-0em}
\end{figure}

The main contributions of this paper are two source-channel coding architectures that  achieve informa\-tion-theoretic secrecy over this channel model. The first one, illustrated in Fig.~\ref{fig:arch1}, is based on wiretap codes and requires a random seed of negligible rate to compress the public source. The second one, illustrated in Fig.~\ref{fig:arch2}, combines a wiretap code designed to operate with uniform randomization with a modified source encoder, which compresses data while simultaneously ensuring near-uniform outputs. This second architecture is slightly more restrictive than the first simply because it  requires the encoder and the decoder to share in advance a small secret seed. For both architectures the presence of a random seed at the encoder is meant to obtain a nearly uniform source from the public source, and is thus unnecessary if the public source is uniform. Nevertheless, regardless of the architecture, a secret key for authentication is required~\cite{Gemmell94,Gilbert74}. While both architectures achieve the same optimal performance, the former modifies the physical layer of the protocol stack whereas the latter modifies the application layer, which makes it much easier to implement protocol changes. We also highlight that the concept of uniform compression introduced and studied in Section \ref{Sec_Sol2} is of independent interest, as it can be used in other security problems. For instance, in secure network coding~\cite{cai2011secure,feldman2004capacity,silva2011universal,cui2013secure}, security is typically obtained by injecting uniformly distributed ``packets" into the network, which the destination nodes are able to decode along with the messages. Similar to the compression of the public source with uniform compression codes in Section \ref{Sec_Sol2}, these uniformly distributed ``packets" in secure network coding could be replaced by uniformly compressed public messages. 
\begin{figure}
  \centering
  \subfigure[Architecture based on a modified wiretap code.]{\label{fig:arch1}\scalebox{0.435}{\input{ 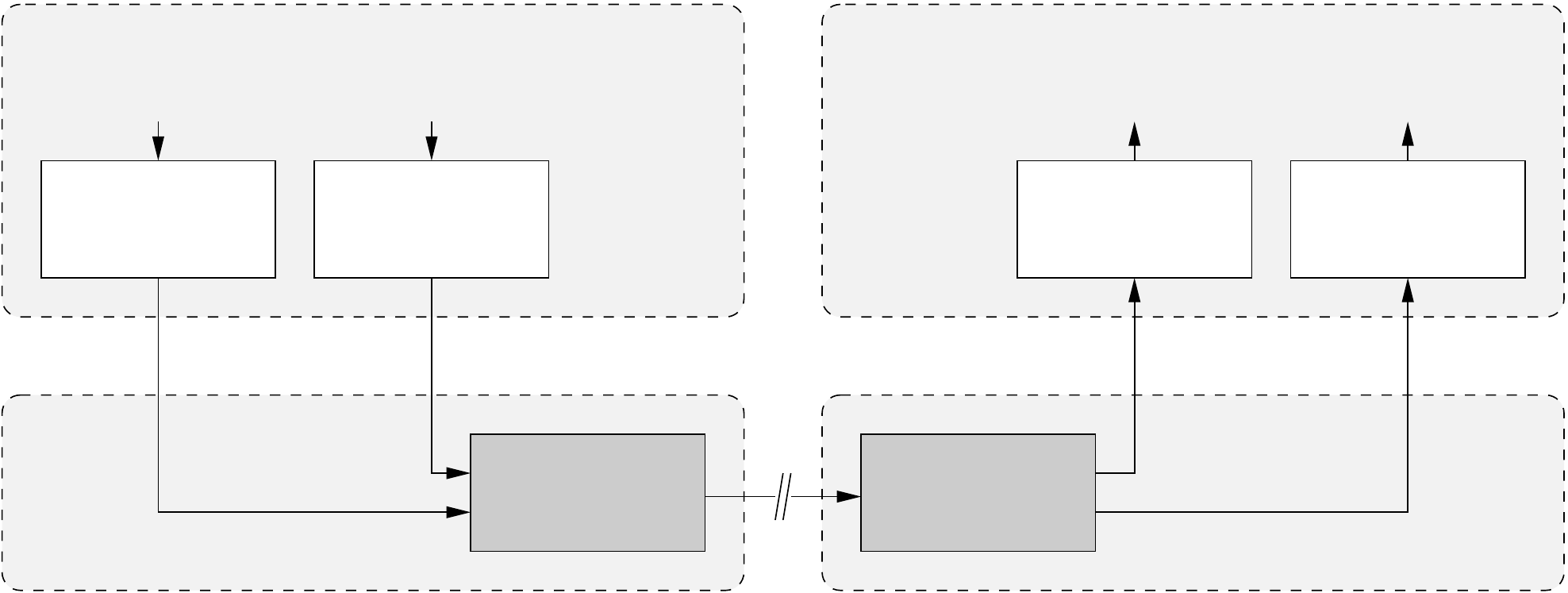_t}}}
  \subfigure[Architecture based on a modified source encoder.]{\label{fig:arch2}\scalebox{0.435}{\input{ 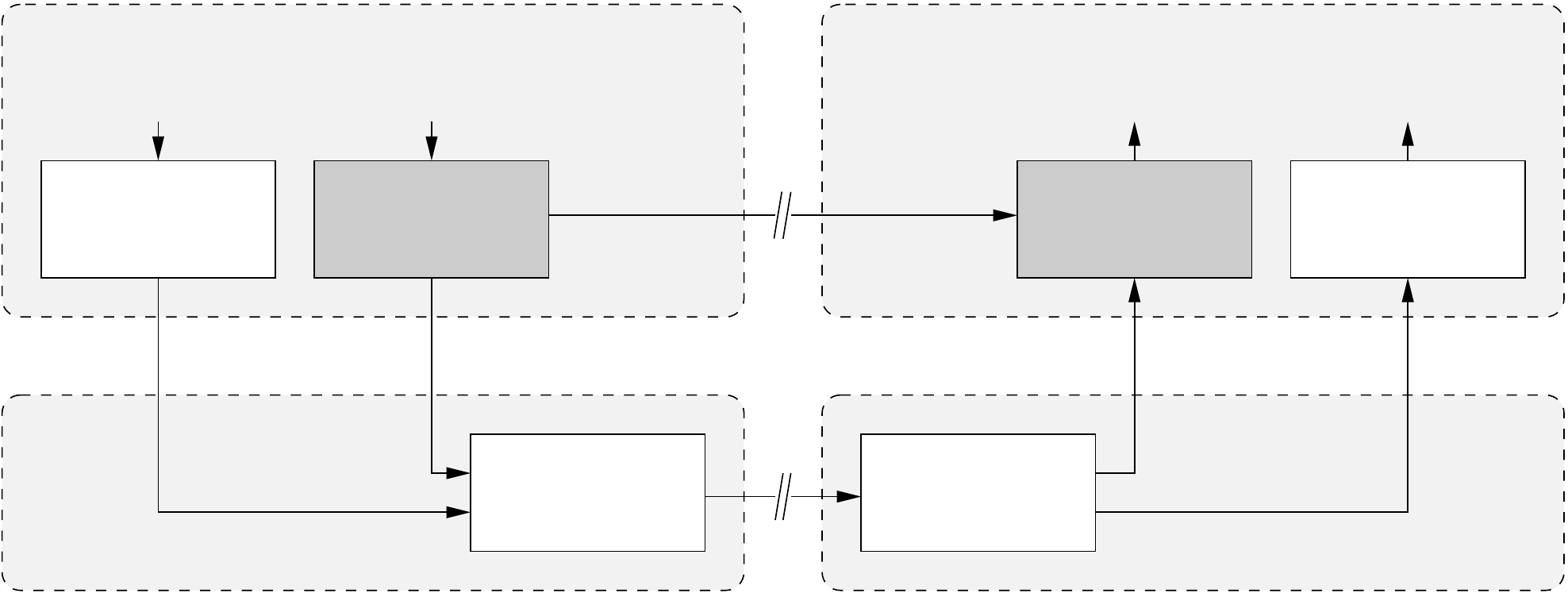_t}}}
  \caption{Proposed architectures to multiplex confidential source sequences $V_c^n$ and public source sequences $V_p^n$. $U_d$ is a uniformly distributed seed, whose length $d$ is sub-linear in the code length $n$.}
  \label{fig:architectures}
\end{figure}

Our model initially presented in \cite{Bloch12} is closely related to the concurrent study \cite{Watanabe12a} and the subsequent study \cite{Hayashi2012}, with journal versions \cite{Watanabe12}, \cite{Hayashi12}. However, our model is not subsumed by any of the models considered in \cite{Watanabe12a,Watanabe12} or in~\cite{Hayashi2012,Hayashi12}. 
The main difference with \cite{Hayashi2012,Hayashi12} is that we only allow a vanishing rate of randomness to be used at the encoder to account for all the resources required to achieve strong secrecy. This assumption results in an \emph{additional constraint on the rate of the public source}, which is not accounted for by the analysis of \cite{Hayashi2012,Hayashi12}. We provide additional details on how our achievability schemes differ from \cite{Hayashi2012,Hayashi12} in Remark 1. 
Our model also differs from \cite{Watanabe12a,Watanabe12}, as we consider \emph{non-uniform sources} instead of uniform messages, so the analysis in \cite{Watanabe12a,Watanabe12} does not apply. We further detail in Remark 2 how our achievability schemes differ from those in \cite{Watanabe12a,Watanabe12}. 
Because of differences in the models considered, the two achievability arguments we present are conceptually different from those in \cite{Watanabe12a,Watanabe12,Hayashi2012,Hayashi12}, and shed a different light on how to implement multiplexing.
%
\begin{remark}
In \cite{Hayashi2012,Hayashi12}, the authors analyze the transmission over a wiretap channel of a common message $S_0$ and multiple confidential messages $S_1,\ldots, S_T$ that may not be jointly independent. Moreover, the encoder is allowed to encode these $T+1$ messages using a {\textbf{randomized}} encoder. 
In our approach, we have two independent sources, which when compressed losslessly but separately, yield two separate non-uniform messages. One of these sources is confidential, while the other is public and can possibly be leaked to the eavesdropper. However, in the two architectures considered in our work, the encoding is only allowed to use a random seed whose length grows sub-linearly in the code length $n$. This introduces a new constraint on the minimum rate of the public source that is {\textbf{absent}} in  \cite{Hayashi2012,Hayashi12}.
Furthermore, the randomized encoding in \cite{Hayashi2012,Hayashi12} uses a commutative group structure, while our two achievability schemes use either (i) typicality-based compression arguments to show that the R\'{e}nyi entropy of order~2 of the compressed public source approaches its entropy (see Section \ref{Sec_Sol1}); or (ii) lossless compression codes with near uniform encoder output that require a random seed whose length grows as $O(\sqrt{n})$, where $n$ is the code length (see Section \ref{Sec_Sol2}).
\end{remark}
\begin{remark}
In \cite{Watanabe12a,Watanabe12}, the authors study the broadcast channel
with confidential messages and precisely analyze the trade-offs
among the rates of uniform secret messages, uniform public messages, and  uniform local randomness. In contrast, we study a \textbf{source setting} in which  \textbf{non-uniform}  confidential and public sources are transmitted over a wiretap channel. We present two distinct achievability arguments in Section~\ref{Sec_Sol1} and Section \ref{Sec_Sol2} for the proposed generalization that do not naturally follow from the proof arguments in  \cite{Watanabe12a,Watanabe12}. Despite similarities with the converse for our model, the converse in~\cite{Watanabe12a,Watanabe12} does not directly apply to the setting considered in Section \ref{Sec_Sol2} because of the presence of a shared seed. Therefore, for completeness a converse for our model is provided in Appendix \ref{App_converse}.
%
%
\end{remark}

 
The remainder of the paper is organized as follows. In Section~\ref{Sec_PS}, we formally describe the communication model under consideration. In Sections~\ref{Sec_Sol1} and \ref{Sec_Sol2}, we prove that the two architectures shown in Fig.~\ref{fig:architectures} achieve near-optimal performance, i.e., offer the same rate trade-offs as the communication problem without security constraints. More specifically, we show in Section~\ref{Sec_Sol1} the existence of wiretap codes that ensure secrecy with non-uniform randomization, while in Section \ref{Sec_Sol2}, we show how to render the output of a source code nearly uniform. Section~\ref{sec:conclusion} concludes the paper with some perspectives for future work.

\section{Preliminaries and Problem Statement} \label{Sec_PS}

\subsection{Notation}
\label{sec:notation}

Random variables, e.g., $X$, and their realizations, e.g., $x$, are denoted by uppercase and lowercase serif font, respectively, while alphabets, e.g., $\calX$, are denoted by calligraphic font. Unless otherwise specified, random variables have finite alphabets, and the generic probability mass function of $X$ is denoted by $p_X$. Basic information-theoretic quantities, e.g., $H(X)$, $I(X;Y)$ are defined as in~\cite{kramerbook}. For two random variables $X$ and $X'$ over the alphabet $\calX$, the variational distance between $X$ and $X'$ is
$  \V{p_X,p_{X'}}\eqdef \sum_{x\in\calX}\abs{p_X{(x)}-p_{X'}(x)}.
$ For any $\epsilon>0$, $\delta(\epsilon)$ denotes a positive function of $\epsilon$ such that $\lim_{\epsilon\downarrow 0}\delta(\epsilon)=0$. We also define $\llbracket a,b \rrbracket \triangleq [ \lfloor a \rfloor , \lceil b \rceil] \cap \mathbb{N}$.

\subsection{Wiretap channel model}
\label{sec:channel-model}
Let $\mathcal{X}$, $\mathcal{Y}$ and $\mathcal{Z}$ be finite alphabets. As illustrated in Fig.~\ref{figmodel0}, we consider a \ac{DMC} $\left(\mathcal{X},p_{YZ|X},\mathcal{Y}\times\mathcal{Z}\right)$. The channel $\left(\mathcal{X},p_{Y|X},\mathcal{Y}\right)$ is the main channel while the channel $\left(\mathcal{X},p_{Z|X},\mathcal{Z}\right)$ is the eavesdropper's channel. We assume that the transmitter Alice wishes to transmit the realizations of two independent \acp{DMS} $\left(\smash{V_c,p_{V_c}}\right)$ and $\left(\smash{V_p,p_{V_p}}\right)$. Both sources are to be reconstructed without errors by the receiver Bob observing $Y^n$, while the source $\left(\smash{V_c,p_{V_c}}\right)$ should be kept secret from the eavesdropper Eve observing $Z^n$. Hence, we refer to $\left(\smash{V_c,p_{V_c}}\right)$ as the \emph{confidential source} and to $\left(\smash{V_p,p_{V_p}}\right)$ as the \emph{public source}.

\begin{definition}
  A code for $\mathcal{C}_n$ the wiretap channel consists of the following.
  \begin{itemize}
  \item A {deterministic} encoding function $f_n:\mathcal{V}_c^n\times\mathcal{V}_p^n \times \intseq{1}{2^{d_n}} \rightarrow\mathcal{X}^n$, which maps $n$ symbols of the confidential source and $n$ symbols of the public source to a codeword of length $n$ with the help of a uniformly distributed seed of length $d_n$ bits;
  \item A decoding function $g_n:\mathcal{Y}^n\rightarrow \mathcal{V}_c^n\times\mathcal{V}_p^n$, which maps a sequence of $n$ channel output observations to $n$ symbols of the confidential source and $n$ symbols of the public source.
  \end{itemize}
\end{definition}
The performance of $\mathcal{C}_n$ is measured in terms of the average probability of error
  \begin{align*}
    \mathbf{P}_e ( \mathcal{C}_n)\triangleq \mathbb{P} \left[ (V_c^n,V_p^n) \neq g_n(Y^n)\right],
  \end{align*}
and in terms of the secrecy metric
  \begin{align*}
    \mathbf{S}(\mathcal{C}_n)\triangleq \max_{v_c^n\in\mathcal{V}_c^n} \V{p_{Z^n|V_c^n=v_c^n},p_{Z^n}}.
  \end{align*}
Note that since we do not know the exact output distribution of the source encoders, we impose a security constraint akin to semantic security~\cite{Bellare2012}.
We also require the length of the uniformly distributed seed to be sub-linear in $n$, i.e.,
$$
\lim_{n \to \infty} \frac{d_n}{n }=0.
$$

Note that in our second architecture presented in Section \ref{Sec_Sol2} and depicted in Figure \ref{fig:arch2}, we allow the seed to be shared between the encoder and the decoder, in which case, the seed is also an argument to the decoding function $g_n$.

\subsection{Source-channel coding theorem}
\label{sec:source-chann-coding}

\begin{theorem}
\label{th:joint-source-channel}
  Consider a confidential \ac{DMS} $\left(\smash{V_c,p_{V_c}}\right)$ and a public \ac{DMS} $\left(\smash{V_p,p_{V_p}}\right)$ to be transmitted over a wiretap channel $\left(\mathcal{X},p_{YZ|X},\mathcal{Y}\times\mathcal{Z}\right)$. For any random variable $Q$ over a finite alphabet $\mathcal{{Q}}$ such that ${Q}-X-YZ$, if 
  \begin{align*}
    \left\{
      \begin{array}{l}
        H(V_c)+H(V_p)<I(X;Y|Q)\\
        H(V_c) < I(X;Y|Q)-I(X;Z|Q)\\
        H(V_p)>I(X;Z|Q)
      \end{array}
\right.,
  \end{align*}
  then there exists a sequence of codes $\{\mathcal{C}_n\}_{n\geq 1}$ such that 
  \begin{align} \label{eqlim}
\lim_{n\rightarrow\infty}\mathbf{P}_e(\mathcal{C}_n)=\lim_{n\rightarrow\infty}\mathbf{S}(\mathcal{C}_n)=0.\end{align}
 Conversely, if there exists a sequence of codes $\{\mathcal{C}_n\}_{n\geq 1 }$ such that \eqref{eqlim} holds, then there must exist a random variable $Q$ over $\mathcal{Q}$ with $|\mathcal{Q}| \leq 3$ such that ${Q}-X-YZ$ and
\begin{align*}
      \left\{
      \begin{array}{l}
        H(V_c)+H(V_p)\leq I(X;Y|{Q})\\
        H(V_c) \leq  I(X;Y|{Q})-I(X;Z|{Q})\\
        H(V_p)\geq I(X;Z|{Q})
      \end{array}
\right..
\end{align*}
\end{theorem}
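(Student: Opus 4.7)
The plan is to establish achievability by concatenating separate lossless source codes for $V_c^n$ and $V_p^n$ with a wiretap channel code, and to derive the converse via Fano's inequality combined with a continuity argument for strong secrecy. For achievability, I would first compress the confidential source $V_c^n$ into a message $M_c$ of rate approaching $H(V_c)$ and the public source $V_p^n$ into a message $M_p$ of rate approaching $H(V_p)$. The key idea is to use $M_c$ as the confidential message of a wiretap code and $M_p$ as the stochastic randomization that the same wiretap code requires. Time-sharing via the auxiliary variable $Q$, with $X$ drawn i.i.d.\ according to $p_{X|Q}$, the standard wiretap construction then imposes (i) a total rate $H(V_c)+H(V_p) < I(X;Y|Q)$ for reliable decoding at Bob, (ii) a randomization rate $H(V_p) > I(X;Z|Q)$ so that the induced distribution of Eve's observations is saturated (channel resolvability), and (iii) a confidential-message rate $H(V_c) < I(X;Y|Q)-I(X;Z|Q)$. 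These are precisely the three conditions in the statement.

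The chief obstacle is that $M_p$, obtained from an ordinary lossless source coder, is \emph{not} exactly uniform, whereas the standard soft-covering / channel-resolvability argument used to establish strong secrecy presupposes uniform randomization at the encoder. The paper's two architectures circumvent this difficulty while restricting the encoder to a random seed of sublinear length $d_n=o(n)$. Architecture~1 modifies the wiretap code itself and establishes strong secrecy through a R\'enyi-entropy-of-order-2 calculation, exploiting that typical compressed public messages are sufficiently quasi-uniform. Architecture~2 keeps the wiretap code standard but inserts a ``uniform compression'' source encoder which, with the help of a shared seed of length $O(\sqrt{n})$, renders the compressor's output nearly uniform in variational distance. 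Either route delivers vanishing error probability and vanishing secrecy metric $\mathbf{S}(\mathcal{C}_n)$, achieving every rate pair satisfying the three strict inequalities.

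For the converse, I would start from a sequence of codes satisfying \eqref{eqlim}. Fano's inequality applied to the reliability condition gives $H(V_c^n,V_p^n\mid Y^n)\leq n\epsilon_n$ and thus $n(H(V_c)+H(V_p))\leq I(X^n;Y^n)+n\epsilon_n$. The strong-secrecy condition $\mathbf{S}(\mathcal{C}_n)\to 0$, via continuity of mutual information, yields $I(V_c^n;Z^n)\leq n\delta_n$; combined with standard manipulations this upper-bounds the confidential rate by $I(X;Y\mid Q)-I(X;Z\mid Q)$ and lower-bounds the public rate by $I(X;Z\mid Q)$. Single-letterization follows by the usual time-sharing trick: introduce a uniform time-index $T$ and absorb it into $Q$ so that $Q-X-YZ$ holds by memorylessness of the channel. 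The cardinality bound $|\mathcal{Q}|\leq 3$ is obtained via the Fenchel-Eggleston-Carath\'eodory theorem applied to the three rate functionals appearing in the statement. I expect the most delicate step to be handling the sublinear shared-seed regime present in Architecture~2, where one must verify that conditioning on the seed $U_d$ does not alter the single-letter rate expressions in the limit; this is precisely what Appendix~\ref{App_converse} is set up to address.
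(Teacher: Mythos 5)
Your outline matches the paper's structure: the achievability is exactly the paper's two-architecture argument (a modified wiretap code with non-uniform randomization whose secrecy hinges on a R\'enyi-entropy-of-order-2 bound, or a standard wiretap code preceded by a uniform compression code assisted by a shared seed of length $O(\sqrt{n})$), and the converse is Fano plus the strong-secrecy constraint plus single-letterization and Carath\'eodory. There is, however, one place where your converse sketch is too glib and could actually fail if executed literally. You write that ``standard manipulations'' upper-bound $H(V_c)$ by $I(X;Y\mid Q)-I(X;Z\mid Q)$, but a standard Csisz\'ar--K\"orner-type chain (with Csisz\'ar's sum identity) actually produces $I(V;Y\mid Q)-I(V;Z\mid Q)$ for a \emph{prefix} auxiliary $V_i=(Q_i,V_c^n,V_p^n)$, and one cannot in general replace $V$ by $X$ in those two terms simultaneously. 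What makes the replacement legitimate here---and what the paper's Appendix~\ref{App_converse} explicitly establishes---is precisely the sub-linear seed-rate constraint: since $X^n$ is a deterministic function of $(V_c^n,V_p^n,U_{d_n})$ and $d_n/n=\mu_n\to 0$, one has $I(X^n;Z^n\mid V_c^nV_p^n)\leq H(U_{d_n})=d_n$, which single-letterizes to $I(X;Z\mid V)\leq\mu_n$. That in turn yields $I(V;Z\mid Q)\geq I(X;Z\mid Q)-\mu_n$ while $I(V;Y\mid Q)\leq I(X;Y\mid Q)$, so the $V$-region collapses onto the $X$-region in the limit. This is the nontrivial step your sketch labels ``delicate'' without identifying: without it the outer bound would keep the prefix variable and would not match the achievable region. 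Everything else in your proposal is faithful to the paper's proof.
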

Although the result might seem intuitive, the achievability proof does not follow from standard arguments and known results because of the use of vanishing-rate randomness at the encoder. The main contributions of this paper are the two achievability proofs detailed next, the first one in Section~\ref{Sec_Sol1} using the architecture of Fig.~\ref{fig:arch1}, the second one in Section~\ref{Sec_Sol2} using the architecture of Fig.~\ref{fig:arch2}. Note that the converse in \cite{Watanabe12} does not directly apply to the setting of Section \ref{Sec_Sol2}, because of the presence of a pre-shared seed. We provide a detailed proof for the converse of Theorem \ref{th:joint-source-channel} in Appendix \ref{App_converse}. 

\begin{remark}
  Unlike the capacity region of the broadcast channel with confidential messages, the information constraints in Theorem~\ref{th:joint-source-channel} do not include an auxiliary random variable $V$ such that ${Q}-V-X-YZ$. This result is not surprising, as this extra random variable accounts for the addition of artificial noise (channel prefixing) in the encoder, which is not allowed by our model, as we require all encoder inputs to be decoded at the receiver. The random variable ${Q}$ is merely a time-sharing random variable~\cite{Bloch12,Watanabe12}. Similar to \cite[Appendix~C]{Elgamal11}, it is sufficient to consider an alphabet $\mathcal{Q}$ such that $|\mathcal{Q}| \leq 3$ by Fenchel–Eggleston–Carath\'{e}odory theorem.
\end{remark}

\section{Coding architecture based on wiretap codes with non-uniform randomization} \label{Sec_Sol1}
If one were to rely on known wiretap codes~\cite{Wyner75,Csiszar78} to transmit the confidential and public sources, and meet the strong secrecy constraint for the confidential source, one would have to ensure that the randomization of the encoder could be performed with a nearly uniform source of random numbers, measured at least in terms of total variation. If no reconstruction constraints were placed on the public source $\left(\smash{V_p,p_{V_p}}\right)$, a natural approach would simply be to extract the intrinsic randomness of the source~\cite{Vembu1995} to generate nearly uniform random numbers; this strategy happens to be optimal as shown in~\cite[Proposition 1]{Bloch12}. However, unlike the model in~\cite{Bloch12}, the present setting requires the reconstruction of the public source at the receiver. Although lossless compression of the public source might intuitively seem to solve the problem, it would actually not lead to a uniform random number. As alluded to earlier,~\cite{Han05} shows that lossless compression of a source at the optimal rate does not necessarily ensure uniformity under variational distance. In addition, for \acp{DMS},~\cite[Theorem~4]{Hayashi08} shows that there exists a fundamental trade-off between reconstruction error probability and uniformity of the encoder output measured in variational distance. 
To circumvent this limitation, we design wiretap codes that operate with a non-uniform randomization. 
\subsection{Wiretap codes with non-uniform randomization}
We start by studying the wiretap channel model illustrated in Fig.~\ref{fig:non-uniform-randomization}, in which the objective is to encode a secret message $M_c\in \llbracket 1, 2^{nR_c} \rrbracket$ by means of a public message $M_p \in\llbracket 1, 2^{nR_p} \rrbracket$; we do not assume that messages are uniform, but we assume that the statistics of the public message $M_p$ are known to the encoder. We call the corresponding wiretap  code a $(2^{nR_c},2^{nR_p},n)$ wiretap code. In this case, we show that secrecy is still achievable, but at a rate $\frac{1}{n} H_2(M_p)$, where $H_2(M_p)$ denotes the  R\'enyi entropy of order 2 and is given by
\begin{align*}
  H_2(M_p) \triangleq - \log\left[\sum_{m\in \llbracket 1, 2^{nR_p} \rrbracket }p_{M_p}(m)^2\right].
\end{align*}
\begin{figure}
  \centering
  \scalebox{0.59}{\input{ 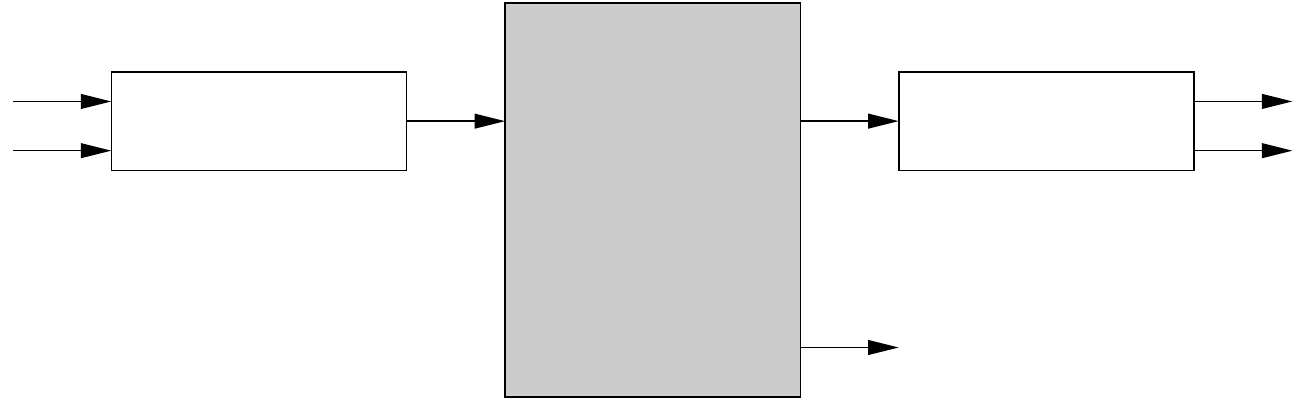_t}}
  \caption{Wiretap channel model with non-uniform randomization.}
  \label{fig:non-uniform-randomization}
\end{figure}

\begin{proposition}  \label{prop_non_uniform2}
Let $p_{{Q}XYZ}$ be a joint distribution that factorizes as $p_{{Q}}p_{X|{Q}}p_{YZ|X}$. Then, if 
  \begin{align*}
    R_c+R_p&<I(X;Y|{Q}),\\
    R_c&<I(X;Y|{Q})-I(X;Z|{Q}),\\
    I(X;Z|{Q})&<  \lim_{n \rightarrow \infty} \frac{1}{n} H_2(M_p),
  \end{align*}
  there exists a sequence of wiretap codes $\{\mathcal{C}_n\}_{n\geq 1}$ such that
  \begin{align*}
    \lim_{n\rightarrow\infty}&\max_{m}\mathbb{P}\left[\hat{M}_c\neq M_c|M_c=m\right]=0,\\
    \lim_{n\rightarrow\infty}&\max_m\mathbb{P}\left[\hat{M}_p\neq M_p|M_c=m\right]=0,\\
    \lim_{n\rightarrow\infty}&\max_{m}\V{p_{Z^n|M_c=m},p_{Z^n}}=0.
  \end{align*}
\end{proposition}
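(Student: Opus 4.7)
The plan is to combine a random wiretap code construction with a channel-resolvability argument tailored to the non-uniform randomization $M_p$. I would fix a joint distribution $p_{QX}$ achieving the stated region, sample a time-sharing realization $q^n \sim p_{Q}^{\otimes n}$, and, conditional on $q^n$, draw codewords $\{X^n(m_c,m_p)\}$ independently with $X_i(m_c,m_p) \sim p_{X|Q}(\cdot \mid q_i)$ for $(m_c,m_p) \in \llbracket 1,2^{nR_c}\rrbracket \times \llbracket 1,2^{nR_p}\rrbracket$. The encoder transmits $X^n(M_c,M_p)$ and the legitimate receiver performs joint-typicality decoding. Since $R_c+R_p < I(X;Y|Q)$, the standard packing lemma yields vanishing ensemble-average joint error probability, and one round of expurgation upgrades this to a maximal error probability over $m_c$ that vanishes, implying the two conditional max-error statements in the proposition.

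The core of the proof is the secrecy analysis, adapted to non-uniform $M_p$. For each fixed $m_c$ the eavesdropper sees
\begin{align*}
\hat p^{\,\calC_n}_{Z^n \mid M_c = m_c}(z^n) = \sum_{m_p} p_{M_p}(m_p)\, p_{Z^n \mid X^n}\bigl(z^n \mid X^n(m_c,m_p)\bigr),
\end{align*}
and I would compare this to the target $q_{Z^n}(z^n) = \prod_i p_{Z \mid Q}(z_i \mid q_i)$ via the expected $\chi^2$-divergence. Expanding the square and using independence of codewords, off-diagonal pairs $m_p \neq m_p'$ contribute exactly $q_{Z^n}(z^n)^2$ and cancel against the subtraction, while diagonal pairs produce a collective factor $\sum_{m_p} p_{M_p}(m_p)^2 = 2^{-H_2(M_p)}$. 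The remaining per-diagonal second moment tensorizes and, after restriction to a typical $q^n$, behaves as $2^{n I(X;Z\mid Q) + o(n)}$, yielding
\begin{align*}
\mathbb{E}_{\calC_n}\!\left[\chi^2\!\bigl(\hat p^{\,\calC_n}_{Z^n \mid M_c = m_c} \,\big\|\, q_{Z^n}\bigr)\right] \leq 2^{-(H_2(M_p) - nI(X;Z\mid Q) - o(n))},
\end{align*}
which decays exponentially under the hypothesis $I(X;Z\mid Q) < \lim_n \tfrac{1}{n} H_2(M_p)$. A Cauchy--Schwarz $\chi^2 \to V$ step then gives $\mathbb{E}_{\calC_n} \V{\hat p^{\,\calC_n}_{Z^n \mid M_c = m_c}, q_{Z^n}} \leq 2^{-\beta n}$ for some $\beta > 0$.

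To extract a single deterministic codebook satisfying all three conclusions \emph{uniformly} in $m_c$, I would union-bound the expected variational distance over the $2^{nR_c}$ messages and apply Markov's inequality to obtain a realization for which reliability and $\max_{m_c} \V{\hat p_{Z^n \mid M_c = m_c}, q_{Z^n}}$ are simultaneously small; a final triangle-inequality step replaces $q_{Z^n}$ by the actual $p_{Z^n}$ (the average over $m_c$). The \emph{main obstacle} will be the tensorized per-letter second-moment estimate $\sum_z p_{Z \mid X}(z\mid x)^2 / p_{Z \mid Q}(z \mid q)$ underlying the diagonal term: it must be controlled uniformly for $q^n$ in a typical set so that its $n$-fold product grows like $2^{nI(X;Z\mid Q)}$ rather than like $2^{nI_2(X;Z\mid Q)}$, where $I_2$ denotes the R\'enyi-2 mutual information (which can be strictly larger). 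Once this typicality-plus-tensorization step is in place, the proposition's precise threshold $\lim_n \tfrac{1}{n} H_2(M_p)$ emerges simply from the replacement of the uniform-randomization factor $2^{-nR_p}$ by $2^{-H_2(M_p)}$ in the resolvability bound.
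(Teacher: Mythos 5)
Your overall plan---random superposition code, reliability by packing and expurgation, secrecy by a resolvability argument in which the diagonal contribution carries the factor $\sum_{m_p}p_{M_p}(m_p)^2=2^{-H_2(M_p)}$, and a final Markov/triangle-inequality extraction of a single deterministic codebook---is in the same family as the paper's proof, and your use of $\chi^2$-divergence with exact off-diagonal cancellation is a genuinely cleaner packaging of the same variance computation. The paper works with $\V{\cdot}$ directly, bounds $\E\abs{p^{(1)}-\E p^{(1)}}\le\sqrt{\text{\textnormal{Var}}(p^{(1)})}$ by Jensen, and the independence of the $X^n(1,1,k)$ again produces $\sum_k p_{M_p}(k)^2$ inside the variance; these are two presentations of one estimate.

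However, there is a genuine gap at the step you yourself flag. The unrestricted tensorized second moment does \emph{not} behave like $2^{nI(X;Z|Q)+o(n)}$, and restricting $q^n$ to a typical set does not repair this. Conditionally on any $q^n$, the per-letter factor in your diagonal term is
\begin{align*}
\sum_{x,z} p_{X|Q}(x\,|\,q_i)\,\frac{p_{Z|X}(z\,|\,x)^2}{p_{Z|Q}(z\,|\,q_i)}
\;=\; \E_{XZ|Q=q_i}\!\left[\frac{p_{Z|X}(Z\,|\,X)}{p_{Z|Q}(Z\,|\,q_i)}\right]
\;=\; \E\!\left[2^{\,\imath(X;Z|Q=q_i)}\right],
\end{align*}
an exponential moment of the information density. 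By Jensen's inequality this is $\ge 2^{I(X;Z|Q=q_i)}$ with equality only when $\imath$ is a.s.\ constant, so for a generic channel the $n$-fold product is $2^{nI_2^{\downarrow}(X;Z|Q)+o(n)}$ with $I_2^{\downarrow}>I$ strictly, and this holds for every $q^n$, typical or not. Typicality of $q^n$ has no bearing on the exponential average in $(X,Z)$. What one actually needs is a truncation on the codeword--output pair $(X^n,Z^n)$: the paper splits $\hat p_{Z^n|m_c}$ into $p^{(1)}$ carrying the indicator $\mathds{1}\{(x^n(1,1,k),z^n)\in\mathcal T^n_{2\epsilon}(XZ|q^n)\}$ and a residual $p^{(2)}$, bounds $p_{Z^n|X^n}(z^n|x^n)\le 2^{-n(H(Z|X)-\delta(\epsilon))}$ on the typical piece (so the per-letter second moment is tamed to the Shannon exponent rather than the R\'enyi one), and disposes of $p^{(2)}$ separately because it has exponentially small expectation. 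Note also that once you insert this $(X^n,Z^n)$-typicality indicator, $\hat p_{Z^n|m_c}$ becomes a subprobability measure, the off-diagonal cross terms no longer sum to exactly $q_{Z^n}^2$, and the identity $\chi^2(\hat p\|q)=\sum_{z^n}\hat p^2/q - 1$ no longer holds verbatim---so the exact cancellation your argument relies on is lost and you would, in effect, be forced back to the paper's Jensen-plus-variance accounting of $\sum_{z^n}\sqrt{\text{\textnormal{Var}}(p^{(1)}(z^n))}$, where the $\sum_{m_p}p_{M_p}(m_p)^2$ factor enters just as in your diagonal term. So the threshold $\lim_n\frac1n H_2(M_p)>I(X;Z|Q)$ emerges correctly once the right truncation is in place, but the route you sketched (typical $q^n$ plus naive tensorization) would prove a strictly weaker statement with $I$ replaced by a R\'enyi-2 quantity.
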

\begin{proof}
  See Appendix~\ref{App_renyi_constraint}.
\end{proof}
As shown in~\cite[Proposition 1]{Bloch12}, if one did not require the reconstruction of $M_p$, one could achieve secret rates $R_c$ as in Proposition~\ref{prop_non_uniform2}, but with the constraint $I(X;Z|{Q})<  \lim_{n \rightarrow \infty} \frac{1}{n} H(M_p)$ instead. In general, $\frac{1}{n}H_2(M_p)\leq \frac{1}{n} H(M_p)$, and the penalty paid by using the R\'enyi entropy instead of the Shannon entropy may be significant. The following example highlights an extreme example of such a situation.
\begin{example}
  Consider $M_p\in \llbracket 1, 2^{nR_p} \rrbracket$ such that
  \begin{align*}
    \mathbb{P}[M_p=1]\triangleq 2^{-n\alpha R_p}, \text{ }\mathbb{P}[M_p=i] \triangleq \frac{1-2^{-n\alpha R_p}}{2^{nR_p}-1}\text{ if }i\neq 1,
  \end{align*}
  where $\alpha\in]0,\frac{1}{2}[$ is a parameter that controls the uniformity of the distribution. Note that
  \begin{align*}
    \lim_{n\rightarrow\infty}\tfrac{1}{n}H_2(M_p) = \alpha R_p \text{ whereas } \lim_{n\rightarrow\infty}\tfrac{1}{n}H(M_p) = R_p.
  \end{align*}
\end{example}
Consequently, the achievable rates predicted in Proposition~\ref{prop_non_uniform2} could be arbitrarily smaller than those in~\cite[Proposition 1]{Bloch12}. Fortunately, a combination of a source code with a wiretap code identified in Proposition~\ref{prop_non_uniform2} is sufficient to achieve the optimal rate of Theorem~\ref{th:joint-source-channel}. 

\subsection{Achievability of Theorem \ref{th:joint-source-channel} based on wiretap codes with non-uniform randomization}

We first refine a known result regarding the existence of good source codes.
\begin{lemma}
\label{lm:compression}
  Consider a \ac{DMS} $\left(\mathcal{V},p_V\right)$. Then, there exists a sequence of source encoders $f_n:\mathcal{V}^n\times \intseq{1}{2^{d_n}}\rightarrow\intseq{1}{2^{nR_n}}$ and associated decoders $g_n$ such that
  \begin{align*}
    \lim_{n\rightarrow\infty}R_n = H(V),  
         \lim_{n\rightarrow\infty}\frac{1}{n}H_2(f_n(V^n,U_{d_n})) = H(V),\\ 
          \lim_{n\rightarrow\infty} \P{V^n\neq g_n(f_n(V^n,U_{d_n}))}=0, \lim_{n \to \infty} \frac{d_n}{n } =0.
  \end{align*}
\end{lemma}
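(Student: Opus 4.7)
The plan is to use typical-set compression as the backbone and to exploit the seed $U_{d_n}$ to diffuse the atypical probability mass so that the output has R\'enyi entropy of order~$2$ close to $nH(V)$. First I would fix a vanishing sequence $\epsilon_n\to 0$ slowly enough that $n\epsilon_n^2\to\infty$, let $\calT_n \eqdef \calT^n_{\epsilon_n}(V)$ denote the strongly typical set, and choose $R_n = H(V)+\delta(\epsilon_n)$ with $\delta(\epsilon_n)\to 0$, together with a sublinear seed length $d_n$ to be specified later.

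The encoder $f_n$ would be defined in two parts. On $\calT_n$ it is any fixed injection $\phi_0 : \calT_n \hookrightarrow \intseq{1}{|\calT_n|}$, independent of the seed, so that the decoder $g_n$ can invert $\phi_0$ without access to $U_{d_n}$; on $\calV^n \setminus \calT_n$ it is $f_n(v^n, u) = |\calT_n|+u$, spreading the atypical probability mass uniformly across $2^{d_n}$ reserve indices by means of the seed. The decoder $g_n(i)$ outputs $\phi_0^{-1}(i)$ when $i \in \phi_0(\calT_n)$ and declares an error otherwise. By construction $R_n \to H(V)$, and $\P{V^n \neq g_n(f_n(V^n,U_{d_n}))} \leq \P{V^n\notin\calT_n} \to 0$ by the strong AEP (via Chernoff), so only the Rényi entropy condition requires work.

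The crux of the proof is the direct collision-probability computation, which because of the disjointness of the typical and reserve ranges reads
\[
\sum_i \P{f_n(V^n, U_{d_n})=i}^2 \;=\; \sum_{v^n\in \calT_n} p_V(v^n)^2 \;+\; \frac{\P{V^n\notin\calT_n}^2}{2^{d_n}}.
\]
The typical term is bounded by $2^{-n(H(V)-\delta(\epsilon_n))}$ using the strong-typicality estimate $p_V(v^n)\leq 2^{-n(H(V)-\delta(\epsilon_n))}$ on $\calT_n$ together with $\sum_{\calT_n} p_V(v^n) \leq 1$. The atypical term is controlled by combining the Chernoff estimate $\P{V^n\notin\calT_n} \leq 2^{-n\gamma(\epsilon_n)}$ with the seed-induced factor $2^{-d_n}$. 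Choosing $\epsilon_n$ and $d_n$ so that both contributions are at most $2^{-n(H(V)-o(1))}$ yields $H_2(f_n(V^n,U_{d_n}))/n \to H(V)$; the matching upper bound $H_2 \leq nR_n$ closes the limit.

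The main obstacle will be the quantitative balance between $\epsilon_n$, $\gamma(\epsilon_n)$, and $d_n$. Since the Chernoff exponent satisfies $\gamma(\epsilon_n) \asymp \epsilon_n^2$ while the typical-set slack satisfies $\delta(\epsilon_n) \asymp \epsilon_n$, the two terms above scale differently in $\epsilon_n$, so one must choose $\epsilon_n$ slowly enough for the atypical term to be manageable and then choose $d_n = o(n)$ large enough to absorb what remains. It is precisely this interplay — reconciling $R_n\to H(V)$, $H_2/n \to H(V)$, and $d_n/n \to 0$ simultaneously — that makes the seed necessary in Lemma~\ref{lm:compression} and that foreshadows the more delicate $O(\sqrt{n})$ seed scaling developed in Section~\ref{Sec_Sol2}.
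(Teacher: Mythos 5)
There is a genuine gap that invalidates the argument: the atypical contribution to the collision probability cannot be made small enough with a sublinear seed under your encoder. Concretely, your encoder puts all of $\P{V^n\notin\calT_n}$ onto the $2^{d_n}$ reserve indices, so each such index carries probability $\P{V^n\notin\calT_n}\,2^{-d_n}$ and the reserve contribution to $\sum_i\P{f_n=i}^2$ is $\P{V^n\notin\calT_n}^2\,2^{-d_n}$. For $\tfrac{1}{n}H_2\to H(V)$ this term must be at most $2^{-n(H(V)-o(1))}$, i.e.\ you need $-2\log\P{V^n\notin\calT_n}+d_n\geq n(H(V)-o(1))$. But the Chernoff exponent satisfies $-\log\P{V^n\notin\calT_{\epsilon_n}}\asymp n\epsilon_n^2$, which is $o(n)$ for any $\epsilon_n\to 0$ (and $\epsilon_n\to 0$ is forced by $R_n\to H(V)$), and $d_n=o(n)$ by assumption. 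The left-hand side is therefore $o(n)$ while the right-hand side is $\Theta(n)$; no schedule of $(\epsilon_n,d_n)$ closes the gap. The ``quantitative balance'' you flag as the main obstacle is in fact an impossibility for this encoder: the R\'enyi entropy is pinned at $O(n\epsilon_n^2+d_n)=o(n)$ by the reserve slots.

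The paper's encoder avoids this by spreading the atypical mass uniformly over the \emph{entire} output range $\intseq{1}{2^{nR_n}}$, not over a dedicated sublinear reserve. Then every output $m$ satisfies $p_{f_n(V^n,U_{d_n})}(m)\leq 2^{-n(1-\epsilon_0)H(V)}+\delta_{\epsilon_0}(n)/|\calT^n_{\epsilon_0}(V)|$, i.e.\ the atypical contribution per slot is diluted by $2^{-nR_n}$ rather than $2^{-d_n}$, and the min-entropy (hence $H_2$) lower bound follows immediately. The price is that a uniform draw over $2^{nR_n}$ values needs $\approx nR_n$ seed bits; the paper recoups sublinearity by block-splitting into $b(n)$ sub-blocks of length $a(n)$ with $a(n)b(n)=n$ and $a(n),b(n)\to\infty$, so the seed per atypical event is only $O(a(n))=o(n)$ while the per-block atypicality probability $\delta_{\epsilon_0}(a(n))$ controls how often the seed is consumed. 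If you want to salvage your scheme, you must either (i) switch to spreading over the full range and adopt the block-splitting device, or (ii) abandon the typical/reserve dichotomy entirely and use a seeded hash of the whole input as in the proof of Proposition~\ref{Thlossya} in Appendix~\ref{App_seed}.
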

\begin{proof}
   We consider a typical-sequence-based source. Specifically, let $n\in\mathbb{N}$, let $\epsilon_0 >0$ function of $n$ to be determined later, and let $ \mathcal{T}_{\epsilon_0}^n(V)$ be the set of $\epsilon_0$-letter-typical sequences of length $n$ with respect to $p_{V}$~\cite{kramerbook}. The typical sequences are labeled $v^n(m)$ with $m\in\intseq{1}{2^{nR_n}}$ and $R_n\triangleq \frac{1}{n}\log |\mathcal{T}_{\epsilon_0}^n(V)| $. The encoder $f_n$ outputs $m$ if the input sequence $v^n=v^n(m)\in\mathcal{T}_{\epsilon_0}^n(V)$, otherwise it generates $m\in\intseq{1}{2^{nR_n}}$ uniformly a random. Note that this uniform selection when the realization of $V^n$ is atypical can be done by a random seed $U_{d_n}$ of appropriate size $d_n$. Decoding is performed by returning the typical sequence $v^n(m)$ corresponding to the received message $m$. By~\cite[Theorem 1.1]{kramerbook}, we know that $\P{V^n\neq g_n(f_n(V^n, U_{d_n}))}\leq \delta_{\epsilon_0}(n)$ with $\delta_{\epsilon_0}(n) \triangleq 2 |\mathcal{V}| e^{-n\epsilon_0^2 \mu_{V}}$, $\mu_{V} \triangleq \displaystyle\min_{r \in \text{supp}(p_{V})}p_{V}(r)$, where supp denotes the support of a distribution, and $R_n<(1+\epsilon_0) H(V)$. Hence, for any $m\in \llbracket 1, 2^{nR_n} \rrbracket$
\begin{align*}
&p_{f_n(V^n, U_{d_n})}(m) \\
& = \mathbb{P}\left[\text{$V^n=v^n(m)$ or } (V^n  \notin \mathcal{T}_{\epsilon_0}^n(V) \right. \\& \phantom{---}\left. \text{ and $m$ is drawn uniformly from $\intseq{1}{2^{nR_n}}$}) \right] \\
& \leq 2^{-n(1-\epsilon_0){H}(V)} + \frac{\delta_{\epsilon_0}(n)}{|\mathcal{T}_{\epsilon_0}^n(V)|}\\
& \leq 2^{-n(1-\epsilon_0){H}(V)} + \frac{\delta_{\epsilon_0}(n)}{ 1- \delta_{\epsilon_0}(n)} 2^{-n(1-\epsilon_0) {H}(V)} \\
& = \frac{2^{-n(1-\epsilon_0)H(V)}}{1-\delta_{\epsilon_0}(n)}.
\end{align*}
Hence, since for any discrete random variable $X$ over $\mathcal{X}$, $H(X) \geq H_2(X) \geq H_{\infty}(X)$, we have
\begin{align*}
n H(V)
&\geq H(f_n(V^n))\\
&\geq H_2(f_n(V^n))\\
 &\geq {H}_{\infty} (f_n(V^n)) \\
&=  - \log (\max_m p_{f_n(V^n)} (m) ) \\
&\geq  n(1-\epsilon_0){H}(V) + \log (1-\delta_{\epsilon_0}(n)).
\end{align*}
We may choose $\epsilon_0 = n^{- 1/2 + \epsilon_b}$ and $\epsilon_b >0$. 

Note that the encoder requires $U_{d_n}$ to encode the non-typical sequences. To mitigate this requirement, we apply the encoder to $b(n)$ sequences of length $a(n)$, where $a(n)$ and $b(n)$ are any integers such that $a(n)b(n)=n$ and $\lim_{n\to \infty} a(n) = + \infty = \lim_{n\to \infty} b(n) $.\footnote{A possible choice is $a(n) \triangleq n^{1- \lambda}$ and $b(n) \triangleq n^{\lambda}$ with $\lambda \in ]0,1[$.} Hence, the amount of required randomness is negligible compared to $n$ since $\mathbb{P} [V^{a(n)}  \notin \mathcal{T}_{\epsilon_0}^{a(n)}(V)] \leq \delta_{\epsilon_0}(a(n))$.
\end{proof}

In the remainder of the paper, we refer to the source codes identified in Lemma~\ref{lm:compression} as ``typical-sequence based'' source codes. 

Going back to the setting of Section~\ref{sec:channel-model}, let us apply Lemma~\ref{lm:compression} to both sources $(\mathcal{V}_c,p_{V_c})$ and $(\mathcal{V}_p,p_{V_p})$. Let $\epsilon>0$. There exists $N_1 \in \mathbb{N}$ and two source encoder-decoder pairs, denoted $(f_n^c,g_n^c)$ and $(f_n^p,g_n^p)$, respectively, such that for $n>N_1$, $\P{(V_c^n,V_p^n)\neq (g_n^c(f_n^c(V_c^n)),g_n^p(f_n^p(V_p^n,, U_{d_n})))}\leq \epsilon$. We set $M_c\triangleq f_n^c(V_c^n)\in\intseq{1}{2^{nR_c}}$ and $M_p\triangleq f_n^p(V_p^n, U_{d_n})\in\intseq{1}{2^{nR_p}}$. Note that we only need randomness for the public source. If there exists a distribution $p_{QXYZ}$ that satisfies the condition of Proposition~\ref{prop_non_uniform2}, then, there exists $N_2 \in \mathbb{N}$ and a wiretap code with encoder-decoder pair $(f_n,g_n)$ such that for $n > N_2$,
\begin{align*}
  \max_{m}\mathbb{P}\left[\hat{M}_c\neq M_c|M_c=m\right]&<\epsilon,\\
\max_m\mathbb{P}\left[\hat{M}_p\neq M_p|M_c=m\right]&<\epsilon,\\
\max_{m}\V{p_{Z^n|M_c=m},p_{Z^n}}&<\epsilon.
\end{align*}
Encoding the sources into codewords as $f_n(f_n^c(V_c^n),f_n^p(V_p^n, U_{d_n}))$, and forming estimates from the channel output $Y^n$ as $\hat{V}_c^n=g_n^c(g_n(Y^n))$, and $\hat{V}_p^n=g_n^p(g_n(Y^n))$, we observe that for $n> \max(N_1,N_2)$, $\P{(V_c^n,V_p^n)\neq (\hat{V}_c^n,\hat{V}_p^n)}\leq 3\epsilon$
and for any $v_c^n\in\mathcal{V}_c^n$, $\V{p_{Z^n|V_c^n=v_c^n},p_{Z^n}}\leq \epsilon$. By taking the limit $\epsilon \rightarrow 0$, we conclude with Lemma~\ref{lm:compression} that a code for the wiretap channel can be constructed provided
\begin{align*}
  H(V_c)+H(V_p)&<I(X;Y|{Q}),\\ H(V_c) &< I(X;Y|{Q})-I(X;Z|{Q}),\\ H(V_p) &> I(X;Z|{Q}).
\end{align*}

\section{Coding architecture based on uniform compression codes} \label{Sec_Sol2}
In this section, we develop a second optimal architecture. As before, our objective is to circumvent the impossibility of generating uniform random numbers with source codes~\cite[Theorem 4]{Hayashi08}, but this time by modifying the operation of the source codes themselves. The approach to overcome this impossibility is to introduce a small shared uniformly distributed random seed. The benefit of this second architecture is that it only requires a modification at the application layer of the protocol stack. However, the price paid is that the transmitter and the receiver must now share a seed whose rate can be shown to be made vanishingly small. This contrasts with our first architecture in Section \ref{Sec_Sol1} for which the seed is \emph{not} available at the decoder. 

\subsection{Uniform compression codes} 
\label{Sec_uc}

\begin{figure}[]
\centering
  \includegraphics[width=8.8cm]{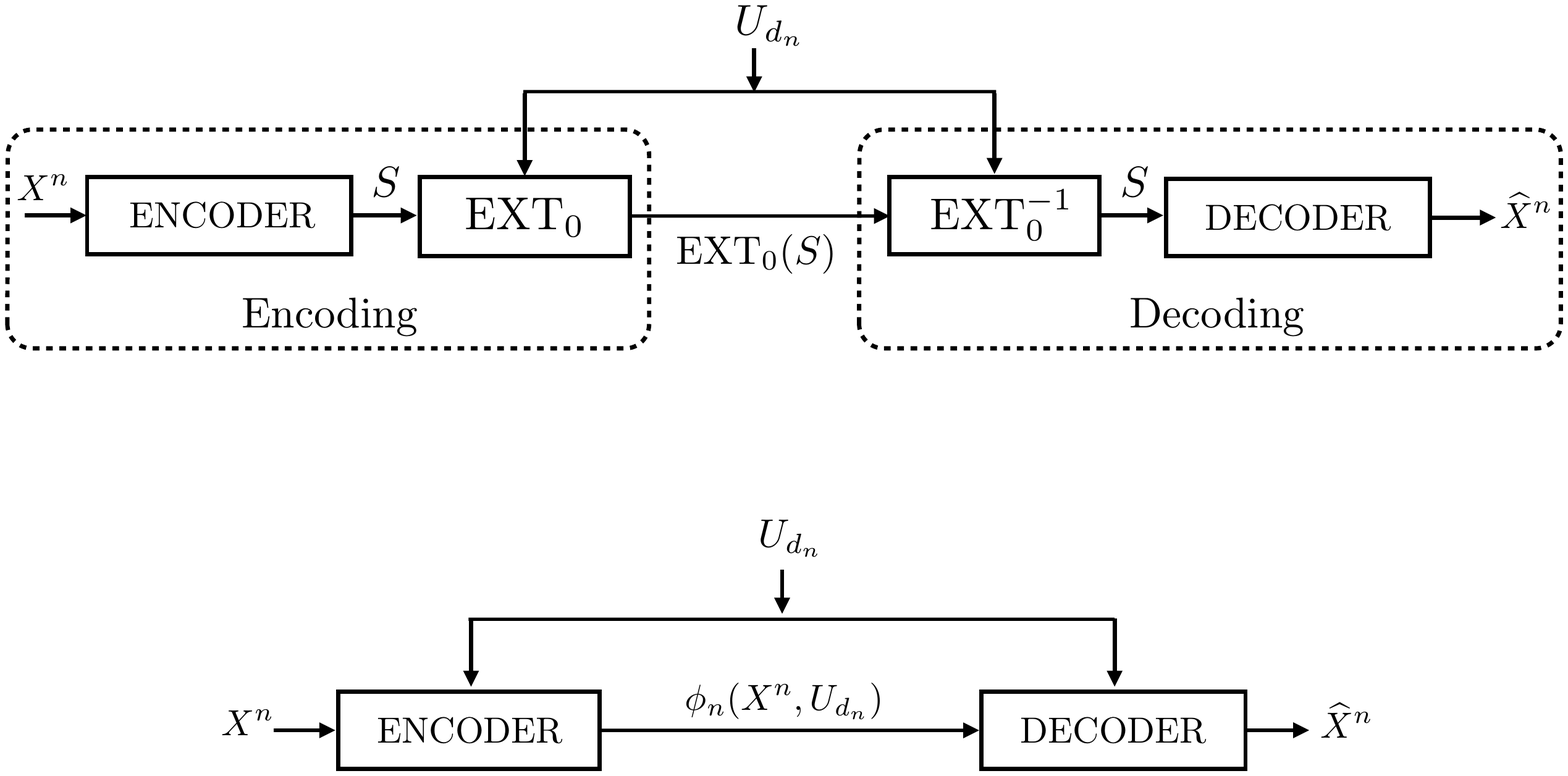}
  \caption{Source encoder and decoder with uniform outputs.}
  \label{figmodel}
  \vspace*{-0em}
\end{figure}
Consider a \ac{DMS} $(\mathcal{X},p_{X})$. Let $n \in \mathbb{N}$, $d_n \in \mathbb{N}$, and let $U_{d_n}$ be a uniform random variable over $\mathcal{U}_{d_n} \triangleq \llbracket 1, 2^{d_n}\rrbracket$ independent of $X^n$. In the following we refer to $U_{d_n}$ as the \emph{seed} and $d_n$ as its length. As illustrated in Figure~\ref{figmodel}, our objective is to design a source code to compress and reconstruct the \ac{DMS} $(\mathcal{X},p_{X})$ with the assistance of a seed $U_{d_n}$.
\begin{definition} \label{def_ucc}
A $(2^{nR},n,2^{d_n})$ uniform compression code $\mathcal{C}_n$ for a \ac{DMS} $(\mathcal{X},p_{X})$ consists of
\begin{itemize}
\item A message set $\mathcal{M}_n \triangleq \llbracket 1, M_n  \rrbracket$, with $M_n \triangleq 2^{nR}$,
\item A seed set $\mathcal{U}_{d_n} \triangleq \llbracket 1, 2^{d_n} \rrbracket$,
\item An encoding function $\phi_n : \mathcal{X}^n \times \mathcal{U}_{d_n} \to \mathcal{M}_n$,
\item A decoding function $\psi_n : \mathcal{M}_n \times \mathcal{U}_{d_n} \to \mathcal{X}^n$.
\end{itemize}
\end{definition}
The performance of the code is measured in terms of the average probability of error and the uniformity of its output as
\begin{eqnarray*}
&\mathbf{P}_e(\phi_n, \psi_n) & \triangleq \mathbb{P}[  X^n \neq \psi_n (\phi_n(X^n,U_{d_n}),U_{d_n})], \\
&\mathbf{U}_e(\phi_n) & \triangleq \V{p_{\phi_n(X^n,U_{d_n})} ,p_{U_{M_n}}},
\end{eqnarray*}
where $U_{M_n}$ has uniform distribution over $\mathcal{M}_n$. 

\begin{remark}
Uniformity could be measured with the stronger metric $\mathbf{U}_e'(\phi_n) \triangleq \mathbb{D}( p_{\phi_n(X^n,U_{d_n})} || p_{U_{M_n}})$, where $\mathbb{D}(\cdot||\cdot)$ is the Kullback-Leibler divergence; however, by~\cite[Lemma 2.7]{bookCsizar}, $\mathbf{U}_e(\phi_n)$ can be replaced by $\mathbf{U}_e'(\phi_n) $, if  $\displaystyle\lim_{n \to \infty} n \mathbf{U}_e(\phi_n)= 0$, which will be the case.
\end{remark}
\begin{definition}	
A rate $R$ is achievable, if there exists a sequence of $(2^{nR},n,2^{d_n})$ uniform compression codes $\{ \mathcal{C}_n \}_{n \geq 1}$ for the \ac{DMS} $(\mathcal{X},p_{X})$,  such that
\begin{align*}
\lim_{n \to \infty} \frac{1}{n} \log M_n  &\leq R, \text{ } 
\lim_{n \to \infty} \frac{d_n}{n}  = 0, \\
\lim_{n \to \infty} \mathbf{P}_e(\phi_n, \psi_n)  &=0,  \text{ } 
\lim_{n \to \infty} \mathbf{U}_e(\phi_n)  = 0.
\end{align*}
\end{definition}

Our main result in this section is the characterization of the infimum of achievable rates with uniform compression codes as well as the optimal scaling of the seed length $d_n$. In the following, we use the Landau notation to characterize the limiting behavior of the seed scaling.
\begin{proposition} \label{Thlossya}
Let $(\mathcal{X} , p_{X})$ be a \ac{DMS}. The infimum of achievable rates with uniform compression codes is $H(X)$.
This infimum is achievable with a seed length $d_n  =   O(\upsilon_n\sqrt{n})$,
for any $\{\upsilon_n\}_{n\in \mathbb{N}} \text { s.t. } \lim_{n\to \infty} \upsilon_n = + \infty$. Moreover, a necessary condition on $d_n$ for a $(2^{nR},n,2^{d_n})$ uniform compression code to achieve $H(X)$~is
$d_n  =  \Omega (\sqrt{n})$, i.e., $\lim_{n\rightarrow \infty} \frac{d_n}{\sqrt n} = +\infty$.
\end{proposition}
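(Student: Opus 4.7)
The plan is to split the proof into three parts: (i) the rate converse $R\geq H(X)$, (ii) the achievability of rate $H(X)$ with a seed of length $O(\upsilon_n\sqrt n)$, and (iii) the seed converse $d_n/\sqrt n \to +\infty$.

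For the rate converse, I would invoke Fano's inequality: since the decoder recovers $X^n$ from $(\phi_n(X^n,U_{d_n}),U_{d_n})$ with vanishing error, $H(X^n\mid \phi_n(X^n,U_{d_n}),U_{d_n})\leq 1+n\mathbf{P}_e(\phi_n,\psi_n)\log|\mathcal{X}|$. Chaining with $nH(X)=H(X^n)\leq H(\phi_n(X^n,U_{d_n}))+H(U_{d_n})+H(X^n\mid \phi_n(X^n,U_{d_n}),U_{d_n})\leq \log M_n+d_n+o(n)$ and using the hypothesis $d_n/n\to 0$ yields $R\geq H(X)$.

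For the achievability, I would build a probability-aware ``interval'' encoder. Fix $\upsilon_n\to\infty$ (WLOG $\upsilon_n=o(\sqrt n)$, replacing $\upsilon_n$ by $\min(\upsilon_n,\log n)$ if necessary) and set $\alpha_n\triangleq \upsilon_n\sqrt{nV(X)}$, where $V(X)\triangleq \Var(-\log p_X(X))<\infty$. By the central limit theorem applied to the i.i.d.\ sequence $-\log p_X(X_i)$, the thinly typical set $T_n\triangleq\{x^n:|\!-\!\log p_{X^n}(x^n)-nH(X)|\leq \alpha_n\}$ satisfies $\mathbb{P}[X^n\in T_n]\to 1$. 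Choose $p_\star\triangleq 2^{-nH(X)-2\alpha_n}$; for every $x^n\in T_n$, assign $N(x^n)\triangleq \lfloor p_{X^n}(x^n)/p_\star\rfloor\in [2^{\alpha_n-1},2^{3\alpha_n}]$ consecutive labels in disjoint intervals $I(x^n)=\intseq{c(x^n)+1}{c(x^n)+N(x^n)}$ within $\intseq{1}{M_n}$, where $M_n\leq 1/p_\star=2^{nH(X)+2\alpha_n}$. Define the encoder $\phi_n(x^n,u)\triangleq c(x^n)+1+(u\bmod N(x^n))$ on $T_n$ (with an arbitrary default otherwise), and the decoder $\psi_n(m,u)$ as the unique $x^n\in T_n$ such that $m\in I(x^n)$. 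Then $\mathbf{P}_e(\phi_n,\psi_n)\leq \mathbb{P}[X^n\notin T_n]\to 0$; since $p_{X^n}(x^n)/N(x^n)=p_\star(1+O(p_\star/p_{X^n}(x^n)))=p_\star(1+O(2^{-\alpha_n}))$, and taking $d_n\triangleq\lceil 4\alpha_n\rceil$ so that $U_{d_n}\bmod N(x^n)$ is $O(2^{-\alpha_n})$-close to uniform in variational distance, a direct computation yields $\mathbf{U}_e(\phi_n)=O(2^{-\alpha_n})+\mathbb{P}[X^n\notin T_n]\to 0$. Finally, $\log M_n/n\leq H(X)+2\alpha_n/n\to H(X)$ and $d_n=O(\upsilon_n\sqrt n)$.

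For the seed converse, I would argue by contradiction: suppose $\limsup d_n/\sqrt n<\infty$, so $d_n\leq C\sqrt n$ for some constant $C>0$. The essence is that by the CLT, $-\log p_{X^n}(X^n)$ concentrates around $nH(X)$ with Gaussian-scale fluctuations of order $\sqrt n$, so the pointwise probabilities $p_{X^n}(x^n)$ on the typical set span a multiplicative range of $2^{\Theta(\sqrt n)}$. Because $\mathbf{P}_e\to 0$ forces each slice $\phi_n(\cdot,u)$ to be essentially injective on a high-probability subset of $\mathcal{X}^n$, the induced output distribution satisfies $p_{M_n}(m)=2^{-d_n}\sum_{u}p_{X^n}(\phi_n(\cdot,u)^{-1}(m))$, which is an average of only $2^{d_n}\leq 2^{C\sqrt n}$ pointwise-fluctuating terms. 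I would then invoke an information-spectrum / source-dispersion argument (in the spirit of Han's second-order lossless source coding) to show that the Gaussian-scale variability of the summands cannot be averaged out by so few mixture components, forcing $\liminf \mathbf{U}_e(\phi_n)>0$ and contradicting the hypothesis. This last step is the main obstacle: it demands a delicate second-order analysis linking $V(X)$, $d_n$, and $\mathbf{U}_e(\phi_n)$, whereas parts (i) and (ii) rely only on first-order Fano-type arguments and CLT concentration.
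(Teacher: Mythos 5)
Your three-way decomposition matches the paper's, and parts (i) and (ii) are sound. The Fano chain in (i) is essentially the paper's argument (your extra $+d_n$ term disappears by $d_n/n\to 0$; the paper avoids it by writing $H(X^n)=H(X^n|U_{d_n})$). Part (ii) is a genuinely different achievability proof: the paper draws a uniformly random mapping $\Phi:\mathcal{X}^n\times\mathcal{U}_{d_n}\to\mathcal{M}_n$, bounds $\mathbb{E}_{\Phi}[\mathbf{U}_e]$ by a Jensen/variance computation mirroring the cloud-mixing analysis, and invokes the selection lemma; you instead construct an explicit, deterministic interval encoder that gives each thinly typical $x^n$ a block of $N(x^n)=\lfloor p_{X^n}(x^n)/p_\star\rfloor$ labels and uses the seed to randomize within the block. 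Your bookkeeping (rounding of $N(x^n)$, non-divisibility of $2^{d_n}$ by $N(x^n)$, and the unused labels contributing $O(\mathbb{P}[X^n\notin T_n]+2^{-\alpha_n})$ to the variational distance) holds up and recovers the same $d_n=O(\upsilon_n\sqrt n)$ scaling; the trade-off is constructiveness versus a cleaner existence argument.

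The genuine gap is part (iii), which you flag yourself. You correctly diagnose that the obstruction is the $\Theta(\sqrt n)$-scale Gaussian spread of $-\log p_{X^n}(X^n)$, but you stop short of the inequality that closes the argument. The paper supplies it as Lemma~\ref{lem-UeLbnd}, a consequence of Han's information-spectrum lower bound, which quantitatively ties $\mathbf{U}_e$ to the seed length: for any $\gamma_n\in]0,M_n[$,
\begin{align*}
\mathbf{U}_e(\phi_n)\geq 2\Bigl(\mathbb{P}\bigl[p_{X^n}(X^n)>2^{d_n}/\gamma_n\bigr]-\gamma_n/M_n\Bigr).
\end{align*}
Combined with a lower bound on $M_n$ extracted from $\mathbf{P}_e\to 0$ via the reconstructible slice $\mathcal{L}_n(a,b)$ under a good fixed seed, and with the Berry--Ess\'een control of Lemma~\ref{lem-1}, this forces $\Phi\bigl(\log(2^{d_n}/\upsilon_n(a,b))/(\sigma\sqrt n)\bigr)\to 1$ and hence $d_n/\sqrt n\to\infty$. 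Without such a quantitative lower bound on $\mathbf{U}_e$ in terms of the information spectrum and $d_n$, the ``cannot be averaged out'' intuition does not produce a contradiction. There is also a small logical slip: you assume $\limsup_n d_n/\sqrt n<\infty$, which at best yields $\limsup_n d_n/\sqrt n=\infty$; the claim is the stronger $\lim_n d_n/\sqrt n=\infty$, so you should either derive the displayed inequality for all large $n$ and then let $n\to\infty$ (as the paper does), or run the contradiction along an arbitrary subsequence on which $d_n/\sqrt n$ stays bounded.
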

\begin{proof}
  See Appendix~\ref{App_seed}.
\end{proof}

\subsection{Explicit uniform compression codes}
\label{sec:expl-unform-compr}

As a first attempt to develop a practical scheme for uniform compression codes, we propose an achievability scheme for Proposition \ref{Thlossya} based on invertible extractors~\cite{Dodis05}. We start by recalling known facts about extractors.

\begin{definition}[\cite{Dodis05}]
Let $\epsilon > 0$. Let $m,d,l \in \mathbb{N}$ and let $t \in \mathbb{R}^+$. A polynomial time probabilistic function $\textup{Ext}: \{0,1\}^m \times \{0,1\}^d \mapsto \{0,1\}^l$ is called a $(m,d,l,t,\epsilon)$-extractor, if for any binary source $X$ satisfying ${H}_{\infty}(X)\geq t$, we have $$\mathbb{V}(p_{\textup{Ext}(X,U_d)},p_{U_l}) \leq \epsilon,$$ where $U_d$ is a sequence of $d$ uniformly distributed bits, $p_{U_l}$ is the uniform distribution over $\{0,1\}^l$. Moreover, a $(m,d,l,t,\epsilon)$-extractor is said to be invertible if the input can be reconstructed from the output and $U_d$. 
\end{definition}
It can be shown~\cite{Dodis05,Dodis04} that there exist \emph{explicit} invertible $(m,d,m,t,\epsilon)$-extractors such that 
\begin{equation} \label{d_exp}
d = m-t +2\log m + 2 \log \frac{1}{\epsilon} + O(1) .
\end{equation}
The following proposition shows that one can establish optimal uniform compression codes using such invertible extractors.

\begin{proposition} \label{Extlossy}
Let $(\mathcal{X}, p_{X})$ be a binary memoryless source. For any $R>H(X)$ and for any $\epsilon > 0$, the rate $R$ can be achieved with a sequence of uniform compression codes such that
\begin{itemize}
\item the seed length scales as $d_n=\Theta(n^{1/2 + \epsilon})$;
\item the encoder $\phi_n: \mathcal{X}^n \times \mathcal{U}_{d_n} \to \mathcal{M}_n$ is composed of a typical-sequence based source code combined with an invertible extractor as described in Figure \ref{fig:inv_ext_scheme}.
\end{itemize}

\begin{figure}
\centering
  \includegraphics[width=8.8cm]{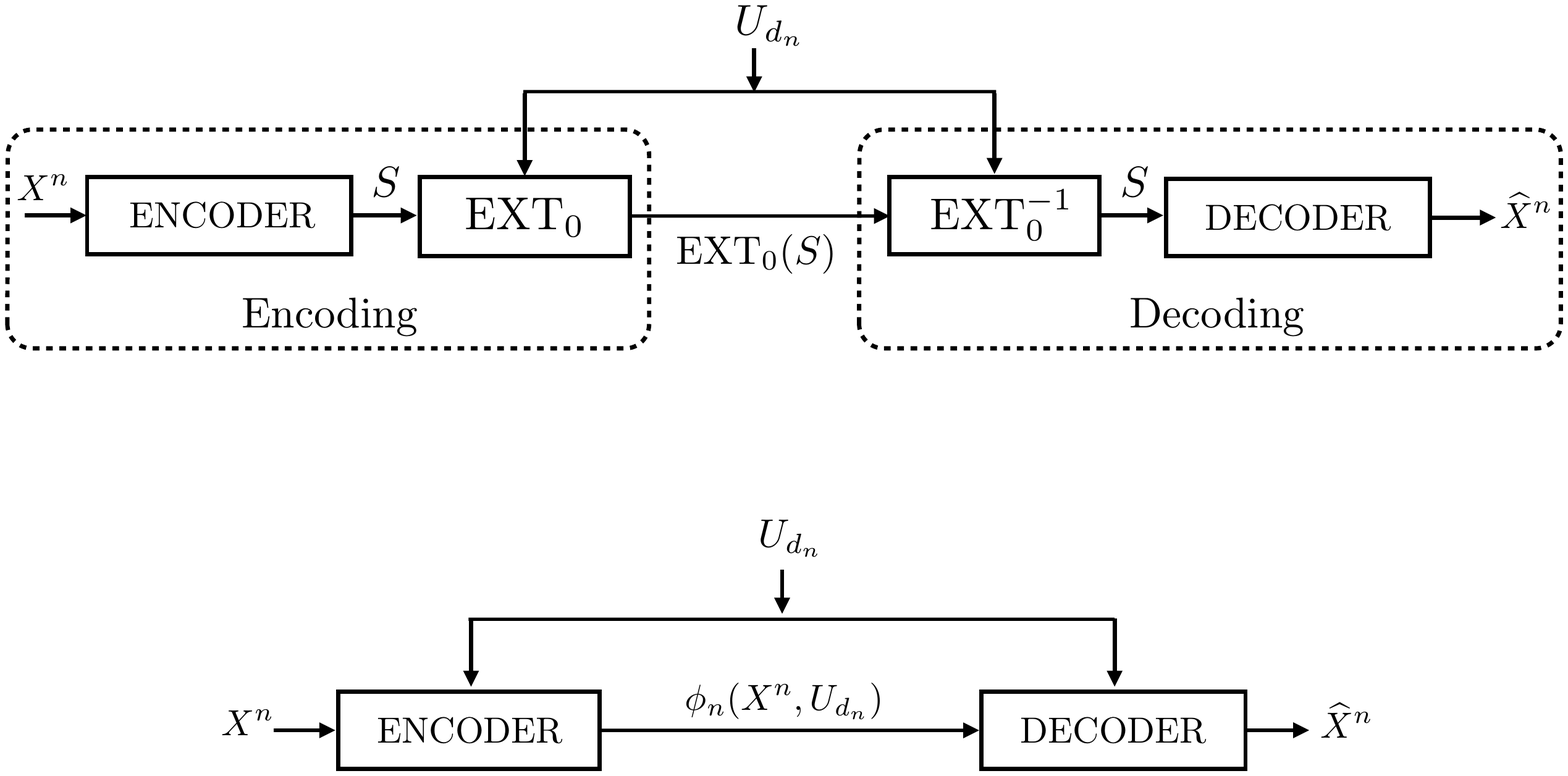}
  \caption{Encoding/decoding scheme for Proposition \ref{Extlossy}. The encoder/decoder is obtained from a typical-sequence based source code, and an invertible extractor  EXT$_0$.}
  \label{fig:inv_ext_scheme}
\end{figure}
\end{proposition}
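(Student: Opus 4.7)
I construct $\phi_n$ by cascading (i) a typical-sequence based source encoder $f_n^{\text{src}}$ from Lemma~\ref{lm:compression} that compresses $X^n$ into an intermediate label $M' \in \intseq{1}{2^{nR_n}}$, and (ii) an explicit invertible $(nR_n, d_n, nR_n, t_n, \tilde\epsilon_n)$-extractor $\textup{EXT}_0$ mapping $M'$ together with the uniform seed $U_{d_n}$ to the output message $M \in \mathcal{M}_n$. The decoder $\psi_n$ inverts $\textup{EXT}_0$ using the shared seed to recover $M'$, and then applies the source decoder of Lemma~\ref{lm:compression}. Since $\textup{EXT}_0$ is invertible, the extractor stage contributes no additional decoding error, so $\mathbf{P}_e(\phi_n,\psi_n)$ is upper bounded by the error probability of the source code, which vanishes by Lemma~\ref{lm:compression}. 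The rate constraint $\lim_n \tfrac{1}{n}\log M_n \leq R$ for any $R > H(X)$ follows directly from $R_n \to H(X)$.

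\textbf{Uniformity via min-entropy.} The crux is to transport the min-entropy bound established inside the proof of Lemma~\ref{lm:compression} into a uniformity guarantee. Choosing the typicality parameter $\epsilon_0 = n^{-1/2 + \epsilon}$ in Lemma~\ref{lm:compression} (where $\epsilon$ is the parameter in the proposition statement) yields $R_n \leq (1+\epsilon_0) H(X)$ and
\begin{align*}
H_\infty(M') \geq n(1-\epsilon_0) H(X) + \log(1-\delta_{\epsilon_0}(n)),
\end{align*}
with $\delta_{\epsilon_0}(n) = 2|\mathcal{X}| e^{-n\epsilon_0^2 \mu_X} = 2|\mathcal{X}| e^{-n^{2\epsilon} \mu_X} \to 0$. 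Setting $t_n$ equal to this min-entropy lower bound and $\tilde\epsilon_n = 1/n$, the extractor guarantees $\V{p_{M}, p_{U_{M_n}}} \leq 1/n$, hence $\mathbf{U}_e(\phi_n) \to 0$.

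\textbf{Seed length.} Plugging $m = nR_n$, $t_n$, and $\tilde\epsilon_n = 1/n$ into~\eqref{d_exp}, the dominant term is $m - t_n \leq 2 n \epsilon_0 H(X) - \log(1-\delta_{\epsilon_0}(n))$, whereas $2\log m + 2\log(1/\tilde\epsilon_n) + O(1) = O(\log n)$. Therefore
\begin{align*}
d_n = 2H(X)\, n^{1/2+\epsilon} + O(\log n) = \Theta(n^{1/2+\epsilon}),
\end{align*}
which asymptotically dominates the $o(n^{1/2+\epsilon})$ amount of randomness used by the typical-sequence based encoder to handle atypical inputs (Lemma~\ref{lm:compression}), so the total seed length still scales as $\Theta(n^{1/2+\epsilon})$. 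The main obstacle is the calibration of $\epsilon_0$: a smaller $\epsilon_0$ tightens the gap $m - t_n$ and shrinks $d_n$, but enlarges $\delta_{\epsilon_0}(n)$ and can destroy the vanishing of $\mathbf{P}_e(\phi_n,\psi_n)$; the specific choice $\epsilon_0 = n^{-1/2+\epsilon}$ is tuned precisely so that $\delta_{\epsilon_0}(n) \to 0$ while simultaneously $d_n = \Theta(n^{1/2+\epsilon})$, matching the target scaling in the proposition.
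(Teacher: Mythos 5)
Your proposal is correct and takes essentially the same approach as the paper: cascade a typical-sequence based source encoder with an explicit invertible extractor, transport the min-entropy lower bound from Lemma~\ref{lm:compression} into the extractor's entropy requirement $t_n$, and plug into~\eqref{d_exp} with $\epsilon_0 = n^{-1/2+\epsilon}$ (the paper's $\upsilon_n/\sqrt{n}$ with $\upsilon_n = n^\epsilon$) to get $d_n=\Theta(n^{1/2+\epsilon})$. One small refinement you make that the paper glosses over: you explicitly let the extractor error parameter $\tilde\epsilon_n = 1/n$ shrink with $n$, so that $\mathbf{U}_e(\phi_n)\to 0$ (as the achievability definition requires) while the resulting $2\log(1/\tilde\epsilon_n)=O(\log n)$ contribution to $d_n$ stays subdominant; the paper fixes $\epsilon$ at the outset and only asserts $\mathbf{U}_e\leq\epsilon$, leaving the limit to the reader.
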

\begin{proof}
See Appendix \ref{sec:extractors_uniform_compression}.
\end{proof}
Unfortunately, this scheme is not fully practical because it relies on a typical-sequence based compression. To provide at least one explicit and low-complexity example, we finally develop a uniform compression code based on polar codes for a binary memoryless source $(\mathcal{X},p_{X})$, $\mathcal{X} \triangleq \{ 0,1\}$. Let $\beta \in ]0,1/2[$, $n \in \mathbb{N}$, $N \triangleq 2^n$, and $\delta_N \triangleq 2^{-N^{\beta}}$. Let $G_N \triangleq  \left[ \begin{smallmatrix}
       1 & 0            \\[0.3em]
       1 & 1 
     \end{smallmatrix} \right]^{\otimes n}$ be the source polarization transform defined in \cite{Arikan10}, and set $A^N \triangleq X^N G_N$. For any set  $\mathcal{A} \triangleq \{ i_j \}_{j=1}^{|\mathcal{A}|}$ of indices in $\llbracket 1,N \rrbracket$, we define $A^N[\mathcal{A}] \triangleq \left[A_{i_1}, A_{i_2}, \ldots,A_{i_{|\mathcal{A}|}}\right]$. In the following, we denote the complement set operation by the superscript $c$. We also define the sets
\begin{align*}
\mathcal{V}_X  \triangleq &  \left\{ i\in \llbracket 1,N\rrbracket : {H} \left(A_i|A^{i-1} \right) > 1- \delta_N \right\} , \\
{\mathcal{H}_X} \triangleq  &\left\{ i\in \llbracket 1,N\rrbracket : {H}\left(A_i|A^{i-1}\right) >  \delta_N \right\}.
\end{align*}
A polar-based uniform compression code is obtained by defining the encoding function $\phi_N$ as follows. Let $U_{\card{\mathcal{H}_X \backslash \mathcal{V}_X}}$ denote a sequence of uniformly distributed random bits with length $|{{\mathcal{H}_X \backslash \mathcal{V}_X}}|$. Then,
\begin{align*}
  \phi_N (X^N\!,U_{\card{\mathcal{H}_X \backslash \mathcal{V}_X}}) \triangleq \! \left[A^N[\mathcal{V}_X],A^N[\mathcal{H}_X \! \backslash \mathcal{V}_X]\oplus U_{\card{\mathcal{H}_X \backslash \mathcal{V}_X}}\right]\!.
\end{align*}

\begin{proposition} \label{proppolar}
Let $(\mathcal{X},p_{X})$ be a binary memoryless source. Any rate $R>H(X)$ is achievable with a sequence of polar-based uniform compression codes such that the seed length $\card{\mathcal{H}_X \backslash \mathcal{V}_X}$ vanishes as the code length grows unbounded. In addition, the complexity of the encoding and decoding is $O(N \log N)$, where $N$ denotes the code length.
\end{proposition}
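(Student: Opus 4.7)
The plan is to verify four properties in sequence: (i) the compression rate converges to $H(X)$, (ii) the seed rate vanishes, (iii) the decoding error probability vanishes, and (iv) the encoder output converges to the uniform distribution in variational distance; the $O(N\log N)$ complexity is then immediate from the standard recursive structure of the Arıkan polar transform. Throughout, I would rely on Arıkan's source polarization theorem, which yields $\lim_{N\to\infty}|\mathcal{H}_X|/N = H(X) = \lim_{N\to\infty}|\mathcal{V}_X|/N$, together with fast polarization giving $\delta_N=2^{-N^\beta}$ at the stated rate.

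The rate and seed bookkeeping is immediate. The encoder outputs $|\mathcal{V}_X|+|\mathcal{H}_X\setminus\mathcal{V}_X|=|\mathcal{H}_X|$ bits, so $\frac{1}{N}\log M_N=|\mathcal{H}_X|/N\to H(X)$, delivering any rate $R>H(X)$; since $\mathcal{V}_X\subseteq\mathcal{H}_X$, the seed length $d_N=|\mathcal{H}_X|-|\mathcal{V}_X|$ satisfies $d_N/N\to 0$. For the error analysis, the decoder first XORs the second block with the shared seed $U_{d_N}$ to recover $A^N[\mathcal{H}_X\setminus\mathcal{V}_X]$, hence knows $A^N[\mathcal{H}_X]$, and then fills in the remaining coordinates by successive cancellation, $\hat{A}_i=\arg\max_a p_{A_i|A^{i-1}}(a|\hat{A}^{i-1})$ for $i\in\mathcal{H}_X^c$. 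Because $H(A_i|A^{i-1})\leq\delta_N$ on that set, the standard entropy-to-Bhattacharyya inequality bounds the per-index error by $O(\sqrt{\delta_N})$, and a union bound yields total block error $O(N\cdot 2^{-N^\beta/2})\to 0$. Finally $X^N=A^N G_N$ is recovered deterministically since $G_N$ is its own inverse over $\mathrm{GF}(2)$.

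The heart of the argument is uniformity. The output has two blocks: the second, $A^N[\mathcal{H}_X\setminus\mathcal{V}_X]\oplus U_{d_N}$, is exactly uniform and independent of $A^N[\mathcal{V}_X]$ because $U_{d_N}$ is uniform and independent of $X^N$, so the total-variation distance from uniform of the full output equals that of $A^N[\mathcal{V}_X]$ alone. To control the latter, I would enumerate $\mathcal{V}_X=\{i_1<\cdots<i_k\}$, apply the chain rule, and then use that additional conditioning cannot increase entropy (the indices $i_1,\ldots,i_{j-1}$ are a subset of $\{1,\ldots,i_j-1\}$):
\[
H(A^N[\mathcal{V}_X]) = \sum_{j=1}^{k} H(A_{i_j}\mid A_{i_1},\ldots,A_{i_{j-1}}) \geq \sum_{j=1}^{k} H(A_{i_j}\mid A^{i_j-1}) \geq |\mathcal{V}_X|(1-\delta_N).
\]
Hence $\mathbb{D}(p_{A^N[\mathcal{V}_X]}\Vert\text{Unif})\leq|\mathcal{V}_X|\delta_N\leq N\delta_N$, and Pinsker's inequality yields $\mathbf{U}_e(\phi_N)=O(\sqrt{N\delta_N})\to 0$.

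The main subtlety I expect is precisely this uniformity step: one must condition on the \emph{induced} prefix within $\mathcal{V}_X$ when expanding by the chain rule, and only \emph{then} invoke monotonicity of conditional entropy in order to bring in the polarization estimate $H(A_{i_j}|A^{i_j-1})>1-\delta_N$ coordinate-by-coordinate. Everything else reduces to invoking fast polarization, the standard SC error analysis, and the $O(N\log N)$ butterfly structure of the Arıkan kernel.
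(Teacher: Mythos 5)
Your proposal is correct and follows essentially the same route as the paper's proof, which proceeds by (i) invoking source polarization to show $|\mathcal{H}_X|/N\to H(X)$, $|\mathcal{V}_X|/N\to H(X)$, and hence $|\mathcal{H}_X\setminus\mathcal{V}_X|/N\to 0$ (Lemma~\ref{lemcard}); (ii) bounding $\log 2^{|\mathcal{V}_X|}-H(A^N[\mathcal{V}_X])\leq N\delta_N$ via the same chain-rule-plus-monotonicity argument you give (Lemma~\ref{lemuni}); and (iii) citing Arıkan for SC reconstruction of $X^N$ from $A^N[\mathcal{H}_X]$, where $A^N[\mathcal{H}_X\setminus\mathcal{V}_X]$ is transmitted as a one-time pad. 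Your write-up is more explicit than the paper in two places that the paper leaves implicit: you spell out that the one-time-padded block is exactly uniform \emph{and independent} of $A^N[\mathcal{V}_X]$ (so the total-variation distance of the full encoder output reduces to that of the first block alone), and you explicitly invoke Pinsker's inequality to pass from the KL bound $N\delta_N$ to the variational-distance metric $\mathbf{U}_e$; the paper handles the latter once and for all via Remark~4.
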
 

\begin{proof}
See Appendix \ref{Apppol}.
\end{proof}

\subsection{Achievability of Theorem \ref{th:joint-source-channel} based on uniform compression codes} \label{secmodelu}


The uniform compression codes of Section \ref{Sec_uc} may now be combined with known wiretap codes (as depicted in Figure~\ref{fig:arch2}), whose properties we recall in the following lemma.
\begin{lemma}[Adapted from~{\cite[Proposition 1]{Bloch12}}]
  \label{lm:wiretap-codes}
  Consider a \ac{DMC} $(\calX,p_{YZ|X}$, $\calY\times\calZ$), in which a message $M_c\in\intseq{1}{2^{nR_c}}$ is encoded by means of a uniform auxiliary message $M_p\in\intseq{1}{2^{nR_p}}$. If there exists a joint distribution $p_{{Q}XYZ}$ that factorizes as $p_{Q}p_{X|{Q}}p_{YZ|X}$ such that
  \begin{align}
    R_c+R_p&<I(X;Y|{Q}) \label{eq:cond1}\\
    R_c&<I(X;Y|{Q})-I(X;Z|{Q}) \\
    R_p&>I(X;Z|{Q})\label{eq:cond3},
  \end{align}
 then there exists a sequence of wiretap codes $\{\calC_n\}_{n\geq 1}$ such that
 \begin{align*}
         \lim_{n\rightarrow\infty}&\max_{m}\mathbb{P}\left[\hat{M}_c\neq M_c|M_c=m\right]=0,\\
    \lim_{n\rightarrow\infty}&\max_m\mathbb{P}\left[\hat{M}_p\neq M_p|M_c=m\right]=0,\\
    \lim_{n\rightarrow\infty}&\max_{m}\V{p_{Z^n|M_c=m},p_{Z^n}}=0.
 \end{align*}
\end{lemma}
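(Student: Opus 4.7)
The plan is to use a random-coding construction for the wiretap channel with a time-sharing auxiliary $Q$, followed by a concentration-of-measure argument to lift the usual average strong-secrecy statement to a worst-case statement over $m_c$.

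First I would fix a joint distribution $p_{Q}p_{X|{Q}}p_{YZ|X}$ satisfying \eqref{eq:cond1}--\eqref{eq:cond3}, and generate a time-sharing sequence $q^n \sim \prod_i p_{Q}(q_i)$ that is revealed to both encoder and decoder. For each message pair $(m_c,m_p) \in \llbracket 1, 2^{nR_c} \rrbracket \times \llbracket 1, 2^{nR_p}\rrbracket$, I would independently draw a codeword $x^n(m_c,m_p)$ from $\prod_i p_{X|{Q}}(\cdot|q_i)$. The encoder transmits $x^n(M_c,M_p)$ and the decoder applies joint-typicality decoding against $(q^n,Y^n)$ to produce $(\hat M_c,\hat M_p)$. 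Reliability under \eqref{eq:cond1} follows from the conditional packing lemma: the expected (over the codebook and over uniformly drawn messages) probability of decoding error decays exponentially. Because $M_p$ is uniform and independent of $M_c$, the error probability conditioned on $M_c = m$ inherits the same expected bound, and an expurgation step combined with Markov's inequality on the codebook upgrades this to $\max_m \P{\hat M_c \neq M_c | M_c = m} \to 0$ together with the analogous statement for $\hat M_p$.

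For strong secrecy I would fix $m_c$ and note that $p_{Z^n|M_c = m_c}$ is the output distribution produced by drawing $M_p$ uniformly and passing $x^n(m_c,M_p)$ through $p_{Z|X}^n$. Because $R_p > I(X;Z|{Q})$ by \eqref{eq:cond3}, the conditional soft-covering (channel resolvability) lemma yields $\E[\mathcal{C}_n]{\V{p_{Z^n|M_c=m_c,\mathcal{C}_n},p_{Z^n|Q^n=q^n}}} \leq 2^{-n\gamma}$ for some $\gamma > 0$, and averaging over $q^n$ gives the analogous bound against the marginal $p_{Z^n}$. To lift this to a $\max_{m_c}$ guarantee I would apply McDiarmid's inequality: changing a single codeword $x^n(m_c,m_p)$ perturbs the variational distance by $O(2^{-nR_p})$, so the $2^{nR_p}$ independent codeword choices concentrate $\V{p_{Z^n|M_c=m_c,\mathcal{C}_n},p_{Z^n|\mathcal{C}_n}}$ doubly-exponentially around its mean. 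A union bound over the $2^{nR_c}$ values of $m_c$ then preserves the bound uniformly, and a final intersection-of-good-events argument produces a deterministic realization of $\mathcal{C}_n$ satisfying all three limits simultaneously.

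The main obstacle is precisely this promotion from an average to a worst-case strong-secrecy statement over $m_c$. A naïve Markov-plus-union-bound strategy would require the soft-covering exponent to exceed $R_c$, which need not hold throughout the region prescribed by \eqref{eq:cond1}--\eqref{eq:cond3}; the doubly-exponential tail furnished by McDiarmid is what bypasses this restriction and makes the construction succeed on the entire rate region. All other ingredients (conditional packing, conditional soft covering, selection of a deterministic codebook) are standard and follow the template already used in \cite{Bloch12}.
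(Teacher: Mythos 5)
Your construction and the paper's both start from the same place: random coding over $p_Q p_{X|Q}$, typicality decoding for reliability under \eqref{eq:cond1}, and a conditional soft-covering (resolvability / ``cloud-mixing'') argument for secrecy under \eqref{eq:cond3}. Note that this lemma is cited from \cite{Bloch12} rather than proved in the body, but the paper does prove the strictly more general Proposition~\ref{prop_non_uniform2} in Appendix~\ref{App_renyi_constraint}, of which Lemma~\ref{lm:wiretap-codes} is the uniform-$M_p$ special case (since $\tfrac{1}{n}H_2(M_p)=R_p$ when $M_p$ is uniform), so that is the right object to compare against.

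Where you genuinely diverge from the paper is in the step that upgrades the average secrecy guarantee to a $\max_{m_c}$ guarantee. The paper first uses Markov's inequality over the random codebook to extract one good deterministic code with small \emph{averaged} metrics $\mathbf{P}_e(C_n)$ and $\mathbb{E}_{M_c}[S(M_c)]$, and then applies Markov's inequality \emph{within} that fixed code to conclude that $P_1(m)$, $P_2(m)$, and $S(m)$ are all small for at least a quarter of the messages $m$; expurgating to that subset yields the uniform-over-$m$ bound at a negligible ($O(1/n)$) rate loss. You instead keep the full message set and argue concentration: since $\mathbb{V}\bigl(p_{Z^n|M_c=m_c,\mathcal{C}_n},p_{Z^n|Q^n}\bigr)$ is a function of the $2^{nR_p}$ i.i.d.\ codewords $\{x^n(m_c,\cdot)\}$ with bounded differences $O(2^{-nR_p})$, McDiarmid gives deviation probability $\exp(-\Theta(2^{n(R_p-2\gamma)}))$, which is doubly exponential because the soft-covering exponent $\gamma \approx \tfrac{1}{2}(R_p-I(X;Z|Q))$ satisfies $2\gamma<R_p$; a union bound over the $2^{nR_c}$ values of $m_c$ then survives regardless of $R_c$. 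Both routes close the gap that a naïve Markov-plus-union-bound leaves (which would indeed require $\gamma>R_c$). Expurgation is lighter machinery and is what the paper chooses; McDiarmid is a bit heavier but avoids discarding messages, and it is the standard alternative in the resolvability literature. One small thing to tighten in a full write-up: be consistent about whether you apply McDiarmid to $\mathbb{V}\bigl(p_{Z^n|M_c=m_c,\mathcal{C}_n},p_{Z^n|Q^n}\bigr)$ (which depends only on the $m_c$-th subcodebook, so the bounded-differences count is clean) or to $\mathbb{V}\bigl(p_{Z^n|M_c=m_c,\mathcal{C}_n},p_{Z^n|\mathcal{C}_n}\bigr)$ (which also depends on the other subcodebooks); doing the former and then handling $\mathbb{V}\bigl(p_{Z^n|\mathcal{C}_n},p_{Z^n|Q^n}\bigr)$ separately by the triangle inequality keeps the argument crisp.
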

Let $\epsilon>0$. Going back again to the setting of Section~\ref{sec:channel-model}, we encode the confidential \ac{DMS} using a traditional source code as in Lemma~\ref{lm:compression}, and the public \ac{DMS} using a uniform compression code as in Proposition~\ref{Thlossya}. The corresponding source encoder-decoder pairs are denoted $(f_n^c,g_n^c)$ and $(f_n^p,g_n^p)$, respectively, and we set $M_c\triangleq f_n^c(V_c^n)\in\intseq{1}{2^{nR_c}}$ and $M_p\triangleq f_n^p(V_p^n,U_{d_n})\in\intseq{1}{2^{nR_p}}$.  We assume $n$ large enough so that 
\begin{align}
\P{(V_c^n,V_p^n)\neq (g_n^c(M_c),g_n^p(M_p,U_{d_n}))}\leq \epsilon,  \\
\V{p_{{M}_p},p_{U_{nR_p}}}<\epsilon. \label{eq:uMp}
\end{align}
 Under the conditions~\eqref{eq:cond1}-\eqref{eq:cond3} of Lemma~\ref{lm:wiretap-codes}, which are met whenever
\begin{align*}
  H(V_c)+H(V_p)&<I(X;Y|{Q}), \\ H(V_c) &< I(X;Y|{Q})-I(X;Z|{Q}),\\  H(V_p) &> I(X;Z|{Q}),
\end{align*}
 for $n$ sufficiently large there exists a wiretap code $\calC_n$ with encoder/decoder pair $(f_n,g_n)$ so that for any $m_c$, and for $\tilde{M}_p$ distributed according  to $p_{U_{nR_p}}$, the uniform distribution over $\llbracket 1, 2^{nR_p}\rrbracket$,  
 \begin{align} 
 	\mathbb{P}\left[\hat{\tilde{M}}_p\neq \tilde{M}_p|M_c=m_c\right] < \epsilon, \label{eq:wu}\\
 	\mathbb{P}\left[\hat{\tilde{M}}_c\neq {M}_c|M_c=m_c\right] < \epsilon,\label{eq:wu2}\\
 	\V{\tilde{p}_{Z^n|M_c=m_c},\tilde{p}_{Z^n}}\leq\epsilon,\label{eq:wu3}
 \end{align}
where $(\hat{\tilde{M}}_p,\hat{\tilde{M}}_c)$ is the estimate of $({\tilde{M}}_p,{{M}}_c)$ by the decoder of $\calC_n$, and for any $z^n$, $m_c$, $m_p$,
\begin{multline*}
  \tilde{p}_{Z^nM_cM_P}(z^n,m_c,m_p) \\ \eqdef   p_{Z^n|M_c=m_c,M_p=m_p}(z^n)p_{M_c}(m_c)p_{U_{nR_p}}(m_p).
\end{multline*}
Note that~\eqref{eq:wu}-\eqref{eq:wu3} holds by Lemma \ref{lm:wiretap-codes} because we have assumed $\tilde{M}_p$ uniformly distributed. We now study the consequences of using the wiretap code $\calC_n$ with $M_p$ (not exactly uniformly distributed) instead of $\tilde{M}_p$. Specifically, we note  $(\hat{{M}}_p,\hat{{M}}_c)$ the resulting estimate of $({{M}}_p,{{M}}_c)$ by the decoder of $\calC_n$, and define for any $z^n$, $m_c$, $m_p$,
\begin{multline*}
  p_{Z^nM_cM_P}(z^n,m_c,m_p) \\ \eqdef  p_{Z^n|M_c=m_c,M_p=m_p}(z^n)p_{M_c}(m_c)p_{M_p}(m_p).
\end{multline*}
We then have for any $m_c$,
\begin{align}
&\V{p_{Z^n|M_c=m_c},{p}_{Z^n}} \nonumber \\ \nonumber
& \stackrel{(a)}{\leq} \V{p_{Z^n|M_c=m_c},\tilde{p}_{Z^n|M_c=m_c}}+\V{\tilde{p}_{Z^n|M_c=m_c},\tilde{p}_{Z^n}} \\ \nonumber
& \phantom{---}+\V{\tilde{p}_{Z^n},p_{Z^n}} \\ \nonumber
& \stackrel{(b)}{\leq} \epsilon + \V{p_{Z^n|M_c=m_c},\tilde{p}_{Z^n|M_c=m_c}}+\V{\tilde{p}_{Z^n},p_{Z^n}} \\ \nonumber
& \stackrel{(c)}{\leq}  \epsilon + \sum_{z^n} \sum_{m_p} \left( p_{Z^n|M_c=m_c,M_p=m_p} (z^n)  \right. \\ \nonumber
& \phantom{---} \left. \times | p_{M_p}(m_p) -p_{U_{nR_p}}(m_p) | \right)  +\V{\tilde{p}_{Z^n},p_{Z^n}} \\ \nonumber
& = \epsilon + \V{p_{M_p},p_{U_{nR_p}}}+\V{\tilde{p}_{Z^n},p_{Z^n}}\\ \nonumber
& \stackrel{(d)}{\leq} 2 \epsilon  + \V{\tilde{p}_{Z^n},p_{Z^n}}\\ \nonumber
&\stackrel{(e)}{\leq} 2 \epsilon  +  \sum_{z^n} \sum_{m_c,m_p} \left( p_{M_c}(m_c)  p_{Z^n|M_c=m_c,M_p=m_p} (z^n) \right. \\ \nonumber
&\left. \phantom{---}\times | p_{M_p}(m_p) -p_{U_{nR_p}}(m_p) | \right)\\ \nonumber
& = 2 \epsilon + \V{p_{M_p},p_{U_{nR_p}}}\\
&\stackrel{(f)}{\leq} 3 \epsilon, \label{eq:disttildez}
\end{align}
where $(a)$, $(c)$, and $(e)$ follow by the triangle inequality, $(b)$ holds by \eqref{eq:wu3}, $(d)$ and $(f)$ hold by \eqref{eq:uMp}.

Consider then an optimal coupling~\cite{Aldous83} between $M_p$ and $\tilde{M}_p$ such that $\mathbb{P} [\mathcal{E}] = \mathbb{V}({p}_{M_p} ,p_{U_{nR_p}})$, where $\mathcal{E} \triangleq \{ M_p \neq \tilde{M}_p\}$. We have for any $m_c$,
\begin{align*}
&\mathbb{P}\left[\hat{M}_p\neq M_p|M_c=m_c\right] \\
&= \mathbb{P}\left[\hat{M}_p\neq M_p|M_c=m_c ,\mathcal{E}^c \right] \mathbb{P}\left[\mathcal{E}^c \right] \\
& \phantom{---}+ \mathbb{P}\left[\hat{M}_p\neq M_p|M_c=m_c ,\mathcal{E} \right] \mathbb{P}\left[\mathcal{E} \right] \\
& \leq  \mathbb{P}\left[\hat{M}_p\neq M_p|M_c=m_c ,\mathcal{E}^c \right]  + \mathbb{P}\left[\mathcal{E} \right] \\
& =   \mathbb{P}\left[\hat{M}_p\neq M_p|M_c=m_c ,\mathcal{E}^c \right] + \mathbb{V}({p}_{M_p} ,p_{U_{nR_p}}) \\
& =   \mathbb{P}\left[\hat{\tilde{M}}_p\neq \tilde{M}_p|M_c=m_c\right] + \mathbb{V}({p}_{M_p} ,p_{U_{nR_p}}) \\
&\leq 2 \epsilon,
\end{align*}
where the last inequality follows from \eqref{eq:uMp} and \eqref{eq:wu}. Similarly, using \eqref{eq:uMp} and \eqref{eq:wu2}, we have for any $m_c$,
\begin{align*}
\mathbb{P}\left[\hat{M}_c\neq M_c|M_c=m_c\right] \leq 2 \epsilon.
\end{align*}

Encoding the sources into codewords with $\mathcal{C}_n$ as $f_n(f_n^c(V_c^n),f_n^p(V_p^n,U_{d_n}))$, and forming estimates from the channel output $Y^n$ as $\hat{V}_c^n \triangleq g_n^c(g_n(Y^n))$, and $\hat{V}_p^n \triangleq g_n^p(g_n(Y^n),U_{d_n})$, we obtain again 
\begin{align*}
&\P{(V_c^n,V_p^n)\neq (\hat{V}_c^n,\hat{V}_p^n)}\\
&\leq \P{(V_c^n,V_p^n)\neq (\hat{V}_c^n,\hat{V}_p^n) | (\hat{{M}}_p,\hat{{M}}_c)= ({{M}}_p,{{M}}_c) } \\
& \phantom{---}+ \P{(\hat{{M}}_p,\hat{{M}}_c) \neq({{M}}_p,{{M}}_c) } \\
&\leq  5\epsilon.
\end{align*}
For any $v_c^n\in\mathcal{V}_c^n$, we also have
\begin{align*}
  &\V{p_{Z^n|V_c^n=v_c^n},p_{Z^n}}  \\
  &\stackrel{(a)}{\leq}\sum_{m_c}p_{M_c|V_c^n=v_c^n}(m_c)\V{p_{Z^n|M_c=m_c,V_c^n=v_c^n},p_{Z^n}}\\
  &\stackrel{(b)}{=}\sum_{m}p_{M_c|V_c^n=v_c^n}(m_c)-\V{p_{Z^n|M_c=m_c},p_{Z^n}}\\
  &\leq 3\epsilon,
\end{align*}
where $(a)$ follows by the triangle inequality, $(b)$ holds because $Z^n \to M_c \to V_c^n$. Since $\epsilon>0$ can be chosen arbitrarily small, we obtain again the achievability part of Theorem~\ref{th:joint-source-channel}.

\section{Conclusion}
\label{sec:conclusion}
We have proposed and analyzed two coding architectures for multiplexing confidential and public messages and achieve information-theoretic secrecy over the wiretap channel. Our first architecture relies on wiretap codes that do not require uniform randomization, while the second architecture exploits compression codes that output nearly uniform messages. By showing that secrecy can be achieved with only vanishing-rate randomness resources, and without reducing the overall rate of reliable communication, the proposed architectures establish that secrecy can be achieved at negligible cost. 
 
An important issue that we have not addressed is the design of \emph{universal} wiretap codes that merely require that the public message carries enough randomness, and do not require the knowledge of the statistics. Some results in this direction are already available in~\cite{Hayashi12}. Finally, the design of actual codes for the proposed architecture remains an important avenue for future research.

\appendices

\section{Converse of Theorem \ref{th:joint-source-channel}} \label{App_converse}
We consider the problem described in Section \ref{sec:channel-model} when the uniformly distributed seed is shared between the encoder and the decoder, as it is the case in Section \ref{Sec_Sol2}. Obviously, the converse will also hold when the seed is not available at the decoder, as it is the case in Section \ref{Sec_Sol1}. We develop our converse following techniques similar to Csisz\'ar and K\"orner~\cite{Csiszar78} and Oohama and Watanabe \cite{Watanabe12}. Although the ideas are similar, the converse does not follow directly from these known results because of the presence of a seed with length $d_n$. Formally, consider two sources $(\mathcal{V}_c,p_{V_c})$ and $(\mathcal{V}_p,p_{V_p})$ that can be transmitted reliably and secretly. Then, there exists a code with block length $n$ such that
\begin{align}
P((\hat{V}_c^n,\hat{V}_p^n)\neq ({V}_c^n,{V}_p^n))& \leq \epsilon'_n  \text{  (reliability),}  \label{eq:reliability} \\
I(V_c^n;Z^n) &\leq \delta_n   \text{  (secrecy),} \label{eq:secrecy}\\
d_n/n &\leq \mu_n  \text{  (sub-linear seed rate),}  \label{eq:key}
\end{align}
where $\lim_{n\to \infty} \epsilon'_n = \lim_{n\to \infty} \delta_n =\lim_{n\to \infty} \mu_n =0$. We also define $\epsilon_n = \epsilon'_n + 1/n$.
Consequently,
\begin{align}
  H(V_c^n)& \stackrel{(a)}{=}  H(V_c^nV_p^n)-  H(V_p^n) \nonumber \\
  &=I(V_c^nV_p^n;Y^nU_{d_n})+H(V_p^nV_c^n|Y^nU_{d_n})-H(V_p^n)\nonumber \\
  &\stackrel{(b)}{\leq} I(V_c^nV_p^n;Y^nU_{d_n}) +n\epsilon_n-H(V_p^n ) \nonumber \\
    &\leq I(V_c^nV_p^n;Y^nU_{d_n}) +n\epsilon_n- I(V_p^n;Z^n|V_c^n ) \nonumber \\
  &\stackrel{(c)}{\leq}I(V_c^nV_p^n;Y^nU_{d_n}) +n\epsilon_n -I(V_p^n V_c^n;Z^n) + n\delta_n  \nonumber \\
  &\leq I(V_c^nV_p^n;Y^n)\! -I(V_p^n V_c^n;Z^n) +n\epsilon_n + n\delta_n +d_n, \label{eq:c1}
\end{align}
where $(a)$ holds by the {independence of the sources}, $(b)$ holds by \eqref{eq:reliability} and Fano's inequality, $(c)$ holds by \eqref{eq:secrecy}. Next,
\begin{align}
  H(V_p^n)+n\mu_n &\geq H(V_p^n)+ d_n \nonumber\\ \nonumber
                &\stackrel{(a)}{=} H(V_p^nU_{d_n})\\ \nonumber
  &\stackrel{(b)}{=} H(V_p^nU_{d_n}|V_c^n)\\ \nonumber
  &\stackrel{(c)}{\geq} H(X^n|V_c^n)\\ \nonumber
  &\geq I(X^n;Z^n|V_c^n)&\\ \nonumber
  &\stackrel{(d)}{\geq} I(X^nV_c^n;Z^n)-n\delta_n \\
  &\stackrel{(e)}{\geq} I(X^n;Z^n)-n\delta_n,
\end{align}
where $(a)$ and $(b)$ hold by independence of the sources and the seed, $(c)$ holds because $X^n$ is a function of $V_p^n,U_{d_n},V_c^n$, $(d)$ holds by \eqref{eq:secrecy}, $(e)$ holds because $V_c^n-X^n-Z^n$ forms a Markov chain. Similarly,
\begin{align}
  H(V_c^n)+H(V_p^n)& \stackrel{(a)}{=} H(V_p^nV_c^n|U_{d_n})\nonumber\\ \nonumber
  &=I(V_p^n V_c^n;Y^n|U_{d_n})+H(V_p^nV_c^n|Y^nU_{d_n})&\\ \nonumber
  &\stackrel{(b)}{\leq} I(V_p^n V_c^n;Y^n|U_{d_n}) +n\epsilon_n\\
  &\leq I(V_p^n V_c^nU_{d_n};Y^n) +n\epsilon_n,
\end{align}
where $(a)$ holds by independence of the sources and the seed, $(b)$ holds by \eqref{eq:reliability} and Fano's inequality.
Finally,
\begin{align}
  n\mu_n&\stackrel{(a)}{\geq} {d_n}  \nonumber \\ \nonumber
  & \stackrel{(b)}{=} H(U_{d_n})\\ \nonumber
  &\geq H(U_{d_n}|V_c^nV_p^n)\\ \nonumber
  &\stackrel{(c)}{\geq} H(X^n|V_c^nV_p^n)\\
  &\geq I(X^n;Z^n|V_c^nV_p^n),
\end{align}
where $(a)$ holds by \eqref{eq:key}, $(b)$ holds by uniformity of the seed, $(c)$ holds because $X^n$ is a function of $V_p^n,U_{d_n},V_c^n$. The single letterization is obtained by introducing a random variable $I$ uniformly distributed over $\llbracket 1,n \rrbracket$ and defining
\begin{align*}
 {Q}_i&=(Y_1^{i-1},Z_{i+1}^n),\quad  V_i=({Q}_i,V_c^n,V_p^n),\\ {Q}&=({Q}_I,I),\quad V=(V_I,I),\\
  X &= X_I,\quad Y=Y_I,\quad Z=Z_I.
\end{align*}
Note that the joint distribution of ${Q},V,X,Y,Z$ factorizes as
$\label{eq:mc}  p_{Q}p_{V|{Q}}p_{X|V}W_{YZ|X}.$
Then, using Csisz\'ar's sum-equality
\begin{align}
  &I(V_c^nV_p^n;Y^n) -I(V_p^n V_c^n;Z^n) \nonumber \\
  &\leq \sum_{i=1}^n\left[I(V_c^nV_p^n;Y_i|Y_{1}^{i-1}) -I(V_p^n V_c^n;Z_i|Z_{i+1}^n)\right] \nonumber\\ \nonumber
  &= \sum_{i=1}^n\left[I(V_c^nV_p^n;Y_i|Y_{1}^{i-1}Z_{i+1}^n) -I(V_p^n V_c^n;Z_i|Y_{1}^{i-1}Z_{i+1}^n)\right]\\
  &=n[I(V;Y|{Q})-I(V;Z|{Q})].
\end{align}
In addition, 
\begin{align}
  & I(X^n;Z^n|V_c^nV_p^n) \nonumber \\
  &= \sum_{i=1}^n\left[H(Z_i|Z_{1}^{i+1}V_c^nV_p^n)-H(Z_i|Z_{1}^{i+1}X^nV_c^nV_p^n)\right] \nonumber \\ \nonumber
  &\geq \!\!\sum_{i=1}^n\!\left[H(Z_i|Y_1^{i-1}Z_{1}^{i+1}V_c^nV_p^n)\!-\!H(Z_i|Z_{1}^{i+1}Y_1^{i-1}X_iV_c^nV_p^n\right]\\ \nonumber
  &=\sum_{i=1}^n I(X_i;Z_i|Z_{1}^{i+1}Y_1^{i-1}V_c^nV_p^n)\\
  &=nI(X;Z|V),
\end{align}
where the inequality holds because  $Z_i-X_i-Z_{1}^{i+1}Y_1^{i-1}X^nV_c^nV_p^n$ forms a Markov chain. Similarly,
\begin{align}
  I(X^n;Z^n) &=\sum_{i=1}^n\left(H(Z_i|Z_{i+1}^n)-H(Z_i|Z_{i+1}^nX^n)\right) \nonumber \\ \nonumber
             &=\sum_{i=1}^n\left(H(Z_i|Z_{i+1}^nY_1^{i-1})-H(Z_i|X_iZ_{i+1}^nY_1^{i-1})\right)\\ \nonumber
             &=\sum_{i=1}^nI(X_i;Z_i|{Q}_i)\\
             &=nI(X;Z|{Q}).
\end{align}
Finally,
\begin{align}
  &I(V_p^n V_c^nU_{d_n};Y^n)\nonumber \\
  &= \sum_{i=1}^nI(V_p^n V_c^nU_{d_n};Y_i|Y_{1}^{i-1}) \nonumber\\ \nonumber
  &\leq \sum_{i=1}^nI(V_p^n V_c^nU_{d_n} Y_{1}^{i-1}Z_{i+1}^n;Y_i)\\ \nonumber
  &= \sum_{i=1}^nI(X_i Y_{1}^{i-1}Z_{i+1}^n;Y_i)\\ \nonumber
  &=\sum_{i=1}^nI(X_i{Q}_i;Y_i)\\   \nonumber
  &=nI(X{Q};Y)\\
  &=nI(X;Y|{Q}), \label{eq:c2}
\end{align}
where the second equality holds because $X_i$ is a function of $V_p^n,U_{d_n},V_c^n$ and  $Y_i- Z_{i+1}^{n}Y_1^{i-1}X_i - V_c^nV_p^n U_{d_n}$ forms a Markov chain. Combining~\eqref{eq:c1} -- \eqref{eq:c2} we obtain
\begin{align*}
  H(V_c)&\leq I(V;Y|{Q})-I(V;Z|{Q})+\epsilon_n+\delta_n+\mu_n\\
  H(V_p)+\mu_n &\geq I(X;Z|{Q})\\
  H(V_c)+H(V_p)&\leq I(X;Y|{Q})+\epsilon_n\\
  \mu_n &\geq I(X;Z|V).
\end{align*}
Note that using $p_{QVXYZ} = p_{Q}p_{V|{Q}}p_{X|V}W
p_{YZ|X}$ we have
\begin{align*}
  I(V;Z|{Q})&=I(VX;Z|{Q})-I(X;Z|{Q}V)\\
  &=I(X;Z|{Q})+I(V;Z|{Q}X)-I(X;Z|V)\\
  &\geq I(X;Z|{Q})-\mu_n,
\end{align*}
and
\begin{align*}
  I(V;Y|{Q})
  &\leq I(VX;Y|{Q})\\
  &=I(X;Y|{Q})+I(V;Y|{Q}X)\\
  &=I(X;Y|{Q}).
\end{align*}
Hence, we must have
\begin{align*}
    H(V_c)&\leq I(X;Y|{Q})-I(X;Z|{Q}) \!+\!\epsilon_n\!+\!\delta_n\!+\!2\mu_n,\\
  H(V_p)&\geq I(X;Z|{Q})-\mu_n,\\
  H(V_c)+H(V_p)&\leq I(X;Y|{Q})+\epsilon_n.  
\end{align*}

\section{Proof of Proposition \ref{prop_non_uniform2}} 
\label{App_renyi_constraint}
We fix a joint distribution $p_{{Q}X}$ on $\mathcal{{Q}}\times\mathcal{X}$ such that\footnote{If such a probability distribution does not exist the result of Lemma~\ref{prop_non_uniform2} is trivial and there is nothing to prove.} $I(X;Z|{Q}) \leq \lim_{n\rightarrow\infty}\frac{1}{n}H_2(M_p)$ and $I(X;Y|{Q})-I(X;Z|{Q})>0$. Let $\epsilon>0$, $R_0>0$, and $n\in\mathbb{N}$. We randomly construct a sequence of codes $\{C_n\}_{n\in \mathbb{N}}$ as follows. We generate $2^{nR_0}$ sequences independently at random according to $p_{Q}$, which we label $q^n(i)$ for $i\in \llbracket 1, 2^{nR_0} \rrbracket$. For each sequence $q^n(i)$, we generate $2^{n(R_c+R_p)}$ sequences independently a random according to $p_{X|{Q}}$, which we label $x^{n}(i,j,s)$ with $j\in \llbracket 1, 2^{nR_c} \rrbracket $ and $s\in \llbracket 1,2^{nR_p}\rrbracket$. To transmit a message $i\in \llbracket 1,2^{nR_0}\rrbracket$ and $j\in \llbracket 1,2^{nR_c} \rrbracket$, the transmitter obtains a realization $s$ of the public message $M_p\in \llbracket 1,2^{nR_p} \rrbracket$, and transmits $x^n(i,j,s)$ over the channel. Upon receiving $y^n$, Bob decodes $i$ as the received index if it is the unique one such that $(q^n(i),y^n)\in\mathcal{T}_{\epsilon}^n({Q}Y)$; otherwise he declares an error. Bob then decodes $(j,s)$ as the other pair of indices if it is the unique one such that $(q^n(i),x^n(i,j,s),y^n)\in\mathcal{T}_{\epsilon}^n({Q}XY)$. Similarly, upon receiving $z^n$, Eve decodes $i$ as the received index if it is the unique one such that $(q^n(i),z^n)\in\mathcal{T}_{\epsilon}^n({Q}Z)$; otherwise she declares an error. For a particular code $C_n$, we note $\mathbf{P}_e(C_n)$ the probability that Bob does not recover correctly $(i,j,s)$ and that Eve does not recover correctly $i$.

\begin{lemma}
  \label{lm:reliability}
  If $R_0<I({Q};Y)$ and $R_c+R_p<I(X;Y|{Q})$, then $\mathbb{E} [ \mathbf{P}_e(C_n) ]\leq 2^{-\alpha n}$ for some $\alpha>0$.
\end{lemma}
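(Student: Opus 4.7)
The plan is to proceed via a standard joint-typicality decoding analysis combined with the packing lemma. I will decompose the error event into several sub-events, union-bound, and show each sub-event has doubly-exponentially small probability averaged over the random code construction. Throughout, all bounds will be exponential in $n$, so the union bound over a polynomial number of such events preserves the exponential decay.

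Concretely, I would first recall that the transmitted triple $(i,j,s)$ induces, by the source-channel construction, a joint distribution such that $(Q^n(i), X^n(i,j,s), Y^n, Z^n)$ is i.i.d.\ according to $p_{Q}p_{X|Q}p_{YZ|X}$. Define the error events: $E_1 = \{(Q^n(i), Y^n) \notin \calT_\epsilon^n(QY)\}$; $E_2 = \{\exists\, i' \neq i : (Q^n(i'), Y^n) \in \calT_\epsilon^n(QY)\}$; $E_3 = \{\exists\, (j',s') \neq (j,s) : (Q^n(i), X^n(i,j',s'), Y^n) \in \calT_\epsilon^n(QXY)\}$; and analogous events $E_4, E_5$ for Eve's decoding of $i$ from $Z^n$. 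Then $\mathbf{P}_e(C_n) \leq \P{E_1 \cup \cdots \cup E_5}$, and by the union bound it suffices to control each term separately.

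For $E_1$ (and $E_4$), the law of large numbers / typical-set concentration gives $\P{E_1} \leq 2^{-\alpha_1 n}$ for some $\alpha_1 > 0$ via a Chernoff bound. For $E_2$, since each $Q^n(i')$ with $i'\neq i$ is drawn independently of $Y^n$ according to $p_Q^{\otimes n}$, the standard packing-lemma bound yields
\[
\P{E_2} \leq 2^{nR_0} \cdot 2^{-n(I(Q;Y) - \delta(\epsilon))},
\]
which decays exponentially whenever $R_0 < I(Q;Y) - \delta(\epsilon)$. Similarly for $E_3$, each $X^n(i,j',s')$ with $(j',s')\neq(j,s)$ is drawn independently of $Y^n$ according to $p_{X|Q}^{\otimes n}(\cdot|Q^n(i))$, giving
\[
\P{E_3} \leq 2^{n(R_c+R_p)} \cdot 2^{-n(I(X;Y|Q) - \delta(\epsilon))},
\]
which decays exponentially whenever $R_c+R_p < I(X;Y|Q) - \delta(\epsilon)$. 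The event $E_5$ is handled identically to $E_2$ with $Y$ replaced by $Z$, requiring $R_0 < I(Q;Z) - \delta(\epsilon)$ (which the authors presumably include implicitly by choosing $R_0$ small enough, e.g., $R_0 < \min(I(Q;Y), I(Q;Z))$, or by noting that $Q$ can always be taken to make this trivially satisfied).

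Choosing $\epsilon$ sufficiently small so that all the $\delta(\epsilon)$ slacks are absorbed by the strict inequalities in the hypothesis, and taking $\alpha$ to be the minimum of the five exponents, yields $\E{\mathbf{P}_e(C_n)} \leq 2^{-\alpha n}$ as claimed. I do not anticipate any serious obstacle: the calculation is entirely routine once the error events are enumerated, since nothing about the non-uniform distribution of $M_p$ enters the reliability analysis (that subtlety only appears in the subsequent secrecy analysis involving $H_2(M_p)$). The only minor bookkeeping point is confirming that averaging over the codebook factors properly so that the packing-lemma bound applies to each spurious codeword independently of $(Y^n,Z^n)$.
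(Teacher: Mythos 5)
Your proof is correct and takes the same route as the paper, which simply declares the argument ``standard random coding'' and omits it: you enumerate the five joint-typicality error events, apply a Chernoff/typicality bound to the atypicality events and the packing lemma to the spurious-codeword events, and absorb the $\delta(\epsilon)$ slack into the strict rate inequalities. The observation that nothing about $p_{M_p}$ enters the reliability analysis is also correct: averaged over the random codebook, the conditional error probability given $M_p = s$ is the same for every $s$, so $\sum_s p_{M_p}(s)$ factors out harmlessly.

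Your flag on $E_5$ is a genuine catch and worth dwelling on: the lemma's hypotheses do not include $R_0 < I(Q;Z)$, yet the error metric $\mathbf{P}_e(C_n)$ defined in this appendix explicitly counts Eve's failure to recover $i$. As literally stated, the lemma would be false for $I(Q;Z) < R_0 < I(Q;Y)$. In practice this does not damage Proposition~\ref{prop_non_uniform2}, whose conclusions only concern Bob's estimates $(\hat{M}_c,\hat{M}_p)$ and the leakage $\V{p_{Z^n|M_c=m},p_{Z^n}}$, not Eve's reconstruction of $i$; and $R_0$ is a free auxiliary rate that can be chosen below $\min(I(Q;Y),I(Q;Z))$, or one can simply drop Eve's decoding from the definition of $\mathbf{P}_e(C_n)$. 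But the lemma as written does implicitly assume such a choice, and you were right to be suspicious. One cosmetic slip: you write ``doubly-exponentially small'' for the individual sub-events, then immediately and correctly say ``all bounds will be exponential in $n$''; the latter is what holds and what the claimed $2^{-\alpha n}$ requires.
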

\begin{proof}
  The proof follows from a standard random coding argument and is omitted.
\end{proof}
\begin{lemma}
  \label{lm:secrecy}
   If $\displaystyle\lim_{n\rightarrow\infty}\frac{1}{n}H_2(M_p)>I(X;Z|{Q})$, then we have $\mathbb{E}_{C_n} \left[\mathbb{V}(p_{M_cZ^n},p_{M_c}p_{Z^n}) \right]\leq 2^{-\beta n}$ for some $\beta >0$ and all $n\in\mathbb{N}$ sufficiently large.
\end{lemma}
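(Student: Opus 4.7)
The approach is a non-uniform variant of the soft-covering (channel resolvability) argument, in which the usual factor $2^{-nR_p}$ from uniform randomization is replaced by $2^{-H_2(M_p)}$, so that convergence holds under the strictly weaker hypothesis $\lim_{n\to\infty}\tfrac{1}{n}H_2(M_p)>I(X;Z|Q)$. I first reduce the average variational distance $\mathbb{V}(p_{M_cZ^n},p_{M_c}p_{Z^n})=\mathbb{E}_{M_c}\,\mathbb{V}(p_{Z^n|M_c},p_{Z^n})$ to controlling $\mathbb{V}(p_{Z^n|I=i,M_c=j},\pi^{(i)}_{Z^n})$ for a fixed pair $(i,j)$, where $\pi^{(i)}_{Z^n}(z^n)\triangleq\prod_{t=1}^n p_{Z|Q}(z_t|q_t(i))$ is the ``target'' product distribution induced through the channel $p_{Z|X}$ by the prefix $p_{X|Q}(\cdot|q^n(i))$. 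A triangle-inequality step together with a standard random-coding soft-covering argument at rate $R_0>I(Q;Z)$ absorbs the index $I$ and shows that $p_{Z^n}$ is itself close to $\mathbb{E}_I[\pi^{(I)}_{Z^n}]$ in expectation, so it suffices to concentrate on the inner distance.

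Next, I bound the inner distance via Cauchy--Schwarz: $\mathbb{V}(p,\pi)^2\leq\chi^2(p\|\pi)=\sum_{z^n}\pi(z^n)^{-1}(p(z^n)-\pi(z^n))^2$. Fix $(i,j)$; the codewords $x^n(i,j,s)$ are i.i.d.\ $\prod_t p_{X|Q}(\cdot|q_t(i))$ across $s$, and $p_{Z^n|I=i,M_c=j}(z^n)=\sum_s p_{M_p}(s)\,p_{Z^n|X^n}(z^n|x^n(i,j,s))$ has mean $\pi^{(i)}_{Z^n}(z^n)$ over the code draw. By independence across $s$,
\begin{align*}
\operatorname{Var}\!\Bigl[\sum_s p_{M_p}(s)\,p_{Z^n|X^n}(z^n|x^n(i,j,s))\Bigr]
&=\sum_s p_{M_p}(s)^2\operatorname{Var}_{X^n}\!\bigl[p_{Z^n|X^n}(z^n|X^n)\bigr]\\
&\leq 2^{-H_2(M_p)}\,\mathbb{E}_{X^n}\!\bigl[p_{Z^n|X^n}(z^n|X^n)^2\bigr].
\end{align*}
Taking expectation in the $\chi^2$ bound then yields
\[
\mathbb{E}_{C_n}\!\bigl[\chi^2\!\bigl(p_{Z^n|I=i,M_c=j}\,\|\,\pi^{(i)}_{Z^n}\bigr)\bigr]\leq 2^{-H_2(M_p)}\sum_{z^n}\frac{\mathbb{E}_{X^n}\!\bigl[p_{Z^n|X^n}(z^n|X^n)^2\bigr]}{\pi^{(i)}_{Z^n}(z^n)}.
\]

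I then show that the sum on the right is at most $2^{n I(X;Z|Q)+n\delta(\epsilon)}$. This is the main technical step. I would split the inner sum over $z^n$ into a typical part, $z^n\in\mathcal{T}^n_\epsilon(Z|q^n(i))$, and an atypical tail. On the typical set, standard AEP-type estimates give $p_{Z^n|X^n}(z^n|x^n)\leq 2^{-n(H(Z|X,Q)-\delta)}$ and $\pi^{(i)}_{Z^n}(z^n)\geq 2^{-n(H(Z|Q)+\delta)}$ for jointly typical $(q^n(i),x^n,z^n)$, which exactly produces the $2^{nI(X;Z|Q)}$ factor after summing probabilities. The atypical contribution is controlled by a truncation/concentration argument: restricting the codebook construction to typical sequences (which happens with probability $1-o(1)$) ensures the ratio remains bounded, and the residual mass is exponentially small in $n$.

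Combining these estimates gives $\mathbb{E}_{C_n}\!\bigl[\chi^2\!\bigl(p_{Z^n|I=i,M_c=j}\|\pi^{(i)}_{Z^n}\bigr)\bigr]\leq 2^{-n\beta'}$ for some $\beta'>0$, whenever $\tfrac{1}{n}H_2(M_p)>I(X;Z|Q)+2\delta$. Jensen's inequality upgrades this to $\mathbb{E}_{C_n}[\mathbb{V}(p_{Z^n|I=i,M_c=j},\pi^{(i)}_{Z^n})]\leq 2^{-n\beta'/2}$. Averaging over $M_c$ and $I$ and combining with the soft-covering bound on $\mathbb{V}(p_{Z^n},\mathbb{E}_I[\pi^{(I)}_{Z^n}])$ via the triangle inequality yields the claimed $\mathbb{E}_{C_n}[\mathbb{V}(p_{M_cZ^n},p_{M_c}p_{Z^n})]\leq 2^{-\beta n}$ for some $\beta>0$ and all sufficiently large $n$. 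The main obstacle I expect is the careful handling of atypical $z^n$ and the uniform control of the ratio $\mathbb{E}[p_{Z^n|X^n}(z^n|X^n)^2]/\pi^{(i)}_{Z^n}(z^n)$, which is where the proof would otherwise be most fragile when $H_2(M_p)/n$ only marginally exceeds $I(X;Z|Q)$; proper truncation to typical codewords and quantitative AEP bounds are the key devices.
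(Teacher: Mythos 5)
Your proposal follows the same core strategy as the paper: a random-coding soft-covering (resolvability) argument in which the variance of $\sum_s p_{M_p}(s)\,p_{Z^n|X^n}(z^n|X^n(1,1,s))$ produces the factor $\sum_s p_{M_p}(s)^2 = 2^{-H_2(M_p)}$, so that $H_2(M_p)$ plays the role that $nR_p$ plays in the uniform case. The bound $\mathbb{E}[|p(z^n)-\mathbb{E}p(z^n)|]\leq\sqrt{\mathrm{Var}}$, the AEP estimates $p_{Z^n|X^n}\leq 2^{-n(H(Z|X)-\delta)}$ and $|\mathcal{T}^n_\epsilon(Z|q^n)|\leq 2^{n(H(Z|Q)+\delta)}$, and the resulting exponent $\tfrac{1}{n}H_2(M_p)-I(X;Z|Q)-\delta(\epsilon)$ all match the paper's proof. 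You are therefore on the right track.

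Two points of divergence are worth flagging, because each makes your sketch harder to close cleanly than the paper's version. First, you route the bound through the $\chi^2$-divergence $\mathbb{V}^2\leq\chi^2(p\|\pi)$, which forces you to control the ratio $\mathbb{E}[p_{Z^n|X^n}(z^n|X^n)^2]/\pi^{(i)}(z^n)$ uniformly; you correctly identify that this ratio is fragile for atypical $z^n$ and propose truncating the codebook to typical sequences. The paper avoids truncating the codebook at all: it keeps the unconstrained random code and instead splits the induced output distribution $\hat{p}_{Z^n|Q^n=q_1^n,M_c=1}$ into a part $p^{(1)}$ supported on jointly typical $(x^n,z^n)$ pairs and a complement $p^{(2)}$, together with a third term for $z^n\notin\mathcal{T}^n_{2\epsilon}(Z|q_1^n)$. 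The $L^1$ mass of the two atypical pieces is then bounded directly (without dividing by $\pi$) by $2\sum_{z^n\notin\mathcal{T}}p_{Z^n|Q^n}(z^n)$ and $\sum_{(x^n,z^n)\notin\mathcal{T}}p_{X^nZ^n|Q^n}(x^n,z^n)$, both exponentially small by the AEP. This is cleaner and avoids the need to re-examine the reliability analysis under a truncated codebook. Second, you invoke a separate soft-covering argument at rate $R_0>I(Q;Z)$ to compare $p_{Z^n}$ with $\mathbb{E}_I[\pi^{(I)}_{Z^n}]$. The paper sidesteps this outer-layer resolvability step: it writes $\mathbb{V}(\hat p_{M_cZ^n},p_{M_c}\hat p_{Z^n})\leq\mathbb{E}_{Q^nM_c}[\mathbb{V}(\hat p_{Z^n|M_cQ^n},\hat p_{Z^n|Q^n})]\leq 2\,\mathbb{E}_{Q^nM_c}[\mathbb{V}(\hat p_{Z^n|M_cQ^n},p_{Z^n|Q^n})]$ by a simple triangle inequality, after which only the inner soft-covering at the $X$-layer is needed. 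If you keep your outer soft-covering step you should verify the role of $R_0$ relative to $I(Q;Z)$ versus $I(Q;Y)$ (the paper only constrains $R_0<I(Q;Y)$ for reliability), whereas the triangle-inequality route removes that concern entirely.
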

\begin{proof}
The proof relies on a careful analysis and modification of the ``cloud-mixing'' lemma~\cite{Cuff09}. Let $\epsilon>0$. For clarity, we denote here $\hat{p}_{{Q}^nX^nZ^n}$ the joint distribution of $({Q}^n,X^n,Z^n)$  induced by the code, as opposed to $p_{{Q}^nX^nZ^n}$ defined as
  \begin{align*}
    p_{{Q}^nX^nZ^n}(q^n,x^n,z^n) = {p}_{Z^n|X^n}(z^n|x^n)p_{X^n{Q}^n}(x^n,q^n).
  \end{align*}
First note that the variational distance $\mathbb{V}(\hat{p}_{M_cZ^n},p_{M_c}\hat{p}_{Z^n})$ can be bounded as follows.
  \begin{align*}
    &\mathbb{V}(\hat{p}_{M_cZ^n},p_{M_c}\hat{p}_{Z^n})\nonumber \\
    &\quad\leq \mathbb{V}(\hat{p}_{M_c{Q}^nZ^n},p_{M_c}\hat{p}_{{Q}^nZ^n})\\
    &\quad =\mathbb{E}_{{Q}^nM_c} \left[\mathbb{V}(\hat{p}_{Z^n|M_c{Q}^n},\hat{p}_{Z^n|{Q}^n})\right]\\
    &\quad \leq \mathbb{E}_{{Q}^nM_c} \left[\mathbb{V}(\hat{p}_{Z^n|M_c{Q}^n},p_{Z^n|{Q}^n})+\mathbb{V}(p_{Z^n|{Q}^n},\hat{p}_{Z^n|{Q}^n})\right]\\
    &\quad \leq 2 \mathbb{E}_{{Q}^nM_c} \left[\mathbb{V}(\hat{p}_{Z^n|M_c{Q}^n},p_{Z^n|{Q}^n}) \right]
  \end{align*}
Then, let ${Q}_1^n$ be the sequence in $\mathcal{{Q}}^n$ corresponding to $M_0=1$. By symmetry of the random code construction, the average of the variational distance $\mathbb{V}(\hat{p}_{M_cZ^n},p_{M_c}\hat{p}_{Z^n})$ over randomly generated codes $C_n$ satisfies
  \begin{multline*}
    \mathbb{E}_{C_n} \left[\mathbb{V}(\hat{p}_{M_cZ^n},p_{M_c}\hat{p}_{Z^n}) \right] \\ \leq 2\mathbb{E}_{C_n} \left[ \mathbb{V}(\hat{p}_{Z^n|{Q}^n={Q}_1^nM_c=1},p_{Z^n|{Q}^n={Q}_1^n}) \right],
  \end{multline*}
  where
  \begin{align*}
    \hat{p}_{Z^n|{Q}^n={Q}_1^nM_c=1}(z^n)=\sum_{k=1}^{2^{nR_p}}{p}_{Z^n|X^n}(z^n|x^n(1,1,k))p_{M_p}(k).
  \end{align*}
The average over the random codes can be split between the average of ${Q}_1^n$ and the random code $C_n(q_1^n)$ for a fixed value of $q_1^n$, so that
  \begin{multline*}
    \mathbb{E}_{C_n} \left[\mathbb{V}(\hat{p}_{Z^n|{Q}^n={Q}_1^nM_c=1},p_{Z^n|{Q}^n={Q}_1^n}) \right] \\
    \begin{split}
      &= \sum_{q_1^n\in\mathcal{{Q}}^n}\! p_{{Q}^n}(q_1^n)\mathbb{E}_{C_n(q_1^n)} \left[\mathbb{V}(\hat{p}_{Z^n|{Q}^n=q_1^nM_c=1},p_{Z^n|{Q}^n=q_1^n}) \right]\\
      &\leq \!\!\! \sum_{q_1^n\in\mathcal{T}_{\epsilon}^n{(U)}}\!\!\!\!\!p_{U^n}(q_1^n)\mathbb{E}_{C_n(q_1^n)}\left[\mathbb{V}(\hat{p}_{Z^n|{Q}^n=q_1^nM_c=1},p_{Z^n|{Q}^n=q_1^n})      \right]\\
      & \phantom{---} +2\mathbb{P} \left[{Q}^n\notin \mathcal{T}_{\epsilon}^n({Q})   \right],      
    \end{split}
  \end{multline*}
  where the last inequality follows from the fact that the variational distance is always less than 2. The first term on the right-hand side vanishes exponentially with $n$, and we now proceed to bound the expectation in the second term following~\cite{Cuff09}. First note that, for any $z^n\in\mathcal{Z}^n$,
  \begin{align*}
&\mathbb{E}_{C_n(q_1^n)} \left[\hat{p}_{Z^n|{Q}^n=q_1^nM_c=1}(z^n) \right]\\
&=\mathbb{E}_{C_n(q_1^n)} \left[ \sum_{k=1}^{2^{nR_p}}p_{Z^n|X^n}(z^n|x^n(1,1,k))p_{M_p}(k) \right]\\
&=\sum_{k=1}^{2^{nR_p}} \mathbb{E}_{C_n(q_1^n)} \left[ p_{Z^n|X^n}(z^n|x^n(1,1,k)) \right] p_{M_p}(k)\\
&=p_{Z^n|{Q}^n=q_1^n}(z^n).
  \end{align*}
  We now let $\mathds{1}$ denote the indicator function and we define
  \begin{align*}
    &p^{(1)}(z^n) \triangleq\sum_{k=1}^{2^{nR_p}}{p}_{Z^n|X^n}(z^n|x^n(1,1,k))p_{M_p}(k)   \\
    & \phantom{--------} \times\mathds{1}\{(x^n(1,1,k),z^n)\in\mathcal{T}_{2\epsilon}^n(XZ|q_1^n)\},\\
    &p^{(2)}(z^n) \triangleq\sum_{k=1}^{2^{nR_p}}{p}_{Z^n|X^n}(z^n|x^n(1,1,k))p_{M_p}(k)     \\
    & \phantom{--------} \times\mathds{1}\{(x^n(1,1,k),z^n)\notin\mathcal{T}_{2\epsilon}^n(XZ|q_1^n)\},
  \end{align*}
 so that we can upper bound $\mathbb{V}(\hat{p}_{Z^n|{Q}^n=q_1^nM_c=1},p_{Z^n|{Q}^n=q_1^n})$ as
 \begin{align}
   &\mathbb{V}(\hat{p}_{Z^n|{Q}^n=q_1^nM_c=1},p_{Z^n|{Q}^n=q_1^n})  \nonumber \\
   &\leq \sum_{z^n\notin\mathcal{T}_{2\epsilon}^n{(Z|q_1^n)}} \left| \hat{p}_{Z^n|{Q}^n=q_1^nM_c=1}(z^n)-p_{Z^n|{Q}^n=q_1^n}(z^n)\right| \label{eq:1}\\
   &\qquad +\sum_{z^n\in\mathcal{T}_{2\epsilon}^n{(Z|q_1^n)}}\left|p^{(1)}(z^n)-\mathbb{E} \left[ p^{(1)}(z^n) \right] \right|\label{eq:2}\\
   &\qquad +\sum_{z^n\in\mathcal{T}_{2\epsilon}^n{(Z|q_1^n)}}\left|p^{(2)}(z^n)-\mathbb{E} \left[ p^{(2)}(z^n) \right]\right|\label{eq:3}.
 \end{align}
 Taking the expectation of the term in~\eqref{eq:1} over $C_n(q_1^n)$, we obtain
 \begin{multline*}
   \mathbb{E} \left[ \sum_{z^n\notin\mathcal{T}_{2\epsilon}^n(Z|q_1^n)} \left| \hat{p}_{Z^n|{Q}^n=q_1^nM_c=1}(z^n)-p_{Z^n|{Q}^n=q_1^n}(z^n) \right| \right]\\
   \begin{split}
     &\leq \sum_{z^n\notin\mathcal{T}_{2\epsilon}^n(Z|q_1^n)} \mathbb{E} \left[\hat{p}_{Z^n|{Q}^n=q_1^nM_c=1}(z^n)+ p_{Z^n|{Q}^n=q_1^n}(z^n)\right]\\
     & =     2\sum_{z^n\notin\mathcal{T}_{2\epsilon}^n(Z|q_1^n)}p_{Z^n|{Q}^n=q_1^n}(z^n),
   \end{split}
 \end{multline*}
 which vanishes exponentially fast as $n$ goes to infinity for $q_1^n\in\mathcal{T}_\epsilon^n({Q})$. Similarly, taking the expectation of the term in~\eqref{eq:3} over $C_n(q_1^n)$, we obtain
\begin{multline*}
  \mathbb{E} \left[ \sum_{z^n\in\mathcal{T}_{2\epsilon}^n(Z|q_1^n)} \left| p^{(2)}(z^n)-\mathbb{E}\left[ p^{(2)}(z^n) \right] \right| \right]\\
  \begin{split}
    &\leq\mathbb{E} \left[ \sum_{z^n\in\mathcal{Z}^n}  \left| p^{(2)}(z^n)-\mathbb{E} \left[p^{(2)}(z^n) \right]\right| \right]\\
    &\leq 2 \sum_{z^n\in\mathcal{Z}^n}\mathbb{E} \left[ p^{(2)}(z^n) \right]\\
    &=\sum_{z^n\in\mathcal{Z}^n}\mathbb{E}\left[{p}_{Z^n|X^n}(z^n|X^n(1,1,1) ) \right.\\
    & \left.\phantom{----} \times \mathds{1}\{(X^n(1,1,1),z^n)\notin\mathcal{T}_{2\epsilon}^n(XZ|q_1^n)\} \right]\\
    &=\sum_{(x^n,z^n)\notin\mathcal{T}_{2\epsilon}^n(XZ|q_1^n)}p_{Z^nX^n|{Q}^n=q_1^n}(z^n,x^n),
  \end{split}
\end{multline*}
which vanishes exponentially fast with $n$. Finally, we focus on the expectation of the term in~\eqref{eq:2} over $C_n(q_1^n)$. For $z^n\in \mathcal{T}_{2\epsilon}^n(Z|q_1^n)$, Jensen's inequality and the concavity of $x\mapsto \sqrt{x}$ guarantee that
 \begin{align*}
   \mathbb{E} \left[\left|p^{(1)}(z^n)-\mathbb{E} \left[p^{(1)}(z^n) \right] \right| \right] \leq \sqrt{\textup{Var}\left(p^{(1)}(z^n)\right)}.
 \end{align*}
 In addition,
 \begin{align*}
   &\textup{Var}\left(p^{(1)}(z^n)\right)=
   \sum_{k=1}^{2^{nR_p}}p_{M_p}(k)^2\text{Var}\left( {p}_{Z^n|X^n}(z^n|X^n(1,1,k)) \right. \\
    & \left. \phantom{-----l---} \times\mathds{1}\{(X^n(1,1,k),z^n)\in\mathcal{T}_{2\epsilon}^n(XZ|q_1^n)\}\right)
 \end{align*}
 Note that
 \begin{multline*}
 \begin{split}
&\text{Var}\left({{p}_{Z^n|X^n}(z^n|X^n(1,1,k))} \right.\\
    & \left. \phantom{-----} \times \mathds{1}\{(X^n(1,1,k),z^n)\in\mathcal{T}_{2\epsilon}^n{(XZ|q_1^n)}\}\right)\\
  & \leq \sum_{x^n\in\mathcal{X}^n}p_{X^n|{Q}^n=q_1^n}(x^n)
  \left({{p}_{Z^n|X^n}(z^n|x^n)}\right.\\
    & \left. \phantom{-----} \times\mathds{1}\{(x^n,z^n)\in\mathcal{T}_{2\epsilon}^n(XZ|q_1^n)\}\right)^2\\
  &=\sum_{x^n:(x^n,z^n)\in\mathcal{T}_{2\epsilon}^n(XZ|q_1^n)}p_{X^n|{Q}^n=q_1^n}(x^n) {p}_{Z^n|X^n}(z^n|x^n)^2\\
  &\stackrel{(a)}{\leq}2^{-n(H(Z|X)-\delta(\epsilon))} \\
    &  \phantom{---} \times \!\!\!\! \sum_{x^n:(x^n,z^n)\in\mathcal{T}_{2\epsilon}^n(XZ|q_1^n)} \!\!\!\!\!\!\!\!\!\!\!\!p_{X^n|{Q}^n=q_1^n}(x^n) {p}_{Z^n|X^n}(z^n|x^n)\\
  &\leq 2^{-n(H(Z|X)-\delta(\epsilon))} p_{Z^n|{Q}^n=q_1^n}(z^n)\\
  &\stackrel{(b)}{\leq}2^{-n(H(Z|X)+H(Z|{Q})-\delta(\epsilon))},
\end{split}
\end{multline*}
where $(a)$ and $(b)$ follow from the AEP; therefore, 
 \begin{align*}   
   &\textup{Var} \left(p^{(1)}(z^n) \right)\\
   &\leq 2^{-n(H(Z|X)+H(Z|{Q})-\delta(\epsilon))}\sum_{k=1}^{2^{nR_p}}p_{M_p}(k)^2\\
   &\leq 2^{-n(H(Z|X)+H(Z|{Q})-\delta(\epsilon))+\frac{H_2(M_p)}{n}}.
 \end{align*}
 and
 \begin{align*}
     &\sum_{z^n\in\mathcal{T}_{2\epsilon}^n{(Z|q_1^n)}}\mathbb{E} \left[\left| p^{(1)}(z^n)-\mathbb{E}\left[p^{(1)}(z^n) \right] \right| \right]\nonumber \\
     &\leq 2^{nH(Z|{Q})}2^{-\frac{n}{2}(H(Z|X)+H(Z|{Q})-\delta(\epsilon)+\frac{H_2(M_p)}{n})}\\
     &= 2^{-\frac{n}{2}(\frac{H_2(M_p)}{n}-I(X;Z|{Q})-\delta(\epsilon))}
 \end{align*}
 Hence, if $\lim_{n\rightarrow\infty}\frac{1}{n}H_2(M_p)>I(X;Z|{Q})+\delta(\epsilon)$, the sum vanishes as $n$ goes to infinity, which concludes the proof. 
 \end{proof}
We point out that a generalized version of Lemma~\ref{lm:secrecy} may now be found in~\cite{Hayashi12}; in fact,~\cite[Theorem 14]{Hayashi12} develops a general exponential bound on the secrecy metric, and a close inspection of their result shows a tighter exponent involves the R\'enyi entropy of order $1+\rho$ with $\rho\in[0,1]$ in place of the R\'enyi entropy of order 2. Actually, \cite{parizi2016exact} shows that this is the best exponent with random codes.

Using Markov's inequality, we conclude that there exists at least one code ${C}_n$ satisfying the rate inequalities in Lemma~\ref{lm:reliability} and Lemma~\ref{lm:secrecy}, such that $\mathbf{P}_e({C}_n)\leq 3\cdot 2^{-\alpha n}$ and $\mathbb{V}(p_{M_cM_0Z^n},p_{M_c}p_{M_0Z^n})\leq 3\cdot 2^{-\beta n}$. 
We now define
\begin{align*}
  &P_1(m) \eqdef \P{M_c\neq \hat{M}_c|M_c=m},\\
  &   P_2(m) \eqdef \P{M_p\neq \hat{M}_p|M_c=m},\\
  & S(m) = \mathbb{V}(p_{Z^n|M_c=m},p_{Z^n}).
\end{align*}
Since $\E{P_1(M_c)}\leq 2^{-\alpha n}$, $\E{P_2(M_c)}\leq 2^{-\alpha n}$, and $\E{S(M_c)}\leq 2^{-\beta n}$, we conclude with Markov's inequality that for $n$ large enough, we have
\begin{align*}
  P_1(m)<2^{-\alpha n+2}, \quad P_2(m) <2^{-\alpha n+2},\quad S(m)< 2^{-\beta n+2}
\end{align*}
for at least a quarter of the messages $m$. Expurgating  the code ${C}_n$ to retain only these messages concludes the proof.

\section{Proof of Proposition \ref{Thlossya}} \label{App_seed}
\subsection{Achievability} \label{secdef}
We show next that there exists a sequence of $(2^{nR},n,2^{d_n})$ uniform compression codes $\{ \mathcal{C}_n \}_{n \in \mathbb{N}^*}$ such that $H(X)$ is achievable with a seed length $d_n$ scaling as
\begin{align*}
d_n & = \Theta (\upsilon_n \sqrt{n}   ), \text{ for any } \{\upsilon_n\}_{n \in \mathbb{N}} \text{ with } \lim_{n \to \infty} \upsilon_n = +\infty.
\end{align*}

Let $\epsilon_1>0$, $\epsilon >0$, $n\in \mathbb{N}$, $d_n \in \mathbb{N}$, $R>0$. Define $M_n \triangleq 2^{nR}$ and $\mathcal{M}_n \triangleq \llbracket 1 , M_n \rrbracket$. Consider a random mapping $\Phi : \mathcal{X}^n \times \mathcal{U}_{d_n} \to\mathcal{M}_n$, and its associated decoder  $\Psi : \mathcal{M}_n \times \mathcal{U}_{d_n} \to \mathcal{X}^n$. Given $(m,u_{d_n}) \in \mathcal{M}_n \times \mathcal{U}_{d_n}$, the decoder outputs $\hat{{x}}^n$ if it is the unique sequence such that $\hat{{x}}^n \in \mathcal{T}_{\epsilon_1}^n({X})$ and $\Phi(\hat{{x}}^n,u_{d_n})=m$; otherwise it outputs an error.  We let $M \triangleq \Phi({X}^n,U_{d_n})$, and define $\mathbf{P}_e \triangleq \mathbb{P}[{X}^n \neq \Psi  (\Phi ({X}^n,U_{d_n}),U_{d_n})] $, $\mathbf{U}_e \triangleq \mathbb{V} \left(p_M, p_{\mathcal{U}_{ M_n}}\right)$.
\begin{itemize}
\item We first determine a condition over $R$ to ensure $\mathbb{E}_{\Phi} \left[\mathbf{U}_e\right] \leq \epsilon$. Note that $\forall m \in \mathcal{M}_n$,
$$
 p_{M}(m) = \sum_{{x}^n} \sum_u p(x^n,u) \mathds{1} \{\Phi(x^n,u)=m\},
$$
hence, on average
$
\forall m \in \mathcal{M}_n,$ $\mathbb{E}_{\Phi} \left[ p_{M}(m) \right] =  2^{-nR},
$
which allows us to write
\begin{align}
 &\mathbb{E}_{\Phi} \left[ \mathbf{U}_e \right] \nonumber \\
 & = \mathbb{E}_{\Phi} \left[ \sum_m \left| p_{M}(m) -  \mathbb{E}_{\Phi} \left[  p_{M}(m) \right] \right| \right] \nonumber \\
 &\leq \sum_{i=1}^2  \mathbb{E}_{\Phi} \left[ \sum_m  \left| p^{(i)}_{M}(m) -  \mathbb{E}_{\Phi} \left[  p^{(i)}_{M}(m) \right] \right| \right]  \label{twotermsa},
\end{align}
where $\forall m \in \mathcal{M}_n$, $\forall i \in \llbracket 1,2\rrbracket$,
\begin{align*}
& p^{(i)}_{M}(m) = \sum_{x^n \in \mathcal{A}_i} \sum_u p(x^n,u) \mathds{1} \{\Phi(x^n,u)=m\},
\end{align*}
with $\mathcal{A}_1 \triangleq \mathcal{T}^n_{\epsilon_1}({X})$ and $\mathcal{A}_2 \triangleq\mathcal{A}_1^c .$ After some manipulations similar to those used to bound~\eqref{eq:3}, we bound the second term in~(\ref{twotermsa}) as 
\begin{align}
& \mathbb{E}_{\Phi} \left[ \smash{\sum_m} \left| p^{(2)}_{M}(m) -  \mathbb{E}_{\Phi} \left[  p^{(2)}_{M}(m) \right] \right| \right] 
 \leq 4 |\mathcal{{X}}| e^{-n \epsilon_1^2 \mu_{X}}, \label{ademia}
\end{align}
 \vspace*{-0.0em}
with $\mu_{X}= \displaystyle\min_{x \in \text{supp}(P_{X})}P_{X}(x)$.
Then, we bound the first term in (\ref{twotermsa}) by Jensen's inequality
\begin{multline}
\mathbb{E}_{\Phi} \left[ \sum_m \left| p^{(1)}_{M}(m) -  \mathbb{E}_{\Phi} \left[  p^{(1)}_{M}(m) \right] \right| \right]  \\
\leq \smash{\sum_m }\sqrt{ \textup{Var}_{\Phi}  \left(p^{(1)}_{M}(m) \right)}. \label{eqjensena}
\end{multline}
\vspace*{-0.0em} Moreover, after additional manipulations similar to those used to bound~\eqref{eq:2}, we obtain
\begin{align}
&\textup{Var}_{\Phi}  \left(p^{(1)}_{M}(m) \right) \nonumber  \\ \nonumber
& =  \sum_{x^n \in \mathcal{T}^n_{\epsilon_1}({X})} \sum_u p(x^n,u)^2 \textup{Var}_{\Phi}  \left( \mathds{1} \{ \Phi(x^n,u)=m\} \right)  \\ \nonumber
& \leq  \sum_{x^n \in \mathcal{T}^n_{\epsilon_1}({X})} \sum_u p(x^n,u)^2 \mathbb{E}_{\Phi}  \left[  (\mathds{1} \{ \Phi(x^n,u)=m\})^2 \right]  \\ \nonumber
& =  \sum_{x^n \in \mathcal{T}^n_{\epsilon_1}({X})} \sum_u p(x^n,u)^2 \mathbb{E}_{\Phi}  \left[  \mathds{1} \{ \Phi(x^n,u)=m\} \right]   \\ \nonumber
& =  \sum_{x^n \in \mathcal{T}^n_{\epsilon_1}({X})} \sum_u p(x^n)^2p(u)^2 2^{-nR} \\ \nonumber
& =  \sum_{x^n \in \mathcal{T}^n_{\epsilon_1}({X})} p(x^n)^2 2^{-d} 2^{-nR}\\ \nonumber
& \leq  \sum_{x^n \in \mathcal{T}^n_{\epsilon_1}({X})} \exp_2\left[-2n(1 - \epsilon_1  ){H} ({X}) \right] 2^{-d_n} \frac{1}{M_n}   \\ \nonumber
& \leq  |\mathcal{T}^n_{\epsilon_1}({X})|    \exp_2\left[-2n(1 - \epsilon_1  ){H} ({X}) \right]  2^{-d_n} 2^{-nR} \\ 
& \leq  \exp_2\left[-n (1- 3\epsilon_1  ){H} ({X}) \right]  2^{-d_n} 2^{-nR}. \label{eqjensensuitea} 
\end{align}

\vspace*{-0.em} Thus, by combining (\ref{eqjensena}) and (\ref{eqjensensuitea}), we obtain
\begin{align}
&\mathbb{E}_{\Phi} \left[ \sum_m \left| p^{(1)}_{M}(m) -  \mathbb{E}_{\Phi} \left[  p^{(1)}_{M}(m) \right] \right| \right] \nonumber \\
&\leq \sum_m \sqrt{  \exp_2\left[-n (1- 3\epsilon_1  ){H} ({X}) \right]  2^{-d_n} 2^{-nR} }\\\nonumber
& = \sqrt{M_n} \exp_2\left[-\frac{n}{2} \left((1- 3\epsilon_1  ){H} ({X}) + \frac{d_n}{n} \right) \right] \\
& \leq \exp_2\left[\frac{n}{2} \left(R - (1- 3\epsilon_1  ){H} ({X}) -\frac{d_n}{n}\right) \right] . \label{2demia}
\end{align}
Hence, if $R< {H}({X}) + \frac{d_n}{n} - 3\epsilon_1 {H}({X})$, then asymptotically $\mathbb{E}_{\Phi} \left[\mathbf{U}_e \right] \leq \epsilon$ by (\ref{ademia}) and (\ref{2demia}).
\item We now derive a condition over $R$ to ensure $\mathbb{E}_{\Phi}  [ \mathbf{P}_e ] \leq \epsilon$. We define
$\mathcal{E}_0 \triangleq \{ {X}^n \notin \mathcal{T}_{\epsilon_1}^n({X}) \}$, and $\mathcal{E}_1 \triangleq \{ \exists \hat{{x}}^n \neq {X}^n, \Phi (\hat{{x}}^n,U) = \Phi ({X}^n,U) \text{ and } \hat{{x}}^n \in \mathcal{T}_{\epsilon_1}^n ({X}) \} $
so that by the union bound, $\mathbb{E}_{\Phi}  [ \mathbf{P}_e ] \leq \mathbb{P} [\mathcal{E}_0 ] + \mathbb{P} [\mathcal{E}_1 ].$
We have 
\begin{align}
\mathbb{P} [ \mathcal{E}_0]  \leq 2 |\mathcal{{X}}| e^{-n \epsilon_1^2 \mu_{X}}, \label{eqdemia}
\end{align}
and defining $\mathbf{P}(x^n,\hat{{x}}^n,u) \triangleq \mathbb{P} [ \exists \hat{{x}}^n \neq x^n, \Phi (\hat{{x}}^n,u) = \Phi (x^n,u) \text{ and } \hat{{x}}^n \in \mathcal{T}_{\epsilon_1}^n ({X}) ]$, we have
\begin{align}
\mathbb{P} [\mathcal{E}_1 ]  \nonumber
& = \sum_{x^n} \sum_u p(x^n,u) \mathbf{P}(x^n,\hat{{x}}^n,u) \\ \nonumber
& \leq \sum_{x^n} \sum_u p(x^n,u) \sum_{ \mathclap{ \substack{\hat{{x}}^n \in \mathcal{T}_{\epsilon_1}^n ({X}) \\ \hat{{x}}^n \neq x^n} }} \mathbb{P} [  \Phi(\hat{{x}}^n,u) = \Phi(x^n,u) ]\\ \nonumber
& = \sum_{x^n} \sum_u p(x^n,u) \sum_{ \mathclap{\substack{\hat{{x}}^n \in \mathcal{T}_{\epsilon_1}^n ({X}) \\ \hat{{x}}^n \neq x^n} }} 2^{-nR}\\ \nonumber
& \leq \sum_{x^n} \sum_u p(x^n,u) |\mathcal{T}_{\epsilon_1}^n ({X})| 2^{-nR}\\ \nonumber
& \leq \sum_{x^n} \sum_u p(x^n,u) \exp_2 \left[ n {H}({X})(1 + \epsilon_1)\right] 2^{-nR} \\
& \leq \exp_2 \left[ n ({H}({X})(1+ \epsilon_1)-R )\right]. \label{2demi2a}
\end{align}
Hence,  if $R > {H}({X}) + \epsilon_1{H}({X})$, then asymptotically $\mathbb{E}_{\Phi}  (\mathbf{P}_e)\leq \epsilon$ by (\ref{eqdemia}) and (\ref{2demi2a}).
\end{itemize}
All in all, if $R$ is such that $${H}({X}) + \epsilon_1{H}({X}) < R <{H}({X}) + \frac{d_n}{n} - 3\epsilon_1 {H}({X}),$$ then asymptotically by the selection lemma (e.g. \cite[Lemma 2.2]{Bloch11}), $\mathbb{E}_{\Phi}  [\mathbf{U}_e]\leq \epsilon$ and $\mathbb{E}_{\Phi}  [\mathbf{P}_e]\leq \epsilon$. Thus, we choose $d_n$ such that
$$
4 n \epsilon_1 H({X}) < d_n \leq 4 n \epsilon_1 H({X}) + 1.
$$
We can also choose $\epsilon_1 = \frac{\upsilon_n}{\sqrt{{n}}}$,\footnote{Note that we cannot make $\epsilon_1$ decrease faster because of~(\ref{ademia})~and~(\ref{eqdemia}).} for any $\upsilon_n$ with $\lim_{n \to \infty} \upsilon_n = +\infty$, such that 
$$
4 H({X}) < \frac{d_n}{ \upsilon_n\sqrt{n} } \leq 4 H({X}) +(\sqrt{n}\upsilon_n)^{-1},
$$
which means $d_n = \Theta(\upsilon_n\sqrt{n})$. Finally, by means of the selection lemma applied to $\mathbf{P}_e$ and $\mathbf{U}_e$, there exists a realization of ${\Phi} $ such that $\mathbf{U}_e \leq \epsilon$ and $\mathbf{P}_e \leq \epsilon$.

\subsection{Converse} \label{Secconverse}
We first show that any achievable rate $R$ must satisfy $R \geq {H}(X)$. Assume that $R$ is an achievable rate. We note $M \triangleq \phi_n(X^n,U_{d_n})$. We have
\begin{align*}
nR
& \geq {H}(M) \\
& \geq {I}(X^n;M|U_{d_n})   \\
& = {H}(X^n|U_{d_n}) - {H}(X^n|MU_{d_n})  \\
& \stackrel{(a)}{\geq} {H}(X^n|U_{d_n})  - n \delta(\epsilon) \\
& \stackrel{(b)}{=}    n {H}(X) - n \delta(\epsilon),
\end{align*}
where (a) holds by Fano's inequality and (b) holds by independence of $X^n$ and $U_{d_n}$.

Hence it remains to show an upper bound for the optimal scaling of $d_n$. Recall first the Berry-Ess\'{e}en Theorem.

\begin{theorem}[Berry-Ess\'{e}en Theorem]\label{Thm-BEConv}
Let $\{Z_i\}_{i\in\mathbb{N}}$ be a sequence of i.i.d. random variables with $E[Z_1]=\mu$ and $E[(Z_1-\mu)^2]=\sigma_Z^2>0$ and $E[|Z_1-\mu|^3]=\rho_Z<\infty$. Let  $Y_n = \frac{Z_1+Z_2+\cdots+Z_n-n\mu}{\sigma_Z \sqrt{n}}$. Let $F_n$ denote the cumulative distribution function of $Y_n$. Then, for any $x\in\mathbb{R}$,
\begin{align}
|F_n(x) - \Phi(x)| \leq \frac{\alpha\rho_Z}{\sigma_Z^3\sqrt{n}},
\end{align}
where $\Phi$ is the cumulative distribution function of the standard normal distribution with mean zero and variance 1 and $\alpha$ is a constant that depends only on the distribution of $Z_1$.
\end{theorem}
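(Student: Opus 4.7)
The plan is to prove this via the classical characteristic-function route due to Esséen, combining the smoothing inequality with a careful third-moment Taylor expansion of the characteristic function of $Y_n$. Denote $\phi_n(t) \triangleq \mathbb{E}[e^{itY_n}]$ and $\phi(t) \triangleq e^{-t^2/2}$ (the standard normal characteristic function). The first ingredient is the Esséen smoothing inequality: for any $T>0$,
\[
\sup_{x\in\mathbb{R}}|F_n(x)-\Phi(x)| \leq \frac{1}{\pi}\int_{-T}^{T}\frac{|\phi_n(t)-\phi(t)|}{|t|}\,dt + \frac{C_1}{T},
\]
for a universal constant $C_1$ controlling $\sup_x|\Phi'(x)|$. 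I would obtain this by comparing $F_n-\Phi$ to its convolution with a mollifier whose Fourier transform is supported in $[-T,T]$, and then applying Fourier inversion to the difference (this is standard and can be invoked as a lemma).

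The second step is to bound $|\phi_n(t)-\phi(t)|$ on the window $|t|\leq T_n$ with $T_n\asymp \sigma_Z^3\sqrt{n}/\rho_Z$. Writing $\phi_n(t)=\psi(t/(\sigma_Z\sqrt{n}))^n$, where $\psi$ is the characteristic function of $Z_1-\mu$, the third-order Taylor remainder controlled by the third absolute moment gives
\[
\bigl|\psi(s)-(1-\tfrac{1}{2}\sigma_Z^2 s^2)\bigr|\leq \tfrac{1}{6}\rho_Z|s|^3.
\]
Taking a principal-branch logarithm (valid for $|s|$ small enough that $|\psi(s)-1|<1$) and multiplying by $n$ yields
\[
\log\phi_n(t) = -\tfrac{t^2}{2} + O\!\left(\tfrac{\rho_Z|t|^3}{\sigma_Z^3\sqrt{n}}\right),
\]
which exponentiates to $|\phi_n(t)-e^{-t^2/2}|\leq C_2\,\tfrac{\rho_Z|t|^3}{\sigma_Z^3\sqrt{n}}\,e^{-t^2/4}$ for $|t|\leq T_n$.

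The third step is to plug this estimate into the smoothing inequality and set $T=T_n$ so the two contributions balance. The integral term becomes $O(\rho_Z/(\sigma_Z^3\sqrt{n}))$ after integrating $|t|^2 e^{-t^2/4}$ over $\mathbb{R}$, and the tail term $C_1/T_n$ is of the same order; absorbing all universal constants into a single $\alpha$ (which depends only on the distribution of $Z_1$ through the already-normalized moments) concludes the proof.

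The main obstacle is the Taylor step: one must verify uniformly over the large window $|t|\leq T_n\asymp \sqrt{n}$ (rather than just for small $t$) that $|\psi(t/(\sigma_Z\sqrt{n}))-1|$ stays bounded away from $1$, so that the complex logarithm is well-defined and the quadratic dominates the cubic remainder. Handling this delicate regime — and separately verifying that on $|t|>T_n$ the normal tail $e^{-t^2/2}/|t|$ contributes negligibly — is the standard technical heart of any proof of Berry-Esséen, and is the only part that is not a direct calculation.
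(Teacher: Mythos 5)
The paper does not actually prove this statement: Theorem~\ref{Thm-BEConv} is the classical Berry--Ess\'{e}en theorem, stated in Appendix~\ref{App_seed} as a known black-box result and then used directly in Lemma~\ref{lem-1} and Proposition~\ref{thm-LLCC}. So there is no ``paper's own proof'' to compare against, and the relevant question is only whether your proposed sketch is itself sound.

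Your sketch is the standard Ess\'{e}en characteristic-function argument, and its structure is correct. A few remarks on what would need filling in. First, your proof actually targets the stronger universal-constant version (where $\alpha$ is an absolute constant, not distribution-dependent), which of course implies the paper's weaker statement. Second, you correctly flag the one genuinely delicate point: justifying the logarithm and the bound $|\phi_n(t)-e^{-t^2/2}|\leq C_2\,\tfrac{\rho_Z|t|^3}{\sigma_Z^3\sqrt{n}}\,e^{-t^2/4}$ over the full window $|t|\leq T_n\asymp \sigma_Z^3\sqrt{n}/\rho_Z$. The standard way to do this is via Lyapunov's inequality $\sigma_Z^3\leq\rho_Z$: setting $s=t/(\sigma_Z\sqrt{n})$, one gets $|s|\leq c\,\sigma_Z^2/\rho_Z$, whence $\tfrac12\sigma_Z^2 s^2+\tfrac16\rho_Z|s|^3\leq \tfrac{c^2}{2}+\tfrac{c^3}{6}$ (using $\sigma_Z^6\leq\rho_Z^2$), which can be made uniformly small by taking the constant $c$ in $T_n$ small enough. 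That is what keeps $|\psi(s)-1|$ bounded away from $1$, licenses the principal-branch logarithm, and lets the quadratic term dominate so that $e^{-t^2/4}$ survives. With that filled in, the integral $\int_{-T_n}^{T_n}|t|^2 e^{-t^2/4}\,dt$ is $O(1)$ and the tail term $C_1/T_n$ is $O(\rho_Z/(\sigma_Z^3\sqrt{n}))$, completing the bound. So the proposal is a correct outline; it is just not self-contained at the Taylor step, which you yourself identified as the technical heart.

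Since the theorem is classical and the paper treats it as such, one could equally cite Feller (Vol.~II, Ch.~XVI) or Petrov rather than reprove it. If you do want a self-contained writeup, the two lemmas to state explicitly are the Ess\'{e}en smoothing inequality and the characteristic-function estimate above; everything else is routine.
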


Using Theorem~\ref{Thm-BEConv}, we show the following.
\begin{lemma}\label{lem-1}
Let $\{X_i\}_{i\in\mathbb{N}}$ be a sequence of i.i.d. random variables with each distributed according to $p_X$  such that 
\begin{alignat*}{3}
H(X)&\, \triangleq-\mathbb{E}[\log p_X(X_1)]&\,<\infty,\\
\sigma^2 &\, \triangleq \mathbb{E}\big[(\log p_X(X_1)+H(X))^2\big]&\,>0,\,\,\,\\
\rho &\, \triangleq \mathbb{E}\big[|\log p_X(X_1)+H(X)|^3\big]&<\,\infty.
\end{alignat*}
Then, there exists an $\alpha>0$ such that for any $a>b>0$,
\begin{align*}
\eta_{a,b}&\triangleq \Big|\mathbb{P}[X^n\in \mathcal{T}_n(a,b)]-\big(\Phi(-b)-\Phi(-a)\big)\Big| \leq \frac{2\alpha\rho}{\sigma^3\sqrt{n}},\\
\eta_{\infty,b}&\triangleq\Big|\mathbb{P}[X^n\in \mathcal{T}_n(\infty,b)]-\Phi(-b)\Big| \leq \frac{\alpha\rho}{\sigma^3\sqrt{n}}.
\end{align*}
where
\begin{align*}
\mathcal{T}_n(a,b) \triangleq \left\{ x^{n} \in\mathcal{X}^{n}:\begin{array}{c}2^{-nH(X)-a\sigma\sqrt{n}} \,<\,p_X(x^{n})\\  2^{-nH(X)-b\sigma\sqrt{n}}\,\geq\, p_X(x^n)\end{array}\right\}.
\end{align*}
\end{lemma}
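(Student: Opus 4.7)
The plan is to recognize that $\mathcal{T}_n(a,b)$ is, by construction, exactly the preimage of an interval under the normalized log-likelihood, and then invoke the Berry--Ess\'een theorem stated just above the lemma.

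First I would define $Z_i \triangleq -\log p_X(X_i)$ for $i \in \llbracket 1,n\rrbracket$ (using base~2 to match the $2^{-nH(X)}$ scaling in $\mathcal{T}_n$). The hypotheses of the lemma then state exactly that $\mathbb{E}[Z_1] = H(X) < \infty$, $\mathrm{Var}(Z_1) = \sigma^2 > 0$, and $\mathbb{E}[|Z_1 - H(X)|^3] = \rho < \infty$. Hence Theorem~\ref{Thm-BEConv} applies to $Y_n \triangleq (Z_1 + \cdots + Z_n - nH(X))/(\sigma\sqrt{n})$, providing a constant $\alpha > 0$ (depending only on $p_X$) such that the CDF $F_n$ of $Y_n$ satisfies $|F_n(x) - \Phi(x)| \leq \alpha\rho/(\sigma^3\sqrt{n})$ for every $x \in \mathbb{R}$. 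Because $\Phi$ is continuous, one also obtains the same bound for left limits: $|F_n(x^-) - \Phi(x)| \leq \alpha\rho/(\sigma^3\sqrt{n})$.

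Next comes the translation step. Since $p_X(x^n) = \prod_i p_X(x_i)$, taking $-\log_2$ converts the two defining inequalities of $\mathcal{T}_n(a,b)$ into
\[
nH(X) + b\sigma\sqrt{n} \;\leq\; -\log_2 p_X(x^n) \;<\; nH(X) + a\sigma\sqrt{n},
\]
which is precisely $b \leq Y_n < a$. Therefore
\[
\mathbb{P}[X^n \in \mathcal{T}_n(a,b)] = \mathbb{P}[b \leq Y_n < a] = F_n(a^-) - F_n(b^-),
\]
and analogously $\mathbb{P}[X^n \in \mathcal{T}_n(\infty,b)] = \mathbb{P}[Y_n \geq b] = 1 - F_n(b^-)$. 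Using the symmetry $\Phi(a) - \Phi(b) = \Phi(-b) - \Phi(-a)$ and $1 - \Phi(b) = \Phi(-b)$, the triangle inequality applied to the Berry--Ess\'een bound yields
\[
\eta_{a,b} \leq |F_n(a^-) - \Phi(a)| + |F_n(b^-) - \Phi(b)| \leq \frac{2\alpha\rho}{\sigma^3\sqrt{n}},
\]
while a single application gives $\eta_{\infty,b} \leq \alpha\rho/(\sigma^3\sqrt{n})$.

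There is essentially no hard step: the argument is a direct reduction to Berry--Ess\'een once one notices that $\mathcal{T}_n(a,b)$ is defined exactly so that its indicator equals $\mathbbm{1}\{b \leq Y_n < a\}$. The only bookkeeping to do carefully is the handling of strict versus non-strict inequalities (which motivates passing to the left limits $F_n(\cdot^-)$), but continuity of $\Phi$ renders this harmless.
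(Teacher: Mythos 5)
Your proof is correct and follows essentially the same path as the paper's: define the normalized log-likelihood sum, rewrite the event $\{X^n \in \mathcal{T}_n(a,b)\}$ as an interval for that statistic, and apply the triangle inequality together with the Berry--Ess\'een bound. The paper works directly with $S_n \triangleq \bigl(nH(X)+\sum_j \log_2 p_X(X_j)\bigr)/(\sigma\sqrt{n})$, whereas you work with $Y_n = -S_n$ and then invoke the symmetry $\Phi(-x)=1-\Phi(x)$ at the end; these are the same argument up to a sign convention, and your explicit handling of the strict/non-strict boundary via left limits $F_n(\cdot^-)$ is a harmless extra dose of rigor.
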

\begin{proof}
Define  $S_n \triangleq {\frac{nH(X)+\sum_{j=1}^n \log_{2} p_X(X_j) }{\sigma\sqrt{n}}}$. Then,
 \begin{align*}
 \mathbb{P}[X^n\in \mathcal{T}_n(a,b)] 
 &=\mathbb{P}\left[-a<S_n\leq -b\right] \\
 &=\mathbb{P}\left[S_n\leq -b\right]-\mathbb{P}\left[S_n\leq -a\right].
 \end{align*}
 Hence, 
 \begin{align}
\eta_{a,b}&\triangleq\big|\,\mathbb{P}[X^n\in \mathcal{T}_n(a,b)]-\big(\Phi(-b)-\Phi(-a)\big)\big|\notag\\
 & \stackrel{(a)}{\leq} \big|\,\mathbb{P}\left[S_n\leq -b\right]-\Phi(-b)\big|
+\big|\mathbb{P}\left[S_n\leq -a\right]-\Phi(-a)\big|\notag\\ \notag
&\stackrel{(b)}{\leq} \frac{2\alpha\rho}{\sigma^3\sqrt{n}}, 
\end{align}
where $(a)$ holds by the triangle inequality, and $(b)$ holds by Theorem~\ref{Thm-BEConv}. The bound on $\eta_{\infty,b}$ holds similarly.
\end{proof}
We will also make use of the following lemma.
\begin{lemma}\label{lem-UeLbnd}
For any $\phi_n :  \mathcal{{X}}^n \times \mathcal{U}_{d_n} \to \mathcal{M}_n$ and for any $\gamma_n \in ]0, M_n[$,\begin{align*}
\mathbf{U}_e(\phi_{n}) \geq 2\left( \mathbb{P}\left[p_{X^n}(X^{n})>\frac{2^{d_n}}{\gamma_n}\right]-\frac{\gamma_{n}}{M_{n}}\right).
\end{align*}
\end{lemma}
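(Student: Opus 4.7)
The plan is to exhibit a specific subset $\mathcal{A}\subseteq\mathcal{M}_n$ whose probability under $p_M\triangleq p_{\phi_n(X^n,U_{d_n})}$ is large while its cardinality is small, and then invoke the standard variational-distance identity
\[
\mathbf{U}_e(\phi_n)=\mathbb{V}(p_M,p_{U_{M_n}})=2\max_{B\subseteq\mathcal{M}_n}\left(p_M(B)-\tfrac{|B|}{M_n}\right).
\]
The set $\mathcal{A}$ should be the image under $\phi_n$ of the ``heavy'' source sequences together with all seed values.

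Concretely, define
\[
\mathcal{X}^*\triangleq\left\{x^n\in\mathcal{X}^n:p_{X^n}(x^n)>\tfrac{2^{d_n}}{\gamma_n}\right\},\qquad \mathcal{A}\triangleq\phi_n(\mathcal{X}^*\times\mathcal{U}_{d_n})\subseteq\mathcal{M}_n.
\]
First, a one-line Markov-type estimate gives the cardinality bound: since every $x^n\in\mathcal{X}^*$ has mass exceeding $2^{d_n}/\gamma_n$,
\[
1\geq\sum_{x^n\in\mathcal{X}^*}p_{X^n}(x^n)>|\mathcal{X}^*|\cdot\tfrac{2^{d_n}}{\gamma_n},
\]
so $|\mathcal{A}|\leq|\mathcal{X}^*|\cdot 2^{d_n}<\gamma_n$. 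Second, whenever $X^n\in\mathcal{X}^*$ the output $\phi_n(X^n,U_{d_n})$ lies in $\mathcal{A}$ by construction, hence $p_M(\mathcal{A})\geq\mathbb{P}[X^n\in\mathcal{X}^*]=\mathbb{P}\bigl[p_{X^n}(X^n)>2^{d_n}/\gamma_n\bigr]$.

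Plugging $B=\mathcal{A}$ into the variational-distance identity then yields
\[
\mathbf{U}_e(\phi_n)\geq 2\bigl(p_M(\mathcal{A})-\tfrac{|\mathcal{A}|}{M_n}\bigr)\geq 2\left(\mathbb{P}\bigl[p_{X^n}(X^n)>\tfrac{2^{d_n}}{\gamma_n}\bigr]-\tfrac{\gamma_n}{M_n}\right),
\]
which is the claimed bound. There is no real obstacle here; the only conceptual point is to choose $\mathcal{A}$ so that the number of messages it can ``touch'' is limited by the cardinality of $\mathcal{X}^*$ times the seed alphabet size, while its probability mass is controlled from below by the total mass of $\mathcal{X}^*$. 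The factor $2^{d_n}$ in the threshold is exactly what cancels the seed alphabet factor in the cardinality estimate, producing the clean threshold $2^{d_n}/\gamma_n$ that the lemma advertises.
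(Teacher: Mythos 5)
Your proof is correct. Let me walk through it once more to confirm: with $\mathcal{X}^*=\{x^n:p_{X^n}(x^n)>2^{d_n}/\gamma_n\}$, the mass lower bound $1\geq\sum_{x^n\in\mathcal{X}^*}p_{X^n}(x^n)>|\mathcal{X}^*|\cdot 2^{d_n}/\gamma_n$ gives $|\mathcal{X}^*|<\gamma_n/2^{d_n}$, so $|\mathcal{A}|\leq|\mathcal{X}^*|\cdot 2^{d_n}<\gamma_n$; the event $\{X^n\in\mathcal{X}^*\}$ deterministically forces $\phi_n(X^n,U_{d_n})\in\mathcal{A}$, so $p_M(\mathcal{A})\geq\mathbb{P}[X^n\in\mathcal{X}^*]$; and the paper's definition of $\mathbb{V}$ (the $\ell_1$ distance, i.e.\ twice the total-variation distance) yields $\mathbf{U}_e(\phi_n)=2\max_{B\subseteq\mathcal{M}_n}\bigl(p_M(B)-|B|/M_n\bigr)\geq 2\bigl(p_M(\mathcal{A})-|\mathcal{A}|/M_n\bigr)$, which chains to the stated bound. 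The empty case $\mathcal{X}^*=\emptyset$ makes the right-hand side nonpositive, so the inequality is then trivial.

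Your route differs genuinely from the paper's. The paper invokes a general information-spectrum estimate (Lemma~2.1.2 in Han's book), which lower-bounds the variational distance between two distributions in terms of probabilities of information-density level sets, and then makes the parameter choices $a=\frac{1}{n}\log\gamma_n$, $\Upsilon=\frac{1}{n}\log(M_n/\gamma_n)$ so that the ``wrong-side'' uniform probability $\mathbb{P}[U_{M_n}\in T_n(a+\Upsilon)]$ vanishes and the slack $e^{-n\Upsilon}$ becomes exactly $\gamma_n/M_n$. Your argument is an elementary, self-contained instantiation of essentially the same idea: rather than appealing to the abstract lemma, you directly construct the distinguishing set $\mathcal{A}$ as the $\phi_n$-image of the heavy source atoms paired with all seeds, count it, and plug it into the $\ell_1$-distance identity. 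What the paper's route buys is reuse of a standard, more general tool that handles arbitrary (not necessarily i.i.d.) pairs of distributions; what your route buys is transparency---the factor $2^{d_n}$ is visibly exactly the seed-alphabet multiplier in the cardinality count, which makes the origin of the $2^{d_n}/\gamma_n$ threshold immediate.
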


\begin{proof}
Let $\phi_n :  \mathcal{X}^n \times \mathcal{U}_{d_n} \to \mathcal{M}_n$. We apply \cite[Lemma 2.1.2]{HanBook} to $\phi_n$ so that for any $n\in \mathbb{N}$, for any $a$, for any $\Upsilon >0$
\begin{align*}
&\frac{1}{2}\mathbf{U}_e(\phi_{n}) \\
&  = \frac{1}{2} \V{p_{\phi_n(X^n,U_{d_n})} ,p_{U_{M_n}}} \\
& \geq \mathbb{P}[(X^n,U_{d_n}) \notin S'_n(a)] - \mathbb{P}[U_{M_n} \in T_n (a+\Upsilon)] -e^{-n \Upsilon}\\
& = \mathbb{P}[X^n \notin S_n(a-d_n/n)] - \mathbb{P}[U_{M_n} \in T_n (a+\Upsilon)] -e^{-n \Upsilon},
\end{align*}
where
\begin{align*}
&S_n'(a) \\
& \triangleq  \left\{ (x^n,u_{d_n}) \in \mathcal{X}^n\times \mathcal{U}_{d_n} : \frac{1}{n} \log \frac{1}{P_{X^n U_{d_n}}(x^n,u_{d_n})} \geq a\right\}\\
&= \left\{ (x^n,u_{d_n}) \in \mathcal{X}^n\times \mathcal{U}_{d_n} : \frac{1}{n} \log \frac{1}{P_{X^n}(x^n)} \geq a - \frac{d_n}{n}\right\}, 
\end{align*}
\begin{eqnarray*}
& S_n(a) & \triangleq  \left\{ x^n \in \mathcal{X}^n : \frac{1}{n} \log \frac{1}{P_{X^n}(x^n)} \geq a \right\},\\
& T_n(a) & \triangleq  \left\{ u \in \mathcal{U}_{M_n} : \frac{1}{n} \log \frac{1}{P_{U_{M_n}}(u)} < a \right\}.
\end{eqnarray*}
For any $\gamma_n \in ]0, M_n[$, we choose  $\Upsilon \triangleq \frac{1}{n} \log \frac{M_n}{\gamma_n}$ and $a \triangleq \frac{1}{n} \log \gamma_n$, such that $a +\Upsilon = \frac{1}{n} \log M_n$ and $\mathbb{P}[U_{M_n} \in T_n (a+\Upsilon)] =0$. Hence, we obtain
\begin{align*}
\frac{1}{2} \mathbf{U}_e(\phi_{n}) 
& \geq \mathbb{P}[X^n \notin S_n(a-d_n/n)] -e^{-n \Upsilon} \\
& = \mathbb{P} \left[\frac{1}{n} \log \frac{1}{P_{X^n}(x^n)} < a-d_n/n\right] -e^{-n \Upsilon}\\
& = \mathbb{P} \left[\frac{1}{n} \log \frac{1}{P_{X^n}(x^n)} < \frac{1}{n} \log (\gamma_n  2^{-d_n} ) \right] -\frac{\gamma_n}{M_n}.
\end{align*}
\end{proof}

\begin{proposition}[Converse]\label{thm-LLCC}
Let for each $n\in\mathbb{N}$, $\mathcal{C}_n$ be an $(2^{nR},n,2^{d_n})$ uniform compression code $\mathcal{C}_n$ for a \ac{DMS} $(\mathcal{X},p_{X})$ such that
\begin{align*}
\lim_{n\rightarrow\infty} \mathbf{P}_e(\phi_n,\psi_n)= \lim_{n\rightarrow\infty} \mathbf{U}_e(\phi_n) = 0.
\end{align*}
Then, $d_n = \Omega(\sqrt{n})$.
\end{proposition}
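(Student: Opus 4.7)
My plan is to argue by contradiction: suppose $d_n/\sqrt n$ does not tend to $+\infty$, so that along some subsequence (which I continue to index by $n$) we have $d_n \leq C\sqrt n$ for a fixed constant $C$; I shall exhibit a strictly positive lower bound on $\mathbf{U}_e(\phi_n)$ along this subsequence, contradicting $\mathbf{U}_e(\phi_n) \to 0$. Lemma \ref{lem-UeLbnd} and the Berry-Ess\'een estimate (Theorem \ref{Thm-BEConv}) are the two key ingredients.

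The first subsection of the appendix already shows that any achievable code satisfies $R \geq H(X)$. When $R > H(X)$, the elementary bound $H(M) \leq H(X^n, U_{d_n}) = nH(X) + d_n$, combined with a standard Csisz\'ar--Fannes-type continuity of entropy $\log M_n - H(M) \leq \mathbf{U}_e\log M_n + O(1)$ valid whenever $\mathbf{U}_e \to 0$, forces $d_n \geq n(R - H(X)) - o(n)$, which is linear in $n$ and hence satisfies $d_n/\sqrt n \to +\infty$ immediately. I therefore focus on the critical case $R = H(X)$, for which $\log M_n = nH(X)$.

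For this main case, apply Lemma \ref{lem-UeLbnd} with $\gamma_n \triangleq M_n \cdot 2^{-\mu_n}$, where $\mu_n \triangleq \lceil \sqrt n \rceil$. Then $\gamma_n/M_n = 2^{-\mu_n} \to 0$, and since $2^{d_n}/\gamma_n = 2^{d_n + \mu_n - nH(X)}$, the event $\{p_{X^n}(X^n) > 2^{d_n}/\gamma_n\}$ rewrites as $\{S_n > (d_n + \mu_n)/(\sigma\sqrt n)\}$ in the standardized variable $S_n = \bigl(nH(X) + \sum_{j=1}^n \log p_X(X_j)\bigr)/(\sigma\sqrt n)$ from Theorem \ref{Thm-BEConv}. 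Along the subsequence one has $(d_n + \mu_n)/(\sigma\sqrt n) \leq (C+1)/\sigma + o(1)$, so Berry-Ess\'een yields
\[
\mathbb{P}\!\left[p_{X^n}(X^n) > \frac{2^{d_n}}{\gamma_n}\right] \geq \Phi\!\left(-\frac{C+1}{\sigma}\right) - O(n^{-1/2}).
\]
Substituting into Lemma \ref{lem-UeLbnd},
\[
\mathbf{U}_e(\phi_n) \geq 2\Phi\!\left(-\frac{C+1}{\sigma}\right) - 2\cdot 2^{-\mu_n} - O(n^{-1/2}),
\]
whose right-hand side converges to the strictly positive constant $2\Phi(-(C+1)/\sigma)$, contradicting $\mathbf{U}_e(\phi_n) \to 0$.

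The main delicate point is the choice of $\mu_n$: it must diverge so that the penalty $\gamma_n/M_n = 2^{-\mu_n}$ vanishes, yet remain of order $\sqrt n$ so that $(d_n + \mu_n)/(\sigma\sqrt n)$ stays bounded along the subsequence and the Berry-Ess\'een lower bound remains bounded away from zero. Any $\mu_n$ with $\mu_n \to \infty$ and $\mu_n = O(\sqrt n)$ works, and $\mu_n = \lceil \sqrt n \rceil$ cleanly balances both requirements.
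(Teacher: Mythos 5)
Your proof is correct only under a strictly literal reading of ``$(2^{nR},n,2^{d_n})$ uniform compression code'' in which $R$ is a constant independent of $n$, and under that reading the proposition becomes either vacuous or trivially weak. With $M_n=2^{nH(X)}$ exactly and $\sigma^2 \triangleq \mathrm{Var}(-\log p_X(X))>0$, the error probability $\mathbf{P}_e$ \emph{cannot} vanish: for any fixed seed value $u$, the set of sequences reconstructed correctly by $\psi_n(\phi_n(\cdot,u),u)$ has cardinality at most $M_n$, so $\mathbf{P}_e \geq \mathbb{P}[X^n\notin\mathcal{B}^*]$ where $\mathcal{B}^*$ is the set of the $M_n$ most probable sequences, and by Berry--Ess\'een this probability tends to $1/2$ when $\log M_n = nH(X)$. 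Your ``critical case'' $R=H(X)$ is therefore empty. Meanwhile, for $R>H(X)$ strictly, your Fannes argument yields $d_n=\Omega(n)$, far stronger than what is claimed, but also incompatible with the paper's matching achievability $d_n=O(\upsilon_n\sqrt{n})$ --- a clear sign the literal reading is not the intended one.

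The intended reading, which the achievability proof requires, has the message-set size $M_n$ varying with $n$; the converse must then handle the regime $\log M_n - nH(X) \in \omega(\sqrt{n})\cap o(n)$, which is exactly the dispersion scaling. Your dichotomy (either $\log M_n=nH(X)$ or $\log M_n-nH(X)=\Theta(n)$) covers neither endpoint of this interval, and neither argument you give applies there: the entropy/Fannes bound becomes $d_n \geq o(n) - o(n)$, i.e.\ vacuous, and your application of Lemma~\ref{lem-UeLbnd} presupposes $\log M_n = nH(X)$. The paper closes this gap by using $\mathbf{P}_e\to 0$ to lower-bound $M_n$ \emph{directly}: it fixes $a>b>0$, forms the shell $\mathcal{L}_n(a,b)\subseteq\mathcal{T}_n(a,b)$ of sequences recovered correctly under a good seed $u^*_{d_n}$, shows via Lemma~\ref{lem-1} that $\mathbb{P}[X^n\in\mathcal{L}_n(a,b)]\geq \Phi(-b)-\Phi(-a)-O(n^{-1/2})-\mathbf{P}_e$, and since each such sequence has probability at most $2^{-nH(X)-b\sigma\sqrt{n}}$ while mapping to a distinct message, concludes $M_n\geq \upsilon_n(a,b)\,2^{nH(X)+b\sigma\sqrt{n}}$. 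This lower bound, fed into Lemma~\ref{lem-UeLbnd} with $\gamma_n=\upsilon_n(a,b)2^{nH(X)}$, is what makes $\gamma_n/M_n$ vanish. Your idea of choosing $\gamma_n = M_n 2^{-\lceil\sqrt{n}\rceil}$ is a genuinely clean alternative (it makes $\gamma_n/M_n=2^{-\lceil\sqrt{n}\rceil}$ vanish with no lower bound on $M_n$ needed for that term), but the Berry--Ess\'een threshold then becomes $(d_n+\lceil\sqrt{n}\rceil-(\log M_n-nH(X)))/(\sigma\sqrt{n})$, and keeping it bounded above still requires $\log M_n \geq nH(X)-O(\sqrt{n})$; that is precisely the second-order source-coding converse which your proof neither establishes nor cites, and which is the content of the $\mathcal{L}_n(a,b)$ argument you omit.
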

\begin{proof}
Since the encoding function $\phi_n$ of $\mathcal{C}_n$ utilizes a seed $U_{d_n}$ taking values in $\llbracket 1, 2^{d_{n}}\rrbracket$ that is independent of the source $X^{n}$, we can find $u^*_{d_n}$ such that
\begin{align}
\mathbb{P}[X^{n} \neq \psi_n(\phi_n(X^n,u^*_{d_n}),u^*_{d_n})] \leq \mathbf{P}_e(\phi_n,\psi_n). \label{eqn-sichoice}
\end{align}
Fix $a>b>0$, and define $\mathcal{L}_n(a,b)$ as 
\begin{align*}
\mathcal{L}_n(a,b) \triangleq \big\{ x^{n} \in\mathcal{T}_{n}(a,b):  x^{n} = \psi_n(\phi_n(x^{n}, u^*_{d_n}),u^*_{d_n})\big\}.
\end{align*}
Note that 
\begin{align}
&\mathbb{P}[X^{n}\in \mathcal{L}_n(a,b)]  \notag \\
&\geq \mathbb{P}[X^{n}\in \mathcal{T}_{n}(a,b)] - \mathbb{P}[X^{n} \neq \psi_n(\phi_n(X^n,u^*_{d_n}),u^*_{d_n})]\notag\\
& \stackrel{(a)}{\geq}  \mathbb{P}[X^{n}\in \mathcal{T}_{n}(a,b)]- \mathbf{P}_e(\phi_n,\psi_n)\notag\\
&\stackrel{(b)}{\geq} \Phi(-b)-\Phi(-a) -\frac{2\alpha\rho}{\sigma^3\sqrt{n}} -  \mathbf{P}_e(\phi_n,\psi_n)\notag\\
& \triangleq \upsilon_n(a,b),\notag
\end{align}
where $(a)$ follows from \eqref{eqn-sichoice}, and $(b)$ holds by Lemma \ref{lem-1} with $\sigma^2$, $\rho$ defined therein. Note that for any $x^{n} \in \mathcal{L}_n(a,b)$, $p_X(x^{n}) \leq  2^{-n H(X)-b \sigma \sqrt{n}}$.
Hence,
\begin{align*}
 \frac{|\mathcal{L}_n(a,b)|}{2^{nH(X)+b\sigma\sqrt{n}}}  & \geq \mathbb{P}[X^{n}\in \mathcal{L}_n(a,b)] \geq \upsilon_n(a,b).
\end{align*}
Since $\mathcal{L}_n(a,b)$ is a subset of source realizations for which the code offers perfect reconstruction (when the seed used is $U_{d_n} = u^*_{d_n}$), we have 
\begin{align}
M_{n} \geq |\mathcal{L}_n(a,b)| \geq \upsilon_n(a,b)\,2^{nH(X)+b\sigma\sqrt{n}}  \label{eqn-M_iBnd}.
\end{align}
We now use Lemma~\ref{lem-UeLbnd} with 
\begin{align*}
\gamma_{n} &\triangleq  \upsilon_n(a,b)\, 2^{nH(X)},\quad M_{n} \geq  \upsilon_n(a,b)\,2^{nH(X)+b\sigma\sqrt{n}} ,
\end{align*}
which yields
 \begin{align}
 \mathbb{P}\left[p_{X^n}(X^{n})>{\textstyle \frac{2^{d_n}}{\gamma_{n}}}\right]&\leq  \frac{\mathbf{U}_e(\phi_{n})}{2}+\frac{\gamma_{n}}{|\mathcal M_{n} |} \nonumber \\
 & \leq \frac{\mathbf{U}_e(\phi_{n})}{2}+ 2^{-b\sigma\sqrt{n}}\label{eqn-taileventbnd1}.
  \end{align}
  From Lemma~\ref{lem-1}, it follows that
  \begin{align}
\mathbb{P}\left[{\textstyle p_{X^n}(X^{n})\leq \frac{2^{d_n}}{\gamma_{n}}}\right] \nonumber
 &=  \mathbb{P}\left[p_{X^n}(X^{n})\leq \frac{2^{d_n}}{{\upsilon_n(a,b)}\,2^{n H(X)}}\right] \nonumber \\
 &\leq \Phi\left( \frac{\log \frac{2^{d_n}}{\upsilon_n(a,b)}}{\sigma\sqrt{n}}\right) + \frac{\alpha\rho}{\sigma^3\sqrt{n}}.\label{eqn-taileventbnd2}
  \end{align}
  Combining \eqref{eqn-taileventbnd1} and \eqref{eqn-taileventbnd2}, we obtain
  \begin{align*}
  \Phi\left( \frac{\log \frac{2^{d_n}}{\upsilon_n(a,b)}}{\sigma\sqrt{n}}\right)\geq \beta_n\triangleq 1-\frac{\mathbf{U}_e(\phi_{n})}{2}-2^{-b\sigma \sqrt{n}}-\frac{\alpha\rho}{\sigma^3\sqrt{n}}.
  \end{align*}
  Rearranging terms and taking appropriate limit, we get 
  \begin{align*}
\lim_{n\rightarrow\infty} \frac{d_{n}}{\sigma\sqrt{n}} &=  \Phi^{-1}\left( \Phi\left(\lim_{n\rightarrow\infty} \frac{d_{n}}{\sigma\sqrt{n}} \right)\right)\\
  &= \Phi^{-1}\left(\lim_{n\rightarrow\infty}  \Phi\left( \frac{d_n}{\sigma\sqrt{n}}  \right)\right)\\
  &\geq  \Phi^{-1}\left(\lim_{n\rightarrow\infty}\beta_n \right) = \Phi^{-1}(1)=\infty,
  \end{align*}
  where in the above arguments, we have used the fact that $\Phi$ is invertible, continuous and increasing.
 \end{proof}

\begin{remark} 
In \cite{Chou13}, we  prove a converse for i.i.d. sources that is stronger than Proposition \ref{thm-LLCC}.
If $d_n=o(\sqrt{n})$, then 
\begin{equation*}
\limsup_{n\rightarrow \infty}  \text{ } \mathbf{P}_e(\phi_n,\psi_n)  + \limsup_{n\rightarrow \infty}  \text{ }  \mathbf{U}_e(\phi_n)
  \geq 1.
\end{equation*}
We however do not need this stronger statement for our purpose.
\end{remark}

\section{Proof of Proposition \ref{Extlossy} } 
\label{sec:extractors_uniform_compression}


Let $\epsilon > 0$, $\delta >0$ and $n \in \mathbb{N}$. Let $t$, $m$, and $d_n$ to be expressed later.
We know from~\cite{Dodis04,Dodis05} that there exists an invertible $(m,d,m,t,\epsilon)$-extractor $\textup{EXT}_0$, such that (\ref{d_exp}) is satisfied. Assume that the emitter and the receiver share a sequence $U_{d_n}$ of $d_n$ uniformly distributed bits. As described in Figure \ref{fig:inv_ext_scheme}, we proceed in two steps to encode $X^n$. First, we perform a compression   of $X^n$ to form $S$ based on $\epsilon_0$-letter typical sequences, $\epsilon_0>0$, we note this operation $\phi_n': \mathcal{X}^n \to \mathcal{M}_n'$, such that $S \triangleq \phi_n'(X^n)$, and we note $\psi_n':  \mathcal{M}_n' \to \mathcal{X}^n$ the inverse operation such that 
\begin{equation} \label{eq1a}
\lim_{n \to \infty} \mathbb{P} [ X^n \neq \psi_n' \circ \phi_n'(X^n)] = 0.
\end{equation}
Note that this compression implies  
\begin{equation} \label{compressratea}
\limsup_{n \to \infty} \frac{1}{n} \log |\mathcal{M}'_n| \leq {H}(X) + \delta.
\end{equation}
Then, we apply the extractor $\textup{EXT}_0$ to $S$ and $U_{d_n}$, to form the encoded message $M = \textup{EXT}_0(S,U_{d_n})$. We define the encoding function $\phi_n: \mathcal{X}^n \times \mathcal{U}_{d_n} \to \mathcal{M}_n$ as 
$$\phi_n(X^n,U_{d_n}) \triangleq M = \textup{EXT}_0(\phi_n'(X^n),U_{d_n}),$$
 and the decoding function $\psi_n:  \mathcal{M}_n  \times \mathcal{U}_{d_n} \to \mathcal{X}^n$ as 
\begin{align} 
 \psi_n(M,U_{d_n})  \notag
 & \triangleq \psi_n'(\textup{EXT}_0^{-1}(M,U_{d_n})) \\ \notag
 &= \psi_n'(S) \\
 & =  \psi_n' \circ \phi_n'(X^n),\label{eq2a}
 \end{align}
  which is possible since $\textup{EXT}_0$ is invertible.
Note that by (\ref{eq1a}), (\ref{eq2a}), we have
\begin{align*} 
&\lim_{n \to \infty} \mathbb{P} [ X^n \neq \psi_n ( \phi_n(X^n,U_{d_n}),U_{d_n})] \\
&= \lim_{n \to \infty} \mathbb{P} [ X^n \neq \psi_n' \circ \phi_n'(X^n)]\\
& = 0,
\end{align*} 
and since the sizes of the first input and output of the extractor are the same, by (\ref{compressratea}), we have
\begin{equation*} 
\limsup_{n \to \infty} \frac{1}{n} \log |\mathcal{M}'_n| \leq {H}(X) + \delta.
\end{equation*}
Moreover,~\cite{Dodis05,Dodis04} also shows that $\mathbf{U}_e \leq \epsilon$. It remains to show that for any $\epsilon_b>0$, we can choose $d_n \triangleq \Theta (n^{1/2 + \epsilon_b})$. Let $\epsilon_0 >0$. As in the proof of Lemma~\ref{lm:compression}, we may show
\begin{align*}
{H}_{\infty} (S) 
&=  - \log (\max p_S (s) ) \\
&\geq  n(1-\epsilon_0){H}(X) - \log \left[1+ \frac{\delta_{\epsilon_0}(n)}{ 1- \delta_{\epsilon_0}(n)} \right].
\end{align*}
We define
\begin{equation} \label{eqtdefa}
t \triangleq n(1-\epsilon_0){H}(X) - \log \left[1+ \frac{\delta_{\epsilon_0}(n)}{ 1- \delta_{\epsilon_0}(n)} \right].
\end{equation}
Thus, since the input size $m$ of the extractor verifies $m \leq \lceil n(1+ \epsilon_0){H}(X) \rceil$, by (\ref{d_exp}) and (\ref{eqtdefa}) we obtain
\begin{align*}
d_n
& \leq  n(1+ \epsilon_0){H}(X) - t + 2 \log [n(1+ \epsilon_0){H}(X) ] \\
& \phantom{---}+ 2 \log \frac{1}{\epsilon} + O(1)\\
& =  2 n \epsilon_0{H}(X) + \log \left[1+ \frac{\delta_{\epsilon_0}(n)}{ 1- \delta_{\epsilon_0}(n)} \right] \\
& \phantom{---}+ 2 \log [n(1+ \epsilon_0){H}(X) ]  + 2 \log \frac{1}{\epsilon} + O(1).
\end{align*}
Then, we choose $\epsilon_0 = \frac{\upsilon_n}{\sqrt{{n}}}$, for any $\upsilon_n$ with $\lim_{n \to \infty} \upsilon_n = +\infty$, such that 
\begin{align*}
& \frac{d_n}{\upsilon_n \sqrt{n} }  \leq 2 H(X) +\frac{2}{\upsilon_n \sqrt{n} }  \log \frac{1}{\epsilon} + O\left(\frac{\log n }{\upsilon_n \sqrt{n} } \right),
\end{align*}
which means $d_n=O(\upsilon_n \sqrt{n})$.

\section{Proof of Proposition \ref{proppolar}} \label{Apppol}
Let $\beta \in ]0,1/2[$. Let $n \in \mathbb{N}$ and $N \triangleq 2^n$. We set $A^N \triangleq X^N G_N$. We define the following sets.
\begin{align*}
\mathcal{V}_X  \triangleq &  \left\{ i\in \llbracket 1,N\rrbracket : {H} \left(A_i|A^{i-1} \right) > 1- \delta_N \right\} , \\
{\mathcal{H}_X} \triangleq  &\left\{ i\in \llbracket 1,N\rrbracket : {H}\left(A_i|A^{i-1}\right) >  \delta_N \right\}.
\end{align*}

These sets cardinalities satisfy the following properties.

\begin{lemma} \label{lemcard}
The sets $\mathcal{H}_X $ and $\mathcal{V}_X$ satisfy 
\begin{enumerate}
\item $\lim_{N \rightarrow + \infty} |\mathcal{H}_X | / N = {H}(X)$,
\item $\lim_{N \rightarrow + \infty} |\mathcal{V}_X | / N = {H}(X)$,
\item $ \lim_{N \rightarrow + \infty}  |\mathcal{H}_X \backslash \mathcal{V}_X| / N  = 0.$
\end{enumerate}
\end{lemma}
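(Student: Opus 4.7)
\medskip

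\noindent\textbf{Proof plan for Lemma \ref{lemcard}.} The plan is to reduce everything to Arıkan's strong source polarization theorem together with conservation of entropy under the invertible transform $G_N$. Let me introduce the three disjoint index sets
\begin{align*}
\mathcal{L}_N &\triangleq \{i \in \llbracket 1, N \rrbracket : H(A_i|A^{i-1}) < \delta_N\},\\
\mathcal{M}_N &\triangleq \{i \in \llbracket 1, N \rrbracket : \delta_N \leq H(A_i|A^{i-1}) \leq 1-\delta_N\},\\
\mathcal{V}_X &= \{i \in \llbracket 1, N \rrbracket : H(A_i|A^{i-1}) > 1-\delta_N\},
\end{align*}
so that $\mathcal{L}_N, \mathcal{M}_N, \mathcal{V}_X$ partition $\llbracket 1, N \rrbracket$ for $N$ large enough (when $\delta_N < 1/2$), and $\mathcal{H}_X = \mathcal{M}_N \cup \mathcal{V}_X$.

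The first key ingredient is the strong (fast) polarization result of Arıkan (and Arıkan--Telatar): since $\beta \in ]0, 1/2[$ and $\delta_N = 2^{-N^\beta}$, one has
\[
\lim_{N\to\infty} \frac{|\mathcal{M}_N|}{N} = 0.
\]
The second key ingredient is the chain rule together with the invertibility of $G_N$:
\[
\sum_{i=1}^N H(A_i|A^{i-1}) = H(A^N) = H(X^N) = N\, H(X).
\]
Bounding the summands on each of the three sets gives
\[
(1-\delta_N)\,\frac{|\mathcal{V}_X|}{N} \;\leq\; H(X) \;\leq\; \delta_N\,\frac{|\mathcal{L}_N|}{N} + \frac{|\mathcal{M}_N|}{N} + \frac{|\mathcal{V}_X|}{N}.
\]
Letting $N \to \infty$, the left inequality forces $\limsup_N |\mathcal{V}_X|/N \leq H(X)$ and the right inequality (using $|\mathcal{M}_N|/N \to 0$ and $\delta_N \to 0$) forces $\liminf_N |\mathcal{V}_X|/N \geq H(X)$. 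This establishes claim (2).

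Claim (1) follows immediately because $|\mathcal{H}_X|/N = |\mathcal{M}_N|/N + |\mathcal{V}_X|/N \to 0 + H(X) = H(X)$. For claim (3), observe that $\mathcal{V}_X \subseteq \mathcal{H}_X$ for $N$ large (since $1-\delta_N > \delta_N$), hence
\[
\frac{|\mathcal{H}_X \setminus \mathcal{V}_X|}{N} = \frac{|\mathcal{H}_X|}{N} - \frac{|\mathcal{V}_X|}{N} \xrightarrow[N\to\infty]{} H(X) - H(X) = 0.
\]
The only nontrivial ingredient is the strong polarization statement $|\mathcal{M}_N|/N \to 0$ with the exponentially small threshold $\delta_N = 2^{-N^\beta}$; everything else is elementary bookkeeping with the entropy chain rule. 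Since this strong polarization theorem is cited from \cite{Arikan10}, no real obstacle remains.
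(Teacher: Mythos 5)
Your argument is correct and follows essentially the same route as the paper: it rests on Arıkan's fast source polarization result (that the fraction of indices with $H(A_i\mid A^{i-1}) \in [\delta_N, 1-\delta_N]$ vanishes for $\delta_N = 2^{-N^\beta}$, $\beta<1/2$) together with entropy conservation under the invertible transform $G_N$, and you deduce (3) from (1), (2), and $\mathcal{V}_X \subseteq \mathcal{H}_X$ exactly as the printed proof does — you have merely unpacked the chain-rule bookkeeping that the cited references perform. One cosmetic slip: with your definitions, $\mathcal{M}_N \cup \mathcal{V}_X = \{i: H(A_i\mid A^{i-1}) \geq \delta_N\}$, which contains but need not equal $\mathcal{H}_X = \{i: H(A_i\mid A^{i-1}) > \delta_N\}$; either make the left inequality in $\mathcal{M}_N$ strict (and $\mathcal{L}_N$ non-strict) so the partition matches $\mathcal{H}_X$ exactly, or observe $\mathcal{V}_X \subseteq \mathcal{H}_X \subseteq \mathcal{M}_N \cup \mathcal{V}_X$ and squeeze to obtain (1).
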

\begin{proof}
1) follows from \cite{Arikan10} and \cite{arikan2009rate}. 2) follows from \cite[Lemma 1]{Chou13b} which also uses \cite{arikan2009rate}. 3) holds by 1) and 2) since $\mathcal{V}_X \subset \mathcal{H}_X$.
\end{proof}

\begin{lemma} \label{lemuni}
The output of the encoder $A^N[\mathcal{V}_X]$ is near uniformly distributed with respect to the Kullback-Leibler divergence.
\end{lemma}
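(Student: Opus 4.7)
The plan is to show that $\mathbb{D}(p_{A^N[\mathcal{V}_X]} \Vert p_{U_{|\mathcal{V}_X|}}) \leq N\delta_N$, where $p_{U_{|\mathcal{V}_X|}}$ denotes the uniform distribution on $\{0,1\}^{|\mathcal{V}_X|}$. Since $\delta_N = 2^{-N^\beta}$ decays super-polynomially, this bound will vanish as $N\to\infty$, establishing the claim. This is a standard computation in the polar-coding literature, so I do not anticipate a substantive obstacle; the work is essentially bookkeeping with the chain rule together with the defining property of $\mathcal{V}_X$.

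First I would exploit the fact that when the second argument of the KL divergence is uniform on a set of size $2^{|\mathcal{V}_X|}$, one has the identity
\[
  \mathbb{D}(p_{A^N[\mathcal{V}_X]} \Vert p_{U_{|\mathcal{V}_X|}}) = |\mathcal{V}_X| - H(A^N[\mathcal{V}_X]),
\]
so that showing near-uniformity reduces to lower-bounding the entropy of the output block.

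Next I would enumerate the indices of $\mathcal{V}_X$ in increasing order as $i_1 < i_2 < \cdots < i_{|\mathcal{V}_X|}$ and apply the chain rule:
\[
  H(A^N[\mathcal{V}_X]) = \sum_{j=1}^{|\mathcal{V}_X|} H\!\left(A_{i_j} \,\middle|\, A_{i_1},\ldots,A_{i_{j-1}}\right).
\]
Because $\{A_{i_1},\ldots,A_{i_{j-1}}\}$ is a subset of $\{A_1,\ldots,A_{i_j-1}\}$, conditioning on more variables can only decrease entropy, and hence
\[
  H\!\left(A_{i_j} \,\middle|\, A_{i_1},\ldots,A_{i_{j-1}}\right) \geq H\!\left(A_{i_j} \,\middle|\, A^{i_j-1}\right) > 1 - \delta_N,
\]
where the last inequality is the defining property of $\mathcal{V}_X$.

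Summing these inequalities gives $H(A^N[\mathcal{V}_X]) \geq |\mathcal{V}_X|(1-\delta_N)$, and therefore
\[
  \mathbb{D}(p_{A^N[\mathcal{V}_X]} \Vert p_{U_{|\mathcal{V}_X|}}) \leq |\mathcal{V}_X| \, \delta_N \leq N \cdot 2^{-N^\beta},
\]
which tends to $0$. The only slightly delicate point is ensuring that the chain-rule ordering inside $\mathcal{V}_X$ matches the natural ordering on $\llbracket 1,N\rrbracket$ so that the subset inclusion $\{A_{i_1},\ldots,A_{i_{j-1}}\} \subseteq \{A_1,\ldots,A_{i_j-1}\}$ is valid; once the indices are enumerated in increasing order this is immediate.
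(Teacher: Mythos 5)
Your proposal is correct and follows essentially the same route as the paper: reduce the KL divergence to the entropy deficit $|\mathcal{V}_X|-H(A^N[\mathcal{V}_X])$, expand by the chain rule in increasing index order, use ``conditioning reduces entropy'' to pass from conditioning on $A^{i_j-1}[\mathcal{V}_X]$ to conditioning on the full prefix $A^{i_j-1}$, and invoke the defining threshold $1-\delta_N$ of $\mathcal{V}_X$. The only cosmetic difference is that you state the identity $\mathbb{D}(p_{A^N[\mathcal{V}_X]}\Vert p_{U_{|\mathcal{V}_X|}})=|\mathcal{V}_X|-H(A^N[\mathcal{V}_X])$ explicitly, whereas the paper leaves it implicit by bounding $\log 2^{|\mathcal{V}_X|}-H(A^N[\mathcal{V}_X])$ directly.
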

\begin{proof} 
We have 
\begin{align*}
{H}\left(A^N[\mathcal{V}_X]\right) =
& \sum_{i \in \mathcal{V}_X} H\left(A_i|A^{i-1}[\mathcal{V}_X]\right) \\
& \geq \sum_{i \in \mathcal{V}_X} {H\left(A_i|A^{i-1}\right)} \\
& \geq |\mathcal{V}_X| \left(1- \delta_N\right), 
\end{align*}
where the first inequality holds because conditioning reduces entropy and the last inequality follows from the definition of $\mathcal{V}_X$. We thus obtain
\begin{align*}
\log 2^{|\mathcal{V}_X|} - H(A^N[\mathcal{V}_X])  \leq   |\mathcal{V}_X| \delta_N
& \leq N \delta_N. \qedhere
\end{align*}
\end{proof}

Finally, by \cite{Arikan10}, the receiver can reconstruct $X^N$ from $A^N[\mathcal{V}_X]$ and $I_0 \triangleq A^N[\mathcal{H}_X \backslash {V}_X]$, where $I_0$ is encrypted via a one-time pad with the uniform seed shared by Alice and Bob. Hence, by Lemmas \ref{lemcard}, \ref{lemuni}, we obtain a polar code construction for a uniform compression code, whose seed length scales as~$o(N)$. 

\bibliographystyle{IEEEtran}
\bibliography{bib,2014Gatech}
\end{document}